\documentclass[11pt]{article}%

\usepackage{amsfonts}
\usepackage{amsmath}
\usepackage{fullpage}
\usepackage{amssymb}
\usepackage{graphicx}
\usepackage{hyperref}
\usepackage{amsfonts}
\usepackage{amsmath}
\usepackage{fullpage}
\usepackage{amssymb}
\usepackage{graphicx}
\usepackage{hyperref}
\usepackage{color}%

\setcounter{MaxMatrixCols}{30}

\providecommand{\U}[1]{\protect\rule{.1in}{.1in}}

\newtheorem{theorem}{Theorem}

\newtheorem{lemma}[theorem]{Lemma}

\newtheorem{remark}[theorem]{Remark}

\newenvironment{proof}[1][Proof]{\noindent\textbf{#1.} }{\ \rule{0.5em}{0.5em}}
\numberwithin{equation}{section}

\def\hil{{\mathcal H}}
\def\kil{{\mathcal K}}

\def\B{{\mathcal B}}

\def\D{{\mathcal D}}
\def\E{\mathcal{E}}

\def\N{{\mathcal N}}
\def\P{{\mathcal P}}

\def\R{{\mathcal R}}
\def\S{{\mathcal S}}

\def\iff{\Longleftrightarrow}
\def\imp{\Longrightarrow}
\def\ep{\varepsilon}
\def\bN{\mathbb{N}}
\def\bC{\mathbb{C}}
\def\bR{\mathbb{R}}

\def\bz{\left(}
\def\jz{\right)}

\def\ad{\operatorname{ad}}
\def\prod{\mathrm{pr}}

\def\qed{{\tiny\ensuremath\blacksquare}}

\def\wt{\widetilde}
\def\ps{p_{\operatorname{succ}}}
\def\deff{\equiv}

\def\x{^{x}}

\newcommand{\s}{\mbox{ }}
\newcommand{\ds}{\mbox{ }\mbox{ }}
\newcommand{\norm}[1]{\left\| #1\right\|}
\newcommand{\cbnorm}[1]{\left\| #1\right\|_{\mathrm{CB},1\to\alpha}}

\newcommand{\diad}[2]{|#1\rangle\langle #2|}
\newcommand{\pr}[1]{\diad{#1}{#1}}
\newcommand{\ket}[1]{|#1\rangle}

\DeclareMathOperator{\id}{id}
\DeclareMathOperator{\Tr}{Tr}
\DeclareMathOperator{\supp}{supp}

\begin{document}

\title{Strong converse exponents for a quantum channel discrimination problem and
quantum-feedback-assisted communication}
\author{Tom Cooney\thanks{Hearne Institute for Theoretical Physics, Department of
Physics and Astronomy, Louisiana State University, Baton Rouge, Louisiana
70803, USA}
\and Mil\'an Mosonyi\thanks{F\'isica Te\`orica: Informaci\'o i Fen\`omens
Qu\`antics, Universitat Aut\`onoma de Barcelona, E-08193 Bellaterra
(Barcelona), Spain } \thanks{Mathematical Institute, Budapest University of
Technology and Economics, Egry J\'ozsef u 1., Budapest, 1111 Hungary}
\and Mark M. Wilde\footnotemark[1] \thanks{Center for Computation and Technology,
Louisiana State University, Baton Rouge, Louisiana 70803, USA } }
\maketitle

\begin{abstract}
This paper studies the difficulty of discriminating between an arbitrary
quantum channel and a \textquotedblleft replacer\textquotedblright\ channel
that discards its input and replaces it with a fixed state. The results
obtained here generalize those known in the theory of quantum hypothesis
testing for binary state discrimination. We show that, in this particular
setting, the most general adaptive discrimination strategies provide no
asymptotic advantage over non-adaptive tensor-power strategies. This
conclusion follows by proving a quantum Stein's lemma for this channel
discrimination setting, showing that a constant bound on the Type~I error
leads to the Type~II error decreasing to zero exponentially quickly at a rate
determined by the maximum relative entropy registered between the channels.
The strong converse part of the lemma states that any attempt to make the
Type~II error decay to zero at a rate faster than the channel relative
entropy implies that the Type~I error necessarily converges to one. We then
refine this latter result by identifying the optimal strong converse exponent
for this task. As a consequence of these results, we can establish a strong
converse theorem for the quantum-feedback-assisted capacity of a channel,
sharpening a result due to Bowen. Furthermore, our channel discrimination
result demonstrates the asymptotic optimality of a non-adaptive tensor-power
strategy in the setting of quantum illumination, as was used in prior work on
the topic. The sandwiched R\'{e}nyi relative entropy is a key tool in our analysis.
Finally, by combining our results with recent results of Hayashi and Tomamichel,
we find a novel operational interpretation of the mutual information of a quantum channel
$\mathcal{N}$
as the optimal type II error exponent when discriminating between a large number
of independent instances of $\mathcal{N}$ and an arbitrary ``worst-case'' replacer channel chosen
from the set of all replacer channels.

\end{abstract}

\section{Introduction}

Quantum channel discrimination is a natural extension of a basic problem in
quantum hypothesis testing, that of distinguishing between the possible states
of a quantum system. In the case of binary state discrimination, it is given
\textit{a priori} that a quantum system is in one of two states $\rho$
or~$\sigma$, and the goal is to identify in which state it is by performing a
quantum measurement. We say that $\rho$ is the null hypothesis and $\sigma$ is
the alternative hypothesis. A natural extension of this problem occurs in the
independent and identically distributed (i.i.d.) setting. Here, the
discriminator is provided with $n$ quantum systems in the state $\rho^{\otimes
n}$ or $\sigma^{\otimes n}$, and the task is to apply a binary measurement
$\{Q_{n},I^{\otimes n}-Q_{n}\}$ on these $n$ systems, with $0\leq Q_{n}\leq
I^{\otimes n}$, to determine which state he possesses. One is then concerned
with two kinds of error probabilities:%
\begin{equation}
\alpha_{n}(Q_{n})\equiv\Tr\left\{  (I^{\otimes n}-Q_{n})\rho^{\otimes
n}\right\}  ,\label{alphan}%
\end{equation}
the probability of incorrectly rejecting the null hypothesis, the Type~I
error, and%
\begin{equation}
\beta_{n}(Q_{n})\equiv\Tr\left\{  Q_{n}\sigma^{\otimes n}\right\}
,\label{betan}%
\end{equation}
the probability of incorrectly rejecting the alternative hypothesis, the
Type~II error. Of course, it is generally impossible to find a quantum
measurement such that both of these errors are equal to zero simultaneously,
so one instead studies the asymptotic behaviour of $\alpha_{n}$ and $\beta
_{n}$ as $n\rightarrow\infty$, expecting there to be a trade-off between
minimising $\alpha_{n}$ and minimising $\beta_{n}$.

In asymmetric hypothesis testing, one fixes a constraint on the Type~I error,
say, and then seeks to minimise the Type~II error. When a constant threshold
$\varepsilon$ is imposed on the Type~I error, the optimal Type~II error is
given by
\begin{align}
\beta_{\varepsilon}(\rho\|\sigma)\equiv\min\{\beta(Q):\,0\le Q\le
I,\,\alpha(Q)\le\varepsilon\}.
\end{align}
The central result in the asymptotic setting is the quantum Stein's lemma, due
to Hiai and Petz \cite{HP} and Ogawa and Nagaoka \cite{ON}. The direct part of
the lemma states that for any constant bound on the Type~I error, there exists
a sequence of measurements $\{Q_{n},I^{\otimes n}-Q_{n}\}$ that meets this
constraint and is such that the Type~II error decreases to zero exponentially
fast with a decay exponent given by the quantum relative entropy $D(\rho
\Vert\sigma)$,
defined as \cite{Umegaki,HP}
\begin{equation}
D(\rho\Vert\sigma)\equiv\left\{
\begin{array}
[c]{cc}%
\text{Tr}\left\{  \rho\left[  \log\rho-\log\sigma\right]  \right\}  & \text{if
supp}\left(  \rho\right)  \subseteq\text{supp}\left(  \sigma\right) \\
+\infty & \text{otherwise}%
\end{array}
\right.  . \label{eq:vn-rel-ent}%
\end{equation}
In the above and throughout the paper, we take the logarithm to be base two.
Furthermore, the strong converse part of the lemma states that any attempt to
make the Type~II error decay to zero with a decay exponent larger than the
relative entropy will result in the Type~I error converging to one in the
large~$n$ limit \cite{ON}. The direct and the strong converse parts can be
succinctly written as
\begin{align}
\label{Stein's lemma}\lim_{n\to\infty}-\frac{1}{n}\log\beta_{\varepsilon}%
(\rho^{\otimes n}\|\sigma^{\otimes n})=D(\rho\|\sigma),\ds\ds\ds \forall\ep\in(0,1).
\end{align}
That is, for any threshold value $\varepsilon\in(0,1)$, the optimal Type~II
error decays exponentially fast in the number of copies, and the decay rate is
equal to the relative entropy.

It is easy to see that the negative logarithm of the optimal Type~II error,
\begin{align}
\label{eq:htre}D_{H}^{\varepsilon}(\rho\|\sigma)\equiv-\log\beta_{\varepsilon
}(\rho\|\sigma),
\end{align}
is non-negative and monotonic non-increasing under completely positive
trace-preserving maps. Thus, it can be considered as a \textquotedblleft
generalized divergence\textquotedblright\, or \textquotedblleft generalized
relative entropy\textquotedblright\, and it was named \textquotedblleft
hypothesis testing relative entropy\textquotedblright\ in \cite{WR12}. With
this notation, Stein's lemma \eqref{Stein's lemma} can be reformulated as
\begin{equation}
\lim_{n\rightarrow\infty}\frac{1}{n}D_{H}^{\varepsilon}(\rho^{\otimes n}%
\Vert\sigma^{\otimes n})=D(\rho\Vert\sigma),\ds\ds\ds \forall\ep\in(0,1). \label{eq:steins-htre}%
\end{equation}

As a refinement of the quantum Stein's lemma, one can study the optimal Type~I
error given that the Type~II error decays with a given exponential speed. One
is then interested in the asymptotics of the optimal Type~I error
\begin{align}
\alpha_{n,r}\equiv\alpha_{2^{-nr}}(\rho^{\otimes n}\|\sigma^{\otimes n}%
)\equiv\min\left\{ \alpha_{n}(Q_{n}):\,0\le Q_{n}\le I,\,\beta_{n}(Q_{n}%
)\le2^{-nr}\right\} ,
\end{align}
with $r>0$ a constant. In the \textquotedblleft direct
domain,\textquotedblright\ when $r<D(\rho\Vert\sigma)$, $\alpha_{n,r}$ also
decays with an exponential speed, as was shown in \cite{OH}. The exact decay
rate is determined by the quantum Hoeffding bound theorem
\cite{PhysRevA.76.062301,N06,ANSV} as
\begin{equation}
\lim_{n\rightarrow\infty}-\frac{1}{n}\log\alpha_{n,r}=H_{r}(\rho
\|\sigma)\equiv\sup_{0<\alpha<1}\frac{\alpha-1}{\alpha}(r-D_{\alpha}(\rho
\Vert\sigma)), \label{eq:direct-part-err-exp}%
\end{equation}
where $D_{\alpha}$ is a quantum R\'enyi relative entropy, to be defined later,
and $H_{r}(\rho\|\sigma)$ is the Hoeffding divergence of $\rho$ and
$\sigma$. On the other hand, in the \textquotedblleft strong converse
domain,\textquotedblright\ when $r>D(\rho\Vert\sigma)$, $\alpha_{n,r}$ goes to
$1$ exponentially fast \cite{ON,N}. The rate of this convergence has been
determined in \cite[pages 80-81]{Hayashibook} in terms of the limit of
post-measurement R\'enyi relative entropies. A ``single-letter'' expression has been
obtained recently in \cite{MO} as
\begin{equation}
\lim_{n\rightarrow\infty}-\frac{1}{n}\log(1-\alpha_{n,r})=H_{r}^{*}%
(\rho\|\sigma)\equiv\sup_{\alpha>1}\frac{\alpha-1}{\alpha}(r-\widetilde
{D}_{\alpha}(\rho\Vert\sigma)), \label{eq:sc-states}%
\end{equation}
where $\tilde D_{\alpha}$ is an alternative version of the quantum R\'enyi
relative entropy \cite{MDSFT13,WWY13}, and $H_{r}^{*}(\rho\|\sigma)$ is the
Hoeffding anti-divergence. Note that it is unique to the quantum case that
one requires a R\'{e}nyi relative entropy for the strong converse domain which
is different from that used in the direct domain (however, these R\'{e}nyi
relative entropies coincide when $\rho$ and $\sigma$ commute, i.e., the
classical case).

The results in \eqref{eq:direct-part-err-exp} and \eqref{eq:sc-states} give a
complete understanding of the trade-off between the two error probabilities in
the asymptotics. Note that the quantum Stein's lemma can also be recovered
from \eqref{eq:direct-part-err-exp} and \eqref{eq:sc-states} in the limit
$r\to D(\rho\|\sigma)$. We remark that there are other ways of refining our
understanding of the quantum Stein's lemma, as established recently in
\cite{li12,TH12}. \medskip

The objectives of channel discrimination are very similar to those of state
discrimination; what makes the problem different is the complexity of the
available discrimination strategies. In the general setup we have a quantum
channel with input system $A$ and output system $B$, and we know that the
channel is described by either ${\mathcal{N}}_{1}$ or ${\mathcal{N}}_{2}$, where 
${\mathcal{N}}_{1}$ and ${\mathcal{N}}_{2}$ are completely positive trace-preserving (CPTP) maps.
We assume that we can use the channel several times, consecutive uses are
independent, and the properties of the channel do not change with time. Thus,
$n$ uses of the channel are described by either ${\mathcal{N}}_{1}^{\otimes
n}$ or ${\mathcal{N}}_{2}^{\otimes n}$. A non-adaptive discrimination strategy
for $n$ uses of the channel consists of feeding an input state $\psi
_{R_{n}A^{n}}$ into the $n$-fold tensor-product channel, and then performing a
binary measurement $\{Q_{n},I-Q_{n}\}$
on the output, which is either ${\mathcal{N}}_{1}^{\otimes n}(\psi_{R_{n}%
A^{n}}) \equiv
(\id_{R_n} \otimes {\mathcal{N}}_{1}^{\otimes n})(\psi_{R_{n}%
A^{n}})$ or ${\mathcal{N}}_{2}^{\otimes n}(\psi_{R_{n}%
A^{n}}) \equiv(\id_{R_n} \otimes {\mathcal{N}}_{2}^{\otimes n})(\psi_{R_{n}A^{n}})$. Here, $R_{n}$
is an ancilla system on which the channel acts trivially as the identity map
$\id_{R_n}$. When an adaptive
strategy is used, the output of the first $k$ uses of the channel can be used
to prepare the input for the $(k+1)$-th use; see Figure
\ref{FigureChannelDiscrim1} for a pictorial explanation and Section
\ref{sec:channel Stein} for a precise definition.

For any discrimination strategy $S_{n}$, let $\rho_{n}(S_{n})$ and $\sigma
_{n}(S_{n})$ denote the output of the $n$-fold product channel depending on
whether the channel is equal to ${\mathcal{N}}_{1}$ or ${\mathcal{N}}_{2}$. In
analogy with \eqref{alphan}-\eqref{betan}, one can define the Type~I and the
Type~II errors as
\begin{align}\label{error probs}
\alpha_{n}(S_{n})\equiv\Tr \{(I-Q_{n})\rho_{n}(S_{n}%
)\},\mbox{ }\mbox{ }\text{(Type~I)}%
\mbox{ }\mbox{ }\mbox{ }\mbox{ }\mbox{ }\mbox{ }\mbox{ }\mbox{ }\mbox{ }\mbox{ }
\beta_{n}(S_{n})\equiv\Tr \{Q_{n}\sigma_{n}(S_{n}%
)\},\mbox{ }\mbox{ }\text{(Type~II)},
\end{align}
where $\{Q_{n},I-Q_{n}\}$ is the measurement part of the strategy. It is then
natural to consider the optimal error probabilities
\begin{align}
\beta_{\varepsilon}\x({\mathcal{N}}_{1}^{\otimes n}\|{\mathcal{N}}%
_{2}^{\otimes n}) & \equiv\inf\{\beta_{n}(S_{n}):\,\alpha_{n}(S_{n}%
)\le\varepsilon
\},\mbox{ }\mbox{ }\mbox{ }\mbox{ }\mbox{ }\mbox{ }\mbox{ }\mbox{ }\mbox{ }\mbox{ }\mbox{ }\mbox{ }\mbox{ }\mbox{ }\text{and}%
\\
\alpha_{n,r}\x\equiv\alpha_{2^{-nr}}\x({\mathcal{N}}_{1}^{\otimes
n}\|{\mathcal{N}}_{2}^{\otimes n}) & \equiv\inf\left\{ \alpha_{n}%
(S_{n}):\,\beta_{n}(S_{n})\le2^{-nr}\right\} ,\label{channel alpha}%
\end{align}
where $x$ denotes the set of allowed discrimination strategies and the
optimisations are over all strategies in the class $x$. Here, we will consider
$x=\ad$ for adaptive and $x=\prod$ for product strategies. The
latter are all non-adaptive strategies with an input state $\psi_{R_{n}A^{n}%
}=\psi_{RA}^{\otimes n}$, where $\psi_{RA}$ is an arbitrary state on $A$ and
some ancilla $R$. Obviously, if only product strategies are allowed ($x = \prod$), then the
optimal rates of these error probabilities are given by the corresponding
channel divergences as
\begin{align}
\lim_{n\to+\infty}-\frac{1}{n}\log\beta_{\varepsilon}\x({\mathcal{N}%
}_{1}^{\otimes n}\|{\mathcal{N}}_{2}^{\otimes n}) & = D({\mathcal{N}}%
_{1}\|{\mathcal{N}}_{2})\equiv\sup_{\psi_{RA}}D({\mathcal{N}}_{1}(\psi
_{RA})\|{\mathcal{N}}_{2}(\psi_{RA})),\label{channel Stein}\\
\lim_{n\to+\infty}-\frac{1}{n}\log\alpha_{n,r}\x & =H_{r}({\mathcal{N}%
}_{1}\|{\mathcal{N}}_{2})\equiv\sup_{\psi_{RA}}H_{r}({\mathcal{N}}_{1}%
(\psi_{RA})\| {\mathcal{N}}_{2}(\psi_{RA})),\label{channel Hoeffding}\\
\lim_{n\to+\infty}-\frac{1}{n}\log(1-\alpha_{n,r}\x) & =H_{r}%
^{*}({\mathcal{N}}_{1}\|{\mathcal{N}}_{2})\equiv\inf_{\psi_{RA}}H_{r}%
^{*}({\mathcal{N}}_{1}(\psi_{RA})\| {\mathcal{N}}_{2}(\psi_{RA}%
)),\label{channel HK}%
\end{align}
according to the previously explained results on state discrimination. Note
that in \eqref{channel HK} an infimum is taken; the reason is that, in the
strong converse domain, the goal is to minimise the exponent of the success
probability. The Hoeffding (anti-)divergences can also be
expressed as
\begin{align}
H_{r}({\mathcal{N}}_{1}\|{\mathcal{N}}_{2}) & =\sup_{0<\alpha<1}\frac
{\alpha-1}{\alpha}(r-D_{\alpha}({\mathcal{N}}_{1}\Vert{\mathcal{N}}_{2})), \\
H_{r}^{*}({\mathcal{N}}_{1}\|{\mathcal{N}}_{2}) & =\sup_{1<\alpha}\frac
{\alpha-1}{\alpha}(r-\widetilde D_{\alpha}({\mathcal{N}}_{1}\Vert{\mathcal{N}%
}_{2})) ,
\end{align}
where $D_{\alpha}({\mathcal{N}}_{1}\Vert{\mathcal{N}}_{2})$ and $\widetilde
D_{\alpha}({\mathcal{N}}_{1}\Vert{\mathcal{N}}_{2})$ are the channel R\'enyi
relative entropies:
\begin{align}
D_{\alpha}({\mathcal{N}}_{1}\Vert{\mathcal{N}}_{2}) &  \equiv\sup_{\psi_{RA}%
}D_{\alpha}({\mathcal{N}}_{1}(\psi_{RA})\|{\mathcal{N}}_{2}(\psi_{RA})),\label{channel Renyi1}\\
\widetilde D_{\alpha}({\mathcal{N}}_{1}\Vert{\mathcal{N}}_{2})  &  \equiv
\sup_{\psi_{RA}}\widetilde D_{\alpha}({\mathcal{N}}_{1}(\psi_{RA}%
)\|{\mathcal{N}}_{2}(\psi_{RA})).\label{channel Renyi2}
\end{align}
The optimizations in \eqref{channel Stein}--\eqref{channel HK}
and \eqref{channel Renyi1}--\eqref{channel Renyi2} are taken over all possible bipartite states
$\psi_{RA}$ with an arbitrary ancilla system $R$.

For adaptive strategies, the relations
\eqref{channel Stein}-\eqref{channel HK} are not expected to hold for
arbitrary channels. For instance, the results of \cite{HHLW} provide some
evidence in this direction. (See~\cite{PhysRevLett.103.210501} as well for
related results and more general conclusions.) There are various classes of
channels, however, for which \eqref{channel Stein}-\eqref{channel HK} hold;
for these channels, adaptive strategies do not offer any benefit over product
strategies. For instance, Hayashi showed
\eqref{channel Stein}-\eqref{channel HK} with $x=\ad$ for any pair of
classical channels \cite{H09}.

Another extreme case is when both ${\mathcal{N}}_{1}$ and ${\mathcal{N}}_{2}$
are replacer channels, i.e., there exist states $\rho,\sigma$ such that
${\mathcal{N}}_{1}(\cdot)={\mathcal{R}}_{\rho}\equiv\Tr\{\cdot\}\rho$ and
${\mathcal{N}}_{2}(\cdot)={\mathcal{R}}_{\sigma}\equiv\Tr\{\cdot\}\sigma$.
Obviously, in this case all the channel divergences are equal to the
corresponding divergences of the two states; e.g., $D_{\alpha}({\mathcal{R}%
}_{\rho}\|{\mathcal{R}}_{\sigma})=D_{\alpha}(\rho\|\sigma)$, etc. It is also
heuristically clear that adaptive strategies do not offer any benefit over
product strategies, and the channel discrimination problem reduces to the
state discrimination problem between $\rho$ and $\sigma$, described before.
Two of our main results, Theorems \ref{stein} and \ref{scrtheorem} yield as a
special case a mathematically precise argument for these heuristics in the
case of
\eqref{channel Stein} and \eqref{channel HK}.

A natural intermediate step towards determining the error exponents of the
general quantum channel discrimination problem is to allow one of the channels
to be arbitrary, while keeping the other channel a replacer channel. This
setup interpolates between the fully understood case of state discrimination
and the still open problem of general quantum channel discrimination. Here we
consider the setup in which the first channel is arbitrary and the second
channel is a replacer channel. We prove \eqref{channel Stein} (Stein's lemma)
in Section \ref{sec:Stein proof}, and show in Section \ref{sec:sc proof} that
the strong converse exponent is given as in \eqref{channel HK} for adaptive
strategies ($x=\ad$). As for now, we leave the optimality part of
\eqref{channel Hoeffding} open for $x=\ad$.

As a consequence of these results, in Section \ref{sec:EA sc proof} we can
establish a strong converse theorem for the quantum-feedback-assisted capacity
of a channel, which is the capacity of a quantum channel for transmitting classical information
with the assistance of a noiseless quantum feedback from receiver to sender.
Our result here strengthens that of Bowen's \cite{B04}. 
We also make a connection between our results and quantum
illumination \cite{L} in Section \ref{sec:quantum-illum}. Finally, in Section~\ref{sec:extension},
we discuss how to combine the recent results in \cite{HT} with ours to obtain a
quantum Stein's lemma in a setting more general than that considered in either paper.
This gives a novel operational interpretation of the mutual information of a quantum channel,
different from that already found in entanglement- and quantum-feedback-assisted communication
\cite{ieee2002bennett,Hol01a,B04}.
We also discuss an open question regarding the characterization of the strong converse
exponent in this more general setting.

\section{Summary of results}

\subsection{Quantum Stein's lemma in adaptive channel discrimination}

\label{sec:channel Stein}

Our first result is a generalization of the quantum Stein's lemma in
(\ref{eq:steins-htre}) to the setting of adaptive quantum channel
discrimination. In particular, we study the difficulty of discriminating
between an arbitrary quantum channel ${\mathcal{N}}$ and a \textquotedblleft
replacer\textquotedblright\ channel ${\mathcal{R}}$ that discards its input
and replaces it with a fixed state $\sigma$. An important physical realization
of this problem is in quantum illumination \cite{L,PhysRevLett.101.253601}
(discussed more in Section~\ref{sec:quantum-illum}). We show that a
tensor-power strategy is optimal in this case, so that there is no need to
consider the most general adaptive strategy (at least in the asymptotic
regime). This can be seen as a quantum Stein's lemma for this task; if one
optimises the Type~II error under the constraint that the Type~I error is less
than some fixed constant $\varepsilon\in(0,1)$, then the optimal Type~II error
probability cannot decrease to zero exponentially faster than a rate
determined by the relative entropy. Otherwise, the Type~I error necessarily
converges to one. It is straightforward to employ the direct part of the
established quantum Stein's lemma from \cite{HP} in order to establish the
direct part for our setting.\begin{figure}[ptb]
\begin{center}
\includegraphics[width=6in]{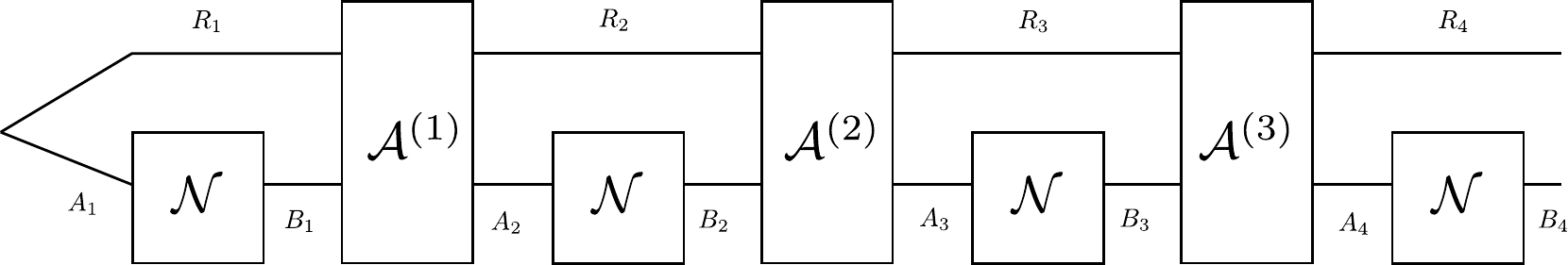}
\end{center}
\caption{A four-round adaptive discrimination strategy applied to the channel
${\mathcal{N}}$.}%
\label{FigureChannelDiscrim1}%
\end{figure}

\begin{figure}[ptb]
\begin{center}
\includegraphics[width=6in]{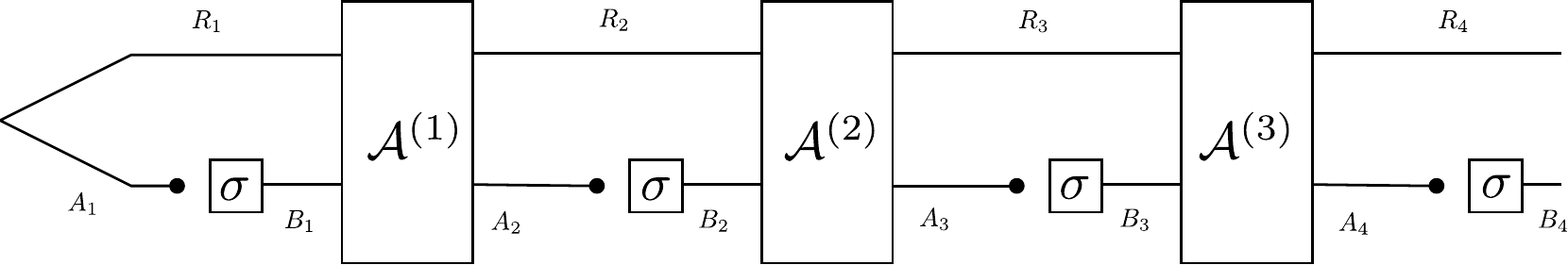}
\end{center}
\caption{A four-round adaptive discrimination strategy applied to the replacer
channel ${\mathcal{R}}$.}%
\label{FigureChannelDiscrim2}%
\end{figure}

In more detail, the most general adaptive discrimination strategy is depicted
in Figures~\ref{FigureChannelDiscrim1} and~\ref{FigureChannelDiscrim2}. It
consists of a choice of input state $\rho_{R_{1}A_{1}}$, a sequence
$\{\mathcal{A}_{R_{i}B_{i}\rightarrow R_{i+1}A_{i+1}}^{(i)}\}_{i\in
\{1,\ldots,n-1\}}$ of adaptive quantum channels, and finally a quantum
measurement $\left\{  Q_{R_{n}B_{n}},I_{R_{n}B_{n}}-Q_{R_{n}B_{n}}\right\}  $
to decide which channel was applied. Let $\tau_{R_{n}B_{n}}$ denote the output
state at the end of the adaptive discrimination strategy (before the final
measurement $\left\{  Q_{R_{n}B_{n}},I_{R_{n}B_{n}}-Q_{R_{n}B_{n}}\right\}  $
is performed) when the channel being applied is ${\mathcal{R}}$, and let
$\rho_{R_{n}B_{n}}$ denote the output state at the end of the adaptive
discrimination strategy when the channel being applied is ${\mathcal{N}}$. Let
$D_{H,\ad}^{\varepsilon}\left(  \mathcal{N}^{\otimes n}\Vert\mathcal{R}^{\otimes
n}\right)  $ denote the \textquotedblleft adaptive hypothesis testing relative
entropy,\textquotedblright\ which generalizes \eqref{eq:htre} by allowing for
an optimization over all possible adaptive strategies used to discriminate
between $\mathcal{N}^{\otimes n}$ and
$\mathcal{R}^{\otimes n}$. We
define it formally as follows:%
\begin{equation}
D_{H,\ad}^{\varepsilon}\left(  \mathcal{N}^{\otimes n}\middle\Vert
\mathcal{R}^{\otimes n}\right)  \equiv-\log\beta_{\varepsilon}^{\ad
}({\mathcal{N}}^{\otimes n}\|{\mathcal{R}}^{\otimes n})= -\log\inf\Tr\left\{
Q_{R_{n}B_{n}}\tau_{R_{n}B_{n}}\right\}  , \label{eq:adaptive-htre}%
\end{equation}
where the infimum is over all measurement operators $Q_{R_{n}B_{n}}$ subject
to $0\leq Q_{R_{n}B_{n}}\leq I_{R_{n}B_{n}}$ and%
\begin{equation}
\Tr\left\{  Q_{R_{n}B_{n}}\rho_{R_{n}B_{n}}\right\}  \geq1-\varepsilon,
\end{equation}
all preparation states $\rho_{R_{1}A_{1}}$ subject to $\rho_{R_{1}A_{1}}\geq0$
and Tr$\left\{  \rho_{R_{1}A_{1}}\right\}  =1$, and all adaptive quantum
channels
\begin{equation}
\left\{  \mathcal{A}_{R_{i}B_{i}\rightarrow R_{i+1}A_{i+1}}^{\left(  i\right)
}\right\}  _{i\in\left\{  1,\ldots,n-1\right\}  }.
\end{equation}

We can now state our first main result:

\begin{theorem}
\label{stein} Let $\varepsilon\in\left(  0,1\right)  $ be a fixed constant.
Let ${\mathcal{N}}:\mathcal{B}({\mathcal{H}}_{A})\rightarrow\mathcal{B}%
({\mathcal{H}}_{B})$ be an arbitrary quantum channel and let ${\mathcal{R}%
}:\mathcal{B}({\mathcal{H}}_{A})\rightarrow\mathcal{B}({\mathcal{H}}_{B})$ be
the replacer quantum channel ${\mathcal{R}}(X_{A})=\Tr\{X_{A}\}\sigma_{B}$,
for some fixed density operator $\sigma_{B}$. Then the channel version of
Stein's lemma, \eqref{channel Stein} holds, i.e.,
\begin{align}
\lim_{n\rightarrow\infty}-\frac{1}{n}\log\beta_{\varepsilon}^{\ad
}\left(  \mathcal{N}^{\otimes n}\middle\Vert\mathcal{R}^{\otimes n}\right)  &=
\lim_{n\rightarrow\infty}\frac{1}{n}D_{H,\ad}^{\varepsilon}\left(
\mathcal{N}^{\otimes n}\middle\Vert\mathcal{R}^{\otimes n}\right)  \\
&=
\sup_{\psi_{RA}}D\left(  \mathcal{N}_{A\rightarrow B}(  \psi_{RA})
\Vert\psi_{R}\otimes\sigma_{B}\right)  =
D(  \mathcal{N}
\Vert\mathcal{R}) \label{eq:adaptive-steins}%
\end{align}
for any $\varepsilon\in(0,1)$. It suffices to take system $R$
isomorphic to system $A$ in the above optimization.
\end{theorem}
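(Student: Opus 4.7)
The plan is to split the proof of Theorem~\ref{stein} into a direct (achievability) inequality $\liminf_n \frac{1}{n} D_{H,\ad}^\varepsilon \geq D(\mathcal{N}\|\mathcal{R})$ and a strong-converse inequality $\limsup_n \frac{1}{n} D_{H,\ad}^\varepsilon \leq D(\mathcal{N}\|\mathcal{R})$, which together give the claimed equality for every $\varepsilon \in (0,1)$.

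\textbf{Direct part.} For the achievability bound I would restrict to the non-adaptive product strategy in which the input is $\psi_{RA}^{\otimes n}$ for some bipartite state $\psi_{RA}$. The output states are then the i.i.d.\ states $\mathcal{N}_{A\to B}(\psi_{RA})^{\otimes n}$ and $(\psi_R\otimes\sigma_B)^{\otimes n}$, so the standard quantum Stein's lemma \eqref{eq:steins-htre} applies directly and yields decay rate $D(\mathcal{N}_{A\to B}(\psi_{RA})\|\psi_R\otimes\sigma_B)$. Taking the supremum over $\psi_{RA}$ produces the lower bound $D(\mathcal{N}\|\mathcal{R})$; the claim that it suffices to take $R\cong A$ follows from the fact that $D(\mathcal{N}(\psi_{RA})\|\psi_R\otimes\sigma_B)$ is non-decreasing under purifications of $\psi_{RA}$ (by data processing applied to the partial trace over a purifying ancilla), so pure states suffice and Schmidt decomposition lets us take $R\cong A$.

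\textbf{Strong-converse part.} For the upper bound I would work with the sandwiched R\'enyi relative entropy $\widetilde{D}_\alpha$ at $\alpha>1$ and combine two ingredients. The first is a one-shot R\'enyi upper bound on the hypothesis testing divergence: applying data processing for $\widetilde{D}_\alpha$ (valid for $\alpha\geq 1/2$) under the binary measurement channel associated with any test $Q$ obeying $\Tr[Q\rho_{R_n B_n}]\geq 1-\varepsilon$, and then lower-bounding the resulting classical R\'enyi divergence by its dominant summand, yields
\begin{equation*}
D_H^\varepsilon(\rho_{R_n B_n}\|\tau_{R_n B_n}) \leq \widetilde{D}_\alpha(\rho_{R_n B_n}\|\tau_{R_n B_n}) + \frac{\alpha}{\alpha-1}\log\frac{1}{1-\varepsilon}.
\end{equation*}
The second is an additivity-type bound proved by induction on $n$:
\begin{equation*}
\widetilde{D}_\alpha(\rho_{R_n B_n}\|\tau_{R_n B_n}) \leq n\,\widetilde{D}_\alpha(\mathcal{N}\|\mathcal{R})
\end{equation*}
for the output states of \emph{any} $n$-round adaptive strategy, obtained by alternately invoking data processing under the adaptive CPTP maps $\mathcal{A}^{(i)}$ and a chain-rule-type inequality for $\widetilde{D}_\alpha$ that exploits the product structure $\tau^{(i,B)}_{R_i B_i} = \tau^{(i,A)}_{R_i}\otimes\sigma_{B_i}$ forced by the replacer at every round. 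Combining the two bounds, dividing by $n$, letting $n\to\infty$ and then $\alpha\to 1^+$ (using the standard limit $\widetilde{D}_\alpha(\mathcal{N}\|\mathcal{R})\to D(\mathcal{N}\|\mathcal{R})$) gives $\limsup_n -\frac{1}{n}\log\beta_\varepsilon^{\ad} \leq D(\mathcal{N}\|\mathcal{R})$, as required.

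\textbf{Main obstacle.} The heart of the argument, and what I expect to be the main obstacle, is the per-round chain-rule inequality underlying the inductive additivity bound, essentially of the form
\begin{equation*}
\widetilde{D}_\alpha(\mathcal{N}_{A\to B}(\omega_{RA})\|\mu_R\otimes\sigma_B) \leq \widetilde{D}_\alpha(\omega_R\|\mu_R) + \widetilde{D}_\alpha(\mathcal{N}\|\mathcal{R}).
\end{equation*}
Unlike the von Neumann relative entropy $D$, the sandwiched R\'enyi divergence does not obey a general additive chain rule against a product second argument, so this inequality has to be extracted from operator-level manipulations of the explicit sandwiched form that make essential use of the $\mu_R\otimes\sigma_B$ structure produced by the replacer. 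It is precisely this product structure on the $\mathcal{R}$-side at every round that lets adaptive strategies collapse to parallel ones in this problem; for two arbitrary channels the same program would fail, which is consistent with the general adaptive channel discrimination problem being open.
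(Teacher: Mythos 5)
Your plan is correct and matches the paper's proof essentially step for step: the direct part by invoking the ordinary quantum Stein's lemma on the tensor-power strategy (and then purifying to justify $R\cong A$), and the converse by combining Lemma~\ref{lem:sandwich-to-htre} with an iterated per-round bound $\widetilde{D}_{\alpha}\!\left(\mathcal{N}(\omega_{RA})\,\middle\Vert\,\mu_{R}\otimes\sigma_{B}\right)\leq\widetilde{D}_{\alpha}(\omega_{R}\Vert\mu_{R})+\widetilde{D}_{\alpha}(\mathcal{N}\Vert\mathcal{R})$ and data processing under $\Tr_{A}\circ\mathcal{A}^{(i)}$, then taking $\alpha\to1^{+}$ via Lemma~\ref{MI limits}. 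The chain-rule step you flag as the main obstacle is exactly what the paper proves, by rewriting $\widetilde{D}_{\alpha}$ through the Schatten norm identity \eqref{entropytonorm}, factoring out the ratio $\left\Vert(\Theta_{\sigma_{B}^{(1-\alpha)/\alpha}}\circ\mathcal{N})(X)\right\Vert_{\alpha}/\left\Vert X_{R}\right\Vert_{\alpha}$, recognizing its supremum as the completely bounded $(1\to\alpha)$-norm, and identifying that norm with $\widetilde{D}_{\alpha}(\mathcal{N}\Vert\mathcal{R})$ via Lemma~\ref{lemma:CB}.
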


This theorem clearly generalizes the quantum Stein's lemma in
(\ref{eq:steins-htre}). It implies that a tensor-power discrimination strategy
is optimal---allowing for an adaptive strategy yields no asymptotic
improvement. That is, one should simply prepare $n$ copies of the bipartite
state $\psi_{RA}$ optimizing (\ref{eq:adaptive-steins}), send each $A$ system
through each channel use\ (creating the state $\left[  \mathcal{N}%
_{A\rightarrow B}(  \psi_{RA})  \right]  ^{\otimes n}$ or $\left[
\mathcal{R}_{A\rightarrow B}(  \psi_{RA})  \right]  ^{\otimes n}$),
and finally perform a collective measurement on all systems $R^{n}B^{n}$ to
decide which channel was applied.

\subsection{The strong converse exponent for adaptive channel discrimination}

Next, we refine our analysis by identifying the strong converse exponent for
the task of discriminating between an arbitrary quantum channel ${\mathcal{N}%
}$ and a replacer channel ${\mathcal{R}}$. 
It is easy to see (by considering $\ep\to 0$) that Theorem \ref{stein} implies that
for any rate $r<D(\N\|\R)$, there exists a sequence of non-adaptive strategies, along which 
the type I error goes to zero, and the type II error vanishes exponentially fast, with a rate at least $r$. 
This is usually referred to as the direct part of Stein's lemma.
Moreover, it also implies that the strong converse property holds, i.e., for any 
sequence of adaptive srategies, if the 
type II error vanishes exponentially with a rate $r>D(\N\|\R)$, then the type I error
goes to $1$ (this can be seen by taking $\ep\to 1$).
Our aim is to determine the speed of convergence of the type I error to $1$ in the strong converse domain, for any decay rate 
$r>D(\N\|\R)$ of the type II errors. As it turns out, this convergence is also exponential, and hence our aim is to 
determine the exact values of the strong converse exponents:
%
\begin{align}
\mathrm{\underline{sc}}(r) & \equiv\inf\left\{  \liminf_{n\rightarrow+\infty
}-\frac{1}{n}\log\Tr\left\{  Q_{R_{n}B_{n}}\rho_{R_{n}B_{n}}\right\}
:\,\liminf_{n\rightarrow+\infty}-\frac{1}{n}\log\Tr\left\{  Q_{R_{n}B_{n}}%
\tau_{R_{n}B_{n}}\right\}  >r\right\} ,\label{sci def}\\
\mathrm{\overline{sc}}(r) & \equiv\inf\left\{  \limsup_{n\rightarrow+\infty
}-\frac{1}{n}\log\Tr\left\{  Q_{R_{n}B_{n}}\rho_{R_{n}B_{n}}\right\}
:\,\liminf_{n\rightarrow+\infty}-\frac{1}{n}\log\Tr\left\{  Q_{R_{n}B_{n}}%
\tau_{R_{n}B_{n}}\right\}  >r\right\}  ,\label{scs def}%
\end{align}
where the infimum is over all sequences of adaptive measurement strategies,
specified by measurement operators $Q_{R_{n}B_{n}}$ subject
to $0\leq Q_{R_{n}B_{n}}\leq I_{R_{n}B_{n}}$, preparation states
$\rho_{R_{1}A_{1}}$, and adaptive quantum channels
\begin{equation}
{\mathcal{A}}_{[n]}\equiv\left\{  \mathcal{A}_{R_{i}B_{i}\rightarrow
R_{i+1}A_{i+1}}^{\left(  i\right)  }\right\}  _{i\in\left\{  1,\ldots
,n-1\right\}  }.
\end{equation}


We establish the following theorem:
\begin{theorem}\label{scrtheorem} 
Let ${\mathcal{N}}:\mathcal{B}({\mathcal{H}}_{A})\rightarrow\mathcal{B}({\mathcal{H}}_{B})$ 
be an arbitrary quantum channel, and let ${\mathcal{R}}:\mathcal{B}({\mathcal{H}}_{A})\rightarrow
\mathcal{B}({\mathcal{H}}_{B})$ be the replacer quantum channel ${\mathcal{R}%
}(X)=\Tr\{X\}\sigma_{B}$, for some fixed density operator $\sigma_{B}$. For any
$r>D(\N\|\R)$,
\begin{align}
\mathrm{\underline{sc}}(r)=\mathrm{\overline{sc}}(r)& =\lim_{n\to+\infty}%
-\frac{1}{n}\log(1-\alpha_{n,r}^{\ad})\label{scrtheroem0}  \\
& =\sup_{\alpha>1}\inf
_{\psi_{RA}}\frac{\alpha-1}{\alpha}\left[  r-\widetilde{D}_{\alpha}(
{\mathcal{N}}_{A\rightarrow B}(\psi_{RA})\Vert\psi_{R}\otimes\sigma
_{B})  \right] \label{scrtheroem1}\\
&  =\inf_{\psi_{RA}}\sup_{\alpha>1}\frac{\alpha-1}{\alpha}\left[
r-\widetilde{D}_{\alpha}(  {\mathcal{N}}_{A\rightarrow B}(\psi
_{RA})\Vert\psi_{R}\otimes\sigma_{B})  \right]
\label{scrtheroem2}\\
&=\sup_{\alpha>1}\frac{\alpha-1}{\alpha}\left[  r-\widetilde{D}_{\alpha}(
\mathcal{N}\Vert\mathcal{R})  \right] ,\label{scrtheroem3}
\end{align}
where $\alpha_{n,r}^{\ad}$ is defined in \eqref{channel alpha},
the infima are taken over all possible bipartite states
$\psi_{RA}$ with an arbitrary ancilla system $R$;
in particular, \eqref{channel HK} holds. 
Moreover, the same identities hold when the infima are restricted
to pure states $\psi_{RA}$ with $R$ being a fixed copy of $A$.
\end{theorem}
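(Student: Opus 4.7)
My plan is to prove the chain of equalities \eqref{scrtheroem0}--\eqref{scrtheroem3} by establishing a matching upper bound on $\mathrm{\overline{sc}}(r)$ (achievability) and lower bound on $\mathrm{\underline{sc}}(r)$ (converse), then invoking a minimax argument to match up the different single-letter formulas. Since $\mathrm{\underline{sc}}(r)\leq\mathrm{\overline{sc}}(r)$ holds by definition and both quantities will be pinched between the achievability and converse rates, this closes the loop and in particular yields the limit \eqref{scrtheroem0}.

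For achievability, I fix any pure $\psi_{RA}$ with $R\cong A$ and run the non-adaptive product strategy that inputs $\psi_{RA}^{\otimes n}$ to the $n$ channel uses. This reduces the problem to discriminating the i.i.d.\ states $[\mathcal{N}(\psi_{RA})]^{\otimes n}$ from $[\psi_R\otimes\sigma_B]^{\otimes n}$, so the state-level strong-converse result \eqref{eq:sc-states} from \cite{MO} supplies a sequence of measurements achieving the exponent $\sup_{\alpha>1}\tfrac{\alpha-1}{\alpha}[r-\widetilde{D}_\alpha(\mathcal{N}(\psi_{RA})\|\psi_R\otimes\sigma_B)]$. Taking the infimum over $\psi_{RA}$ gives $\mathrm{\overline{sc}}(r)\leq$ the quantity on the right of \eqref{scrtheroem2}.

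For the converse, consider an arbitrary adaptive strategy with output states $\rho_{R_nB_n}$ and $\tau_{R_nB_n}$ and measurement $\{Q,I-Q\}$ satisfying $\Tr[Q\tau_{R_nB_n}]\leq 2^{-nr}$. Applying data processing for $\widetilde{D}_\alpha$ (with $\alpha>1$) onto the binary outcome distribution produces the standard change-of-measure estimate
\[
-\tfrac{1}{n}\log\Tr[Q\rho_{R_nB_n}]\ \geq\ \tfrac{\alpha-1}{\alpha}\Bigl[r-\tfrac{1}{n}\widetilde{D}_\alpha(\rho_{R_nB_n}\|\tau_{R_nB_n})\Bigr].
\]
The crux of the converse is then the adaptive chain rule
\[
\widetilde{D}_\alpha(\rho_{R_nB_n}\|\tau_{R_nB_n})\ \leq\ n\,\widetilde{D}_\alpha(\mathcal{N}\|\mathcal{R}).
\]
I would prove this by induction on $n$, leveraging the key fact that $\tau_{R_iB_i}=\tau_{R_i}\otimes\sigma_{B_i}$ at every stage because $\mathcal{R}$ outputs $\sigma$ independently of its input. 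At each induction step, data processing across the adaptive isometry $\mathcal{A}^{(i)}$, combined with the defining supremum $\widetilde{D}_\alpha(\mathcal{N}\|\mathcal{R})=\sup_{\omega_{RA}}\widetilde{D}_\alpha(\mathcal{N}(\omega_{RA})\|\omega_R\otimes\sigma_B)$, should peel off exactly one factor of $\widetilde{D}_\alpha(\mathcal{N}\|\mathcal{R})$ per channel use. Taking $\liminf_n$ and then $\sup_{\alpha>1}$ delivers $\mathrm{\underline{sc}}(r)\geq\sup_{\alpha>1}\tfrac{\alpha-1}{\alpha}[r-\widetilde{D}_\alpha(\mathcal{N}\|\mathcal{R})]$, which by definition of the channel Rényi divergence is \eqref{scrtheroem1} and \eqref{scrtheroem3}.

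To identify \eqref{scrtheroem1} and \eqref{scrtheroem2}, one direction $\sup_\alpha\inf_\psi\leq\inf_\psi\sup_\alpha$ is automatic; the reverse follows from Sion's minimax theorem applied to the compact set of input Schmidt spectra (equivalently pure $\psi_{RA}$ with $R\cong A$, using purification invariance and unitary invariance of $\widetilde{D}_\alpha$) and the interval $\alpha>1$, together with the appropriate convexity-in-$\psi$ and semicontinuity-in-$\alpha$ properties of the sandwiched Rényi relative entropy. The reduction to pure states with $R\cong A$ is standard: passing to a purification can only increase $\widetilde{D}_\alpha$, and applying a local unitary on the purifying register leaves $\widetilde{D}_\alpha$ invariant.

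The main obstacle will be the adaptive chain rule $\widetilde{D}_\alpha(\rho_{R_nB_n}\|\tau_{R_nB_n})\leq n\widetilde{D}_\alpha(\mathcal{N}\|\mathcal{R})$. The delicate point is that after an adaptive interaction the marginals of $\rho_{R_{i+1}A_{i+1}}$ and $\tau_{R_{i+1}A_{i+1}}$ on the reference register genuinely differ, so one cannot naively substitute the input of the next channel use into the defining supremum for $\widetilde{D}_\alpha(\mathcal{N}\|\mathcal{R})$ without losing the correct accounting. The inductive argument has to exploit simultaneously the persistent product structure of $\tau_{R_iB_i}$ and a suitable variational/pinching characterisation of $\widetilde{D}_\alpha$ for $\alpha>1$ to reconcile the two marginals.
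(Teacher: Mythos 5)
Your overall blueprint (achievability via product strategies plus the state-level strong converse of Mosonyi--Ogawa, converse via a Nagaoka-type change-of-measure estimate plus an adaptive chain rule, minimax to equate the $\sup\inf$ and $\inf\sup$ formulas) is exactly the paper's blueprint. The achievability half is essentially fine (modulo the small technicality that the constraint in $\mathrm{\overline{sc}}(r)$ is a strict inequality, so one runs the MO tests at rates $r'>r$ and then uses the continuity of $r\mapsto H^*_r$), and the minimax step as you sketch it is close to what the paper does, with Kneser--Fan in place of Sion after reparameterising $\alpha$ by $u=(\alpha-1)/\alpha$ and reducing pure $\psi_{RA}$ to reduced states $\rho_{A'}$ via Lemma~\ref{lemma:channel Renyi0}.

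The genuine gap is precisely the one you flag yourself: you have not proved the adaptive chain rule
\begin{equation*}
\widetilde D_\alpha\bigl(\rho_{R_nB_n}\,\big\|\,\tau_{R_nB_n}\bigr)\;\leq\; n\,\widetilde D_\alpha(\mathcal N\Vert\mathcal R),
\end{equation*}
and, as you correctly observe, the naive peel-off fails because the defining supremum $\widetilde D_\alpha(\mathcal N\Vert\mathcal R)=\sup_{\omega}\widetilde D_\alpha(\mathcal N(\omega_{RA})\Vert\omega_R\otimes\sigma_B)$ compares a state to a product with the \emph{same} marginal $\omega_R$, whereas at step $i$ of the adaptive protocol you must compare $\mathcal N(\rho_{R_iA_i})$ against $\tau_{R_i}\otimes\sigma_{B_i}$, and $\rho_{R_i}\neq\tau_{R_i}$. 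Your suggested fix --- a ``variational/pinching characterisation of $\widetilde D_\alpha$'' --- is not what resolves this. What does resolve it, and what the paper actually uses, is the completely bounded $(1\to\alpha)$-norm representation: one writes
\begin{equation*}
\widetilde D_\alpha\bigl(\mathcal N(\rho_{R_iA_i})\,\big\|\,\tau_{R_i}\otimes\sigma_{B_i}\bigr)
=\frac{\alpha}{\alpha-1}\log\Bigl\Vert\bigl(\Theta_{\sigma_{B_i}^{(1-\alpha)/\alpha}}\circ\mathcal N\bigr)\bigl(\tau_{R_i}^{\frac{1-\alpha}{2\alpha}}\rho_{R_iA_i}\tau_{R_i}^{\frac{1-\alpha}{2\alpha}}\bigr)\Bigr\Vert_\alpha,
\end{equation*}
multiplies and divides by $\bigl\Vert\tau_{R_i}^{\frac{1-\alpha}{2\alpha}}\rho_{R_i}\tau_{R_i}^{\frac{1-\alpha}{2\alpha}}\bigr\Vert_\alpha$, and bounds the resulting ratio by a supremum over all positive $X_{R_iA_i}$, i.e.\ by $\bigl\Vert\Theta_{\sigma_B^{(1-\alpha)/\alpha}}\circ\mathcal N\bigr\Vert_{\mathrm{CB},1\to\alpha}$. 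The crucial point is that this CB-norm supremum does not care that the marginal of $\tau_{R_i}^{\frac{1-\alpha}{2\alpha}}\rho_{R_iA_i}\tau_{R_i}^{\frac{1-\alpha}{2\alpha}}$ on $R_i$ is not a normalized state, nor that it mismatches $\tau_{R_i}$; all of that is absorbed into the residual factor $\widetilde D_\alpha(\rho_{R_i}\Vert\tau_{R_i})$, which then recurses via data processing under $\Tr_{A_i}\circ\mathcal A^{(i-1)}$. The identity $\frac{\alpha}{\alpha-1}\log\bigl\Vert\Theta_{\sigma_B^{(1-\alpha)/\alpha}}\circ\mathcal N\bigr\Vert_{\mathrm{CB},1\to\alpha}=\widetilde D_\alpha(\mathcal N\Vert\mathcal R_{\sigma_B})$ (Lemma~\ref{lemma:CB}, from \cite{GW}) then closes the loop. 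Without this CB-norm step the chain rule, and hence your converse, does not go through.
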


\begin{remark}
When $D(\N\|\R)=+\infty$, the above statement is empty. On the other hand,
when $D(\N\|\R)$ is finite, then Theorem 2 also holds for $0<r\le D(\N\|\R)$ in a trivial way.
 Indeed, by Theorem \ref{stein}, 
if $r\le D(\N\|\R)$ then
the operational quantities in 
\eqref{scrtheroem0} are equal to $0$, and so is \eqref{scrtheroem3}, since $\widetilde D_{\alpha}(\N\|\R)\ge D(\N\|\R)$ for every $\alpha>1$, according to Lemma \ref{MI limits}.
\end{remark}

\subsection{Connection to quantum illumination}

\label{sec:quantum-illum}

Our results have implications for the theory of quantum illumination, which we
discuss briefly here. Building on prior work in
\cite{PhysRevA.72.014305,PhysRevA.71.062340}, Lloyd \textit{et al}.~show how
the use of entangled photons can provide a significant improvement over
unentangled light when detecting the presence of an object
\cite{L,PhysRevLett.101.253601}. The goal in quantum illumination is to
determine whether a distant object is present or not by employing quantum
light along with a quantum detection strategy. It is sensible and traditional
\cite{L,PhysRevLett.101.253601}\ to take the object not being present as the
null hypothesis and the object being present as the alternative hypothesis.

In the usual scenario, the transmitter and receiver are in the same location.
The protocol begins with the transmitter sending a signal mode that is
entangled with an idler mode still in the possession of the transmitter. Let
$\left\vert \psi\right\rangle _{SI}$ denote the state of the signal and idler
mode. If the object is not present (the null hypothesis), then the signal mode
is lost and is replaced by a thermal state~$\theta_{S}$, so that the joint
state becomes $\theta_{S}\otimes\psi_{I}$. Clearly, this is an instance of the
replacer channel. If the object is present (the alternative hypothesis), then
the signal beam is reflected off the object and returns to the transmitter.
The resulting state is described by $\left(  \mathcal{N}_{S}\otimes
\text{id}_{I}\right)  \left(  \psi_{SI}\right)  $, where $\mathcal{N}_{S}$
describes the noise characteristics of the reflection channel. This protocol
is performed $n$ times with the receiver storing either the state $\left[
\theta_{S}\otimes\psi_{I}\right]  ^{\otimes n}$ or $\left[  \left(
\mathcal{N}_{S}\otimes\text{id}_{I}\right)  \left(  \psi_{SI}\right)  \right]
^{\otimes n}$. The receiver finally performs a collective measurement on all
of the systems in order to decide whether the object is present. Thus, we have
a quantum channel discrimination problem in which one seeks to distinguish
between a replacer channel and a noisy channel. However, our results do not
apply to this setting if one takes the null and alternative hypotheses in the
natural way suggested above.

An alternative scenario is that in which the transmitter and receiver are in
different locations. It is technologically more challenging to take advantage
of quantum illumination in this setting, due to the fact that the transmitter
and receiver need to share and store entanglement over a potentially large
distance. Nevertheless, this is the setting to which our results apply. Given
that the null hypothesis in this setting corresponds to the object not being
present, the channel applied to the transmitted mode will be $\mathcal{N}$,
which characterizes the optical loss in the transmission. Since the
alternative hypothesis in this setting corresponds to the object being present
and such an object will reflect the light incident on it, the signal beam does
not make it to the receiving end and the receiver instead detects thermal
noise, so that the channel applied to the transmitted mode is the replacer
channel $\mathcal{R}$. Thus, the Type~I and Type~II errors for this setting
correspond to our setting described in the previous sections.

Implicit in prior analyses on quantum illumination is the assumption that a
tensor-power, non-adaptive strategy is optimal. Our results support this
assumption (at least in the particular setting of asymmetric hypothesis
testing described above) by showing that no asymptotic advantage is provided
by instead using an adaptive strategy for quantum channel
discrimination.\footnote{Strictly speaking, the results in our paper apply to
finite-dimensional systems, whereas the quantum illumination protocols apply
to infinite-dimensional, albeit finite-energy, systems. Given that our
analysis never has any dimension dependence, this suggests that it should be possible
to extend our results to infinite-dimensional systems with energy constraints.} It remains an
open question to determine if a tensor-power, non-adaptive strategy is optimal
in the symmetric hypothesis testing setting considered in
\cite{L,PhysRevLett.101.253601}.

\subsection{Strong converse theorem for quantum-feedback-assisted
communication}
\label{sec:EA sc}

There is a well-known connection between hypothesis testing and channel
coding, first recognized by Blahut \cite{B74}, and this connection also holds
for quantum channels. The direct part of the channel coding theorem (i.e., the
Holevo-Schumacher-Westmoreland theorem) \cite{H, SW} can be obtained from the
direct part of Stein's lemma, as shown in \cite{HN, ON07}.

One consequence of Theorem~\ref{stein} is a strong converse theorem for the
quantum-feedback-assisted classical capacity of a quantum channel. In prior
work, Bowen proved that a noiseless quantum feedback channel does not increase
the entanglement-assisted capacity
\cite{PhysRevLett.83.3081,ieee2002bennett,Hol01a} of a noisy channel, by
proving a weak converse for its quantum-feedback-assisted capacity \cite{B04}.
That is, Bowen proved that the quantum-feedback-assisted capacity of a channel
$\mathcal{N}$ is equal to its entanglement-assisted capacity, denoted by%
\begin{equation}\label{IN}
I(\mathcal{N})\equiv\sup_{\psi_{RA}}\inf_{\sigma_{B}}D\left(  \mathcal{N}%
_{A\rightarrow B}(  \psi_{RA})  \Vert\psi_{R}\otimes\sigma
_{B}\right)  .
\end{equation}
However, Bowen's result did not exclude the possibility of a trade-off between
the communication rate and the error probability; our strong converse theorem
shows that no such trade-off is possible in the asymptotic limit of many
channel uses. A strong converse theorem in this context states that for any
coding scheme, which seeks to transmit at a rate strictly higher than the
capacity of the channel, the probability of successful decoding
decays to zero exponentially fast in the number of channel uses. So our result
sharpens Bowen's \cite{B04}, strengthens the main result of \cite{GW}, and
generalizes \cite[Theorem~7]{PV} to the quantum case. The approach taken is
inspired by that used by Nagaoka \cite{N}, who derived the strong converse
theorem for any memoryless quantum channel from the monotonicity of the
R\'enyi relative entropies. Polyanskiy and Verd\'{u} \cite{PV} later
generalised this approach to show how a bound on the success probability could
be derived from any relative-entropy-like quantity that satisfies certain
natural properties. This approach has already been used to prove several
strong converse theorems for quantum channels \cite{KW, WWY13, GW, TWW14};
here we shall use the sandwiched R\'{e}nyi relative entropy \cite{MDSFT13,
WWY13}.

With the proof of Theorem~\ref{stein} in hand, it requires only a little extra effort to
prove a strong converse for the quantum-feedback-assisted capacity of a
quantum channel (the capacity when unlimited use of a noiseless quantum
feedback channel from receiver to sender is allowed).

\begin{theorem}
\label{thm:sc-feedback}Let $p_{\operatorname{succ}}$ denote the success
probability of any rate $R$ quantum-feedback-assisted communication code for a
channel ${\mathcal{N}}$ that uses it $n\geq1$ times. The following bound holds%
\begin{equation}
p_{\operatorname{succ}}\leq2^{-n\sup_{\alpha>1}\left(  \frac{\alpha-1}{\alpha
}\right)  \left(  R-\widetilde{I}_{\alpha}\left(  \mathcal{N}\right)  \right)
}, \label{eq:QFA-strong-conv-exponent}%
\end{equation}
where $\widetilde{I}_{\alpha}\left(  \mathcal{N}\right)  $ is the sandwiched
R\'{e}nyi mutual information of the channel $\mathcal{N}$, defined in
\eqref{EA4}. As a consequence of this bound, we can
conclude a strong converse: for any sequence of quantum-feedback-assisted
codes for a channel ${\mathcal{N}}$ with rate $R>I({\mathcal{N}})$, the
success probability decays exponentially to zero as $n\rightarrow\infty$.
\end{theorem}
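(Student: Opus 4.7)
The plan is to combine a Nagaoka--Polyanskiy--Verd\'u style meta-converse with the adaptive additivity bound for $\widetilde{D}_\alpha$ that underlies the proof of Theorem~\ref{stein}. Fix a quantum-feedback-assisted code of rate $R$ using the channel $n$ times, with a uniformly distributed message register $M$ on $|M|=2^{nR}$ values, an adaptive encoder exploiting the feedback, and a decoding POVM $\{\Lambda_m\}_m$ on Bob's final register $B_n$. Let $\rho_{MB_n}$ denote the resulting classical--quantum state when the channel is $\mathcal{N}$, and for any state $\sigma_B$ let $\tau^\sigma_{MB_n}$ denote the state obtained by replacing every use of $\mathcal{N}$ with $\mathcal{R}_\sigma(\cdot)=\Tr\{\cdot\}\sigma_B$. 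Since $\mathcal{R}_\sigma$ destroys all input dependence, the reduced state of $\tau^\sigma_{MB_n}$ on $B_n$ is independent of $M$, so $\tau^\sigma_{MB_n}=\pi_M\otimes\tau'_{B_n}$ for some state $\tau'_{B_n}$.

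First, I would apply data processing of $\widetilde{D}_\alpha$ under the quantum-to-classical measurement that outputs the decoder's guess $\hat M$ via $\{\Lambda_m\}$, and then coarse-grain to the Bernoulli event ``$\hat M = M$''. This event has probability $\ps$ under $\rho_{MB_n}$ and probability $\tfrac{1}{|M|}\sum_m \Tr(\Lambda_m\tau'_{B_n})\le 2^{-nR}$ under $\tau^\sigma_{MB_n}$ (using $\sum_m\Lambda_m\le I$). Dropping the failure branch in the resulting two-point R\'enyi divergence then yields, for every $\alpha>1$,
\begin{equation*}
\widetilde{D}_\alpha(\rho_{MB_n}\,\|\,\tau^\sigma_{MB_n})\;\ge\;\frac{\alpha}{\alpha-1}\log \ps \;+\; nR.
\end{equation*}
Second, I would invoke the key additivity bound that is the technical backbone of the proof of Theorem~\ref{stein}: because the code run with either $\mathcal{N}$ or $\mathcal{R}_\sigma$ is an $n$-round adaptive discrimination strategy, its outputs satisfy
\begin{equation*}
\widetilde{D}_\alpha(\rho_{MB_n}\,\|\,\tau^\sigma_{MB_n})\;\le\;n\,\widetilde{D}_\alpha(\mathcal{N}\|\mathcal{R}_\sigma)\;=\;n\sup_{\psi_{RA}}\widetilde{D}_\alpha\bigl(\mathcal{N}_{A\to B}(\psi_{RA})\,\|\,\psi_R\otimes\sigma_B\bigr).
\end{equation*}
Chaining the two inequalities, rearranging for $\ps$, and infimizing over $\sigma_B$ produces
\begin{equation*}
\ps\;\le\; 2^{-n\frac{\alpha-1}{\alpha}\bigl(R \,-\, \inf_{\sigma_B}\sup_{\psi_{RA}}\widetilde{D}_\alpha(\mathcal{N}(\psi_{RA})\,\|\,\psi_R\otimes\sigma_B)\bigr)}.
\end{equation*}

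To obtain \eqref{eq:QFA-strong-conv-exponent} as stated, one then invokes a minimax identification to exchange $\inf_{\sigma}$ and $\sup_{\psi}$ and recover the sandwiched R\'enyi mutual information $\widetilde{I}_\alpha(\mathcal{N})$. This is a standard convex-analytic step using quasi-convexity of $\widetilde{D}_\alpha$ in its second argument together with compactness of the state spaces (alternatively, one may simply define $\widetilde{I}_\alpha(\mathcal{N})$ in the $\inf_\sigma\sup_\psi$ form). The asymptotic strong converse follows at once: since $\widetilde{I}_\alpha(\mathcal{N})\to I(\mathcal{N})$ as $\alpha\searrow 1$, any rate $R>I(\mathcal{N})$ makes the exponent $\tfrac{\alpha-1}{\alpha}(R-\widetilde{I}_\alpha(\mathcal{N}))$ strictly positive for some $\alpha>1$ close to $1$, forcing $\ps\to 0$ exponentially in $n$. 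In my view, the main obstacle is the adaptive additivity step, since it is what defeats the naive tensor-product intuition and requires the full machinery developed for Theorem~\ref{stein}; the Nagaoka-style measurement bound, the minimax swap, and the limit $\alpha\to 1$ are comparatively routine.
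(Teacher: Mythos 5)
Your proposal is correct and follows essentially the same route as the paper: a Nagaoka-style measurement bound on $\widetilde{D}_\alpha(\rho_{MB_n}\|\tau^\sigma_{MB_n})$ in terms of $\ps$ and $\log M_n$, chained with the adaptive additivity bound $\widetilde{D}_\alpha(\rho_{MB_n}\|\tau^\sigma_{MB_n})\le n\widetilde{D}_\alpha(\mathcal{N}\|\mathcal{R}_\sigma)$ inherited from the proof of Theorem~\ref{stein} (the paper carries out the iteration explicitly in Section~\ref{sec:EA sc proof}, treating the message register plus Alice's and Bob's private memories as the ancilla $R_i$), then an infimum over $\sigma_B$, the minimax swap of Lemma~\ref{lemma:geom int} to recover $\widetilde{I}_\alpha(\mathcal{N})$, and finally $\alpha\searrow 1$ via Lemma~\ref{MI limits}. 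The only cosmetic difference is that the paper tracks the registers more explicitly (the final POVM lives on $B_nB_{n-1}'$, and the message-dependent encoders are packaged as controlled channels on the classical register $R$), but these are exactly the bookkeeping details your phrase ``the code run with either $\mathcal{N}$ or $\mathcal{R}_\sigma$ is an $n$-round adaptive discrimination strategy'' correctly identifies as the substance of the step.
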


Note that the second statement in Theorem~\ref{thm:sc-feedback} has in fact
already been proved in \cite[Section~IV-E1]{BDHSW12}, via the channel
simulation technique. However, our new contribution here is to provide the
bound in (\ref{eq:QFA-strong-conv-exponent}) on the strong converse exponent,
in addition to providing an arguably more direct proof of the theorem. It
remains an open question to determine if the strong converse exponent bound in
(\ref{eq:QFA-strong-conv-exponent}) is optimal (i.e., if there exists a
quantum-feedback-assisted communication scheme achieving this exponent in the
strong converse regime).

\section{R\'enyi relative entropies}
\label{notation}

For two Hilbert spaces $\hil,\kil$, let $\B(\hil,\kil)$ denote the set of bounded linear operators from 
$\hil$ to $\kil$. When $\kil=\hil$, we use the shorthand notation $\B(\hil)$.
We restrict ourselves to finite-dimensional Hilbert spaces throughout this
paper. The Schatten $\alpha$-norm of an operator $X$ is defined as%
\begin{equation}
\left\Vert X\right\Vert _{\alpha}\equiv\Tr\{(\sqrt{X^{*}X})^{\alpha
}\}^{1/\alpha},
\end{equation}
for $\alpha\geq1$. Let $\mathcal{B}\left(  \mathcal{H}\right)  _{+}$ denote
the subset of positive semi-definite operators; we often simply say that an
operator is \textquotedblleft positive\textquotedblright\ if it is positive
semi-definite. We also write $X\geq0$ if $X\in\mathcal{B}\left(
\mathcal{H}\right)  _{+}$. An\ operator $\rho$ is in the set $\mathcal{S}%
\left(  \mathcal{H}\right)  $\ of density operators if $\rho\in\mathcal{B}%
\left(  \mathcal{H}\right)  _{+}$ and $\Tr\left\{  \rho\right\}  =1$. 
We denote by $\B(\hil)_{++}$ and $\S(\hil)_{++}$ the set of positive definite
operators and states on $\hil$, respectively.

The
tensor product of two Hilbert spaces $\mathcal{H}_{A}$ and $\mathcal{H}_{B}$
is denoted by $\mathcal{H}_{A}\otimes\mathcal{H}_{B}$.\ Given a bipartite
density operator $\rho_{AB}\in{\mathcal{S}}(\mathcal{H}_{A}\otimes
\mathcal{H}_{B})$, we write $\rho_{A}=\Tr_{B}\left\{  \rho_{AB}\right\}  $ for
the reduced density operator on system $A$. A linear map $\mathcal{N}%
_{A\rightarrow B}:\mathcal{B}\left(  \mathcal{H}_{A}\right)  \rightarrow
\mathcal{B}\left(  \mathcal{H}_{B}\right)  $\ is positive if $\mathcal{N}%
_{A\rightarrow B}\left(  \sigma_{A}\right)  \in\mathcal{B}\left(
\mathcal{H}_{B}\right)  _{+}$ whenever $\sigma_{A}\in\mathcal{B}\left(
\mathcal{H}_{A}\right)  _{+}$. Let id$_{A}$ denote the identity map acting on
a system $A$. A linear map $\mathcal{N}_{A\rightarrow B}$ is completely
positive if the map id$_{R}\otimes\mathcal{N}_{A\rightarrow B}$ is positive
for a reference system $R$ of arbitrary size. A linear map $\mathcal{N}%
_{A\rightarrow B}$ is trace-preserving if $\Tr\left\{  \mathcal{N}%
_{A\rightarrow B}\left(  \tau_{A}\right)  \right\}  =\Tr\left\{  \tau
_{A}\right\}  $ for all input operators $\tau_{A}\in\mathcal{B}\left(
\mathcal{H}_{A}\right)  $. If a linear map is completely positive and
trace-preserving (CPTP), we say that it is a quantum channel or quantum
operation. A positive operator-valued measure (POVM) is a set $\left\{
\Lambda^{m}\right\}  $ of positive operators such that $\sum_{m}\Lambda^{m}=I$.

The quantum R\'{e}nyi relative entropy of order $\alpha\in\lbrack
0,1)\cup(1,\infty)$ between two non-zero positive semidefinite operators $\rho$ and $\sigma$ is given
by \cite{P86}%
\begin{equation}
D_{\alpha}(\rho\Vert\sigma)\equiv\left\{
\begin{array}
[c]{cc}%
\frac{1}{\alpha-1}\log \frac{1}{\Tr\rho}\text{Tr}\left\{  \rho^{\alpha}\sigma^{1-\alpha}\right\}
& \text{if }\rho\not \perp \sigma\text{ and (supp}\left(  \rho\right)
\subseteq\text{supp}\left(  \sigma\right)  \text{ or }\alpha\in\lbrack
0,1)\text{\ )}\\
+\infty & \text{otherwise}.
\end{array}
\right.  , \label{eq:Renyi-rel-ent}%
\end{equation}
with the support conditions established in \cite{TCR09}.
Here and henceforth we use the convention that powers of a positive
semidefinite operator $X$ are taken only on its support, i.e., if
$x_{1},\ldots,x_{r}$ are the strictly positive eigenvalues of $X$ with
corresponding spectral projections $P_{1},\ldots,P_{r}$, then $X^{t}\equiv
\sum_{i=1}^{r} x_{i}^{t} P_{i}$ for every $t\in\mathbb{R}$. In particular,
$X^{0}$ denotes the projection onto the support of $X$.

Recently, the sandwiched R\'{e}nyi relative entropy \cite{MDSFT13,WWY13} was
introduced. It is defined for $\alpha\in(0,1)\cup(1,\infty)$ as follows:%
\begin{equation}
\widetilde{D}_{\alpha}(  \rho\Vert\sigma)  \equiv\left\{
\begin{array}
[c]{cc}%
\frac{1}{\alpha-1}\log\left[ \frac{1}{\Tr\rho} \text{Tr}\left\{  \left(  \sigma^{\left(
1-\alpha\right)  /2\alpha}\rho\sigma^{\left(  1-\alpha\right)  /2\alpha
}\right)  ^{\alpha}\right\}  \right]  &
\begin{array}
[c]{c}%
\text{if }\rho\not \perp \sigma\text{ and (supp}\left(  \rho\right)
\subseteq\text{supp}\left(  \sigma\right) \\
\text{or }\alpha\in(0,1)\text{ )}%
\end{array}
\\
+\infty & \text{otherwise}%
\end{array}
\right.  . \label{eq:def-sandwiched}%
\end{equation}

It is known \cite{MH,MO} that for any fixed $\rho,\sigma$,
\begin{align}\label{alpha mon}
\alpha\mapsto D_{\alpha}(\rho\|\sigma)\ds\ds\text{and}\ds\ds\
\alpha\mapsto \wt D_{\alpha}(\rho\|\sigma)\ds\ds\text{are monotone increasing},
\end{align}
and in the limit $\alpha\to 1$, they both give the relative entropy \cite{MDSFT13, WWY13}:
\begin{equation}\label{Renyi limit}
\lim_{\alpha\rightarrow1}\widetilde{D}_{\alpha}(\rho\Vert\sigma)=\lim
_{\alpha\rightarrow1}D_{\alpha}(\rho\Vert\sigma)=D(\rho\Vert\sigma)\equiv D_1(\rho\Vert\sigma).
\end{equation}

The R\'enyi relative entropies have several desirable properties which justify viewing them
as distinguishability measures. In particular, $\widetilde{D}_{\alpha}\left(
\rho\Vert\sigma\right)$ satisfies the following data-processing inequality
for $\alpha\in\lbrack1/2,1)\cup(1,\infty)$ \cite{FL13,B13monotone,MDSFT13, WWY13,MO}:
\begin{equation}
\widetilde{D}_{\alpha}(  \rho\Vert\sigma)  \geq\widetilde
{D}_{\alpha}\left(  \mathcal{N}\left(  \rho\right)  \Vert\mathcal{N}\left(
\sigma\right)  \right)  ,
\end{equation}
where $\mathcal{N}$ is a CPTP\ map. A similar inequality holds for $D_{\alpha
}(  \rho\Vert\sigma)  $ when $\alpha\in\lbrack0,1)\cup(1,2]$
\cite{P86}.

The following simple lemma relates the hypothesis testing relative entropy to
the sandwiched R\'{e}nyi relative entropy. The idea for its proof goes back to
\cite{HP,N,ON}.

\begin{lemma}\label{lem:sandwich-to-htre}
Let $\rho,\sigma\in\mathcal{S}\left(\mathcal{H}\right)$ be such that $\supp\rho\subseteq\supp\sigma$. For any $Q\in\B(\hil)$ such that $0\le Q\le I$, and 
any $\alpha>1$, 
\begin{align}\label{Nagaoka ineq}
-\log\Tr Q\sigma \le \widetilde D_{\alpha}(\rho\|\sigma)- \frac{\alpha}{\alpha-1}\log\Tr Q\rho.
\end{align}
In particular, for any $\alpha>1$ and any $\varepsilon\in\left(0,1\right)$,
\begin{equation}
D_{H}^{\varepsilon}(  \rho\Vert\sigma)  \leq\widetilde{D}_{\alpha
}(  \rho\Vert\sigma)  +\frac{\alpha}{\alpha-1}\log\left(  \frac
{1}{1-\varepsilon}\right)  . \label{eq:sandwich-to-htre}%
\end{equation}
\end{lemma}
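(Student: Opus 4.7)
The plan is to reduce \eqref{Nagaoka ineq} to the data-processing inequality for $\widetilde{D}_{\alpha}$ applied to a cleverly chosen two-outcome measurement channel, which reduces the problem to a classical Bernoulli Rényi divergence calculation that can be bounded by dropping one of the two summands. This is the standard Nagaoka-style argument suggested by the references cited in the statement.

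More precisely, I would first introduce the measurement channel
\[
\mathcal{M}(X)\equiv\Tr\{QX\}\pr{0}+\Tr\{(I-Q)X\}\pr{1},
\]
which is CPTP. Since $\alpha>1\ge 1/2$, the data-processing inequality recalled just before the lemma gives $\widetilde{D}_{\alpha}(\rho\Vert\sigma)\ge\widetilde{D}_{\alpha}(\mathcal{M}(\rho)\Vert\mathcal{M}(\sigma))$. Setting $p\equiv\Tr Q\rho$ and $q\equiv\Tr Q\sigma$, the output states are diagonal in the same basis, so the sandwiched Rényi relative entropy collapses to the classical binary expression
\[
\widetilde{D}_{\alpha}(\mathcal{M}(\rho)\Vert\mathcal{M}(\sigma))=\tfrac{1}{\alpha-1}\log\!\left(p^{\alpha}q^{1-\alpha}+(1-p)^{\alpha}(1-q)^{1-\alpha}\right).
\]
Since both summands on the right are non-negative and $\alpha-1>0$, I would discard the second summand to obtain the lower bound $\tfrac{1}{\alpha-1}\log(p^{\alpha}q^{1-\alpha})=\tfrac{\alpha}{\alpha-1}\log p-\log q$. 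Rearranging this chain of inequalities is exactly \eqref{Nagaoka ineq}.

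For the consequence \eqref{eq:sandwich-to-htre}, I would use that any $Q$ admissible in the definition of $\beta_{\varepsilon}(\rho\Vert\sigma)$ satisfies $\Tr Q\rho\ge 1-\varepsilon$; together with $\alpha/(\alpha-1)>0$ this yields $-\tfrac{\alpha}{\alpha-1}\log\Tr Q\rho\le \tfrac{\alpha}{\alpha-1}\log\tfrac{1}{1-\varepsilon}$. Plugging this into \eqref{Nagaoka ineq} and taking the supremum over admissible $Q$ on the left (equivalently, the infimum of $\Tr Q\sigma$, which defines $\beta_{\varepsilon}(\rho\Vert\sigma)$ and hence $D_{H}^{\varepsilon}$) delivers \eqref{eq:sandwich-to-htre}.

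There is no real obstacle here; the only thing to watch out for is the boundary behaviour of the logarithm. If $\Tr Q\sigma=0$, the support hypothesis $\supp\rho\subseteq\supp\sigma$ forces $\Tr Q\rho=0$ and both sides of \eqref{Nagaoka ineq} are $+\infty$; if $\Tr Q\rho=0<\Tr Q\sigma$, the right side is $+\infty$ while the left is finite, so the inequality is trivial. In all non-degenerate cases the argument above goes through verbatim. The step that does the real work is simply the data-processing inequality, which is quoted as an established fact.
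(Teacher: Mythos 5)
Your proof is correct and is essentially the paper's own argument: the paper's line $\widetilde{D}_{\alpha}(\rho\Vert\sigma)\geq\widetilde{D}_{\alpha}((p,1-p)\Vert(q,1-q))$ is exactly the data-processing inequality under the two-outcome measurement channel you wrote out explicitly, followed by the same drop-a-summand step and the same optimization over admissible $Q$. Your extra remarks on the boundary cases $\Tr Q\sigma=0$ or $\Tr Q\rho=0$ are a small but welcome addition that the paper leaves implicit.
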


\begin{proof}
Let $p    \equiv\text{Tr}\left\{  Q\rho\right\}$ and $q  \equiv\text{Tr}\left\{  Q\sigma\right\}$.
By the monotonicity of the sandwiched R\'{e}nyi relative entropy for
$\alpha>1$, we find that%
\begin{align}
\widetilde{D}_{\alpha}(  \rho\Vert\sigma)   &  \geq\widetilde
{D}_{\alpha}\left(  \left(  p,1-p\right)  \Vert\left(  q,1-q\right)  \right)
\label{eq:sandwich-to-p-and-q-1}\\
&  =\frac{1}{\alpha-1}\log\left[  p^{\alpha}q^{1-\alpha}+\left(  1-p\right)
^{\alpha}\left(  1-q\right)  ^{1-\alpha}\right] \\
&  \geq\frac{1}{\alpha-1}\log\left[  p^{\alpha}q^{1-\alpha}\right] \\
&  =\frac{\alpha}{\alpha-1}\log p-\log q,
\label{eq:sandwich-to-p-and-q-last}%
\end{align}
from which \eqref{Nagaoka ineq} follows.
The statement in 
\eqref{eq:sandwich-to-htre} follows by
optimizing over all $Q$ such that $\text{Tr}\left\{  Q\rho\right\}  \geq1-\varepsilon$.
\end{proof}
\bigskip

Recall the definition of the channel R\'enyi relative entropies
in \eqref{channel Renyi1}--\eqref{channel Renyi2}.
Let $A'$ be a copy of $A$, let $e_1,\ldots,e_d$ be an orthonormal basis in $A$, and define
$\ket{\Gamma_{A'A}}\equiv\ket{\sum_{i=1}^d e_i\otimes e_i}$, 
$\Gamma_{A'A}\equiv\pr{\Gamma_{A'A}}$. Then we have the following:

\begin{lemma}\label{lemma:channel Renyi0}
Let $A'$ be a copy of $A$.
For any system $R$ and any pure state $\psi_{RA}$, there exists a state $\rho_{A'}$ on $A'$ such that for any two channels
$\N_1,\N_2$ from $A$ to some system $B$, and any $\alpha>0$,
we have 
\begin{align}
\widetilde D_{\alpha}({\mathcal{N}}_{1}\Vert{\mathcal{N}}_{2})
&=
\widetilde D_{\alpha}\!\bz\N_1\bz\rho_{A'}^{1/2}\pr{\Gamma_{A'A}}\rho_{A'}^{1/2}\jz
\Big\|\N_2\bz\rho_{A'}^{1/2}\pr{\Gamma_{A'A}}\rho_{A'}^{1/2}\jz\jz\label{channel Renyi6}\\
&=
\widetilde D_{\alpha}\!\bz\rho_{A'}^{1/2}\N_1(\Gamma_{A'A})\rho_{A'}^{1/2}\Big\|\rho_{A'}^{1/2}\N_2(\Gamma_{A'A})\rho_{A'}^{1/2}\jz.\label{channel Renyi7}
\end{align}
Moreover, the same identities hold for $D_{\alpha}$.
\end{lemma}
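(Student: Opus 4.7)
The plan is to use the Schmidt decomposition to reduce an arbitrary reference system $R$ to the canonical choice $R\cong A'$, and then to invoke the isometric invariance of $\widetilde D_\alpha$ (and of $D_\alpha$). Throughout, I interpret the left-hand side of the displayed identities as the sandwiched R\'enyi divergence $\widetilde D_\alpha(\N_1(\psi_{RA})\|\N_2(\psi_{RA}))$ associated with the particular test state $\psi_{RA}$.

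First, write the Schmidt decomposition $\ket{\psi_{RA}}=\sum_i\sqrt{\lambda_i}\,\ket{u_i}_R\otimes\ket{v_i}_A$ with $\lambda_i>0$. Identify $A'$ with $A$, extend $\{\ket{v_i}\}$ to an orthonormal basis $\{\ket{e_j}\}$ of $A$, and use this same basis to specify $\ket{\Gamma_{A'A}}=\sum_j\ket{e_j}_{A'}\otimes\ket{e_j}_A$, so that $\ket{e_i}_A=\ket{v_i}_A$ for $i\leq\text{rank}\,\psi_A$. Define
\begin{align*}
\rho_{A'}:=\sum_i \lambda_i\,\ket{\bar v_i}\bra{\bar v_i}_{A'},
\end{align*}
where the bar denotes complex conjugation in the basis $\{\ket{e_j}\}$. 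A short calculation then gives
\begin{align*}
\ket{\phi_{A'A}}:=\bz\rho_{A'}^{1/2}\otimes I_A\jz\ket{\Gamma_{A'A}}=\sum_i\sqrt{\lambda_i}\,\ket{\bar v_i}_{A'}\otimes\ket{v_i}_A,
\end{align*}
so $\phi_{A'A}$ has the same Schmidt coefficients as $\psi_{RA}$ and identical reduced state on $A$.

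Next, let $V:A'\to R$ be the partial isometry $V=\sum_i\ket{u_i}_R\bra{\bar v_i}_{A'}$ (extended to a full isometry on the orthogonal complement of $\supp\rho_{A'}$ when $\dim R\geq\dim A'$; otherwise used as a partial isometry). By construction $(V\otimes I_A)\ket{\phi_{A'A}}=\ket{\psi_{RA}}$, and since the channels act only on $A$,
\begin{align*}
\N_j(\psi_{RA})=(V\otimes I_B)\,\N_j(\phi_{A'A})\,(V^*\otimes I_B),\qquad j=1,2.
\end{align*}
Unpacking the definition \eqref{eq:def-sandwiched} of $\widetilde D_\alpha$ and using $V^*V=P_{\supp\rho_{A'}}$ together with the fact that $V(\cdot)V^*$ preserves non-zero spectrum, one verifies directly that $\widetilde D_\alpha$ is invariant under such conjugation; this gives \eqref{channel Renyi6}. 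The identity \eqref{channel Renyi7} is then immediate because $\rho_{A'}^{1/2}$ is supported on $A'$ while $\N_j$ acts on $A$, so they pass through one another under the tensor-product action:
\begin{align*}
\N_j\bz\rho_{A'}^{1/2}\pr{\Gamma_{A'A}}\rho_{A'}^{1/2}\jz=\rho_{A'}^{1/2}\,\N_j(\pr{\Gamma_{A'A}})\,\rho_{A'}^{1/2}.
\end{align*}

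The same argument transfers verbatim to the Petz divergence $D_\alpha$, which is likewise invariant under isometric conjugation by a direct computation from \eqref{eq:Renyi-rel-ent}. No serious obstacle is anticipated; the only subtlety is the bookkeeping that aligns the bases so that $\ket{\phi_{A'A}}$ matches $\ket{\psi_{RA}}$ Schmidt-coefficient-for-Schmidt-coefficient, after which everything reduces to the isometric invariance of the two R\'enyi divergences.
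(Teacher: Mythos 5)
Your proof is correct and is essentially the paper's argument in disguise: the Schmidt decomposition $\ket{\psi_{RA}}=\sum_i\sqrt{\lambda_i}\ket{u_i}\otimes\ket{v_i}$ is the explicit form of the polar decomposition $X=V|X|$ of the transfer operator $X\in\B(\hil_{A'},\hil_R)$ defined by $\ket{\psi_{RA}}=(X\otimes I_A)\ket{\Gamma_{A'A}}$, with $\rho_{A'}=|X|^2=X^*X=\sum_i\lambda_i\pr{\bar v_i}$ and $V=\sum_i\diad{u_i}{\bar v_i}$, after which both proofs conclude by the same invariance of $\wt D_\alpha$ (and $D_\alpha$) under conjugation by a partial isometry whose initial projection dominates the relevant supports. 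Your re-specification of $\ket{\Gamma_{A'A}}$ in a $\psi$-dependent basis is unnecessary (the complex-conjugation bookkeeping already handles the fixed $\Gamma_{A'A}$ of the lemma, and with your chosen basis the bars act trivially), but it is harmless.
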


\noindent We give a proof of Lemma~\ref{lemma:channel Renyi0} in Appendix~\ref{app:CB}.

\begin{lemma}\label{lemma:channel Renyi}
Let $\N_1,\N_2$ be quantum channels from system $A$ to system $B$.
For every $\alpha\in[1/2,+\infty)$, the channel R\'enyi relative entropies can be written as
\begin{align}
\widetilde D_{\alpha}({\mathcal{N}}_{1}\Vert{\mathcal{N}}_{2})  
& =
\sup\left\{\widetilde D_{\alpha}({\mathcal{N}}_{1}(\psi_{RA})\|{\mathcal{N}}_{2}(\psi_{RA})):\,\psi_{RA}\s\text{state on $RA$, where $R$ is arbitrary}\ds\right\}\label{channel Renyi3}\\
&=
\sup\left\{\widetilde D_{\alpha}({\mathcal{N}}_{1}(\psi_{RA})\|{\mathcal{N}}_{2}(\psi_{RA})):\,\psi_{RA}\s\text{pure state on $RA$, where $R\cong A$}\ds\right\}\label{channel Renyi4}\\
&=
\sup\left\{\widetilde D_{\alpha}\!\bz\rho_{A'}^{1/2}\N_1(\Gamma_{A'A})\rho_{A'}^{1/2}\big\|\rho_{A'}^{1/2}\N_2(\Gamma_{A'A})\rho_{A'}^{1/2}\jz:\,\rho_{A'}\s\text{state on $A'$, where $A'\cong A$}\ds\right\}.\label{channel Renyi5}
\end{align}
Analogous formulas hold for $D_{\alpha}({\mathcal{N}}_{1}\Vert{\mathcal{N}}_{2})$
in \eqref{channel Renyi1} and $\alpha\in[0,2]$.
\end{lemma}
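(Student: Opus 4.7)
The plan is to prove the three equalities in \eqref{channel Renyi3}--\eqref{channel Renyi5} by a short chain of reductions that uses only two ingredients: the data-processing inequality (DPI) for $\widetilde D_\alpha$, valid for $\alpha\in[1/2,\infty)$, and Lemma~\ref{lemma:channel Renyi0}. The identity \eqref{channel Renyi3} will just be the definition recorded in \eqref{channel Renyi2}. The inequality \eqref{channel Renyi4}$\le$\eqref{channel Renyi3} is immediate, since pure states on $RA$ with $R\cong A$ form a subclass of general states with arbitrary ancilla $R$.

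For the reverse inequality \eqref{channel Renyi3}$\le$\eqref{channel Renyi4}, I would first purify any $\psi_{RA}$ to a pure $\hat\psi_{R'RA}$ and observe that $\Tr_{R'}$ commutes with $\N_i$, since $\N_i$ acts only on $A$. Applying DPI under $\Tr_{R'}$ yields
\[
\widetilde D_\alpha(\N_1(\psi_{RA})\|\N_2(\psi_{RA})) \;\le\; \widetilde D_\alpha(\N_1(\hat\psi_{R'RA})\|\N_2(\hat\psi_{R'RA})),
\]
so pure states on $R'RA$ suffice. Next, a Schmidt decomposition of $\hat\psi$ writes $|\hat\psi\rangle_{R'RA}=(V_{A'\to R'R}\otimes I_A)|\phi\rangle_{A'A}$ for some isometry $V$ and a pure $\phi_{A'A}$ with $A'\cong A$. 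Isometric conjugation preserves $\widetilde D_\alpha$ (apply DPI both to $V\cdot V^*$ and to any CPTP left-inverse on its image), and $V$ commutes with $\N_i$; this brings the optimization down to pure states with $R\cong A$ and establishes \eqref{channel Renyi4}.

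For \eqref{channel Renyi4}$=$\eqref{channel Renyi5} I would invoke Lemma~\ref{lemma:channel Renyi0}. The canonical states $\rho_{A'}^{1/2}|\Gamma_{A'A}\rangle\langle\Gamma_{A'A}|\rho_{A'}^{1/2}=(\rho_{A'}^{1/2}\otimes I_A)\,\Gamma_{A'A}\,(\rho_{A'}^{1/2}\otimes I_A)$ are themselves normalized pure states on $A'A$ with $A'\cong A$, so they sit inside the class ranged over in \eqref{channel Renyi4}, giving \eqref{channel Renyi5}$\le$\eqref{channel Renyi4}. Conversely, Lemma~\ref{lemma:channel Renyi0} produces, for each pure $\psi_{A'A}$, a state $\rho_{A'}$ making the two state-level R\'enyi divergences agree, yielding the reverse inequality. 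The same strategy transfers verbatim to $D_\alpha$ on $\alpha\in[0,2]$, replacing the DPI for $\widetilde D_\alpha$ by Petz's DPI for $D_\alpha$ in that range and using the analogous identities for $D_\alpha$ recorded at the end of Lemma~\ref{lemma:channel Renyi0}. The one mildly delicate step is the isometric reduction to $R\cong A$, which relies on the two-sided isometric invariance of the R\'enyi divergences; everything else is bookkeeping.
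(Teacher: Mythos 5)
Your argument is correct, but it follows a different route from the paper's. The paper's proof is a two-liner: it invokes joint \emph{quasi-convexity} of $\widetilde D_\alpha$ (from Frank--Lieb, valid for $\alpha\in[1/2,\infty)$) and of $D_\alpha$ (Ando/Lieb/Petz, $\alpha\in[0,2]$) to restrict the optimization in \eqref{channel Renyi1}--\eqref{channel Renyi2} to pure states, and then declares the rest immediate from Lemma~\ref{lemma:channel Renyi0}. You instead reach pure states by purifying $\psi_{RA}$ and applying DPI under the partial trace $\Tr_{R'}$, which commutes with $\id_R\otimes\N_i$; this is a valid substitute that uses a weaker ingredient (monotonicity rather than quasi-convexity), at the cost of one extra system. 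Your subsequent Schmidt-decomposition step, reducing the ancilla to a copy of $A$ via an isometry $V$ and invoking two-sided isometric invariance of $\widetilde D_\alpha$, is also correct (the invariance follows directly from the convention of taking powers on the support, or from DPI applied in both directions as you note), but it overlaps with what the proof of Lemma~\ref{lemma:channel Renyi0} already does internally: that lemma is stated for pure states with an \emph{arbitrary} ancilla $R$ and produces the canonical $\rho_{A'}$ directly, so \eqref{channel Renyi4} and \eqref{channel Renyi5} drop out of a single application once pure states suffice. In short, your proof is sound, a bit more hands-on, and avoids citing the quasi-convexity result; the paper's is shorter but leans on a stronger convexity property. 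Both correctly track the $\alpha$-ranges ($[1/2,\infty)$ for $\widetilde D_\alpha$, $[0,2]$ for $D_\alpha$).
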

\begin{proof}
According to \cite{FL13}, $\wt D_{\alpha}$ is jointly quasi-convex for $\alpha\in[1/2,+\infty)$, 
and by \cite{Ando,Lieb-convexity,P86}, the same holds for $D_{\alpha}$ and $\alpha\in[0,2]$. Hence, the optimizations in 
 \eqref{channel Renyi1}--\eqref{channel Renyi2} can be restricted to pure states, and the rest of the proof 
 is immediate from Lemma \ref{lemma:channel Renyi0}.
\end{proof}
\medskip

When the second channel is a replacer channel, the sandwiched channel R\'enyi relative entropy has a special representation as explained below. 
This will be key to our approach of obtaining strong converse bounds.

A quantum channel $\N_{A\to B}$ induces a map from $L_{1}(\mathcal{B}({\mathcal{H}}%
_{A}))\rightarrow L_{\alpha}(\mathcal{B}({\mathcal{H}}_{B}))$, where
$L_{\alpha}(\mathcal{B}({\mathcal{H}})))$ denotes the space $\mathcal{B}%
({\mathcal{H}})$ together with the Schatten $\alpha$-norm $\Vert
X\Vert_{\alpha}$. The space $L_{\alpha}(\mathcal{B}({\mathcal{H}}))$ has a
canonical operator space structure \cite{P}, a certain sequence of norms on
the spaces $M_{n}(L_{\alpha}(\mathcal{B}({\mathcal{H}})))$:
\begin{equation}
\Vert Y\Vert_{M_{n}(L_{\alpha}(\mathcal{B}({\mathcal{H}})))}\equiv\sup_{A,B\in
M_{n}}\frac{\Vert(A\otimes I_{{\mathcal{H}}})Y(B\otimes I_{{\mathcal{H}}%
})\Vert_{\alpha}}{\Vert A\Vert_{2\alpha}\Vert B\Vert_{2\alpha}}.
\end{equation}
One can then define the completely bounded $(1\rightarrow\alpha)$-norm of
${\mathcal{N}}:L_{1}(\mathcal{B}({\mathcal{H}}_{A}))\rightarrow L_{\alpha
}(\mathcal{B}({\mathcal{H}}_{B}))$ as
\begin{equation}
\sup_{n}\left\Vert \id_{n}\otimes{\mathcal{N}}\right\Vert _{1\rightarrow
\alpha}\equiv\sup_{n}\sup_{Y}\frac{\Vert(\id_{n}\otimes{\mathcal{N}}%
)(Y)\Vert_{M_{n}(L_{\alpha}(\mathcal{B}({\mathcal{H}}_{B})))}}{\Vert
Y\Vert_{M_{n}(L_{1}(\mathcal{B}({\mathcal{H}}_{A})))}}.
\end{equation}
For our purposes, it will be more useful to write the completely bounded
$(1\rightarrow\alpha)$-norm of a quantum channel ${\mathcal{N}}$ as%
\begin{equation}
\Vert{\mathcal{N}}\Vert_{\text{CB},1\rightarrow\alpha}
=
\sup_{X\in \B(\hil_{A'}\otimes\hil_A)_+}\frac{\Vert(\id\otimes{\mathcal{N}%
})(X)\Vert_{\alpha}}{\Vert\Tr_{A}\{X\}\Vert_{\alpha}}
=
\sup_{|\psi\rangle
\in \hil_{A'}\otimes \hil_A}\frac{\Vert(\id\otimes{\mathcal{N}}%
)(|\psi\rangle\langle\psi|)\Vert_{\alpha}}{\Vert\Tr_{A}\{|\psi\rangle
\langle\psi|\}\Vert_{\alpha}}, \label{cb1toalpha}%
\end{equation}
where $A'$ is any system with dimension at least that of $A$; in particular, $A'$ can be taken to be a fixed copy of $A$.
This follows from \cite{DJKR} and
Eq.~(8) of \cite{Jencova}, where these norms have already been considered in
the context of quantum information theory. 
The above representation of the completely bounded
$(1\rightarrow\alpha)$-norm will
prove useful later due to the following connection between the sandwiched
R\'{e}nyi relative entropy and the Schatten $\alpha$-norm:%
\begin{equation}
\widetilde{D}_{\alpha}(  \rho\Vert\sigma)  =\frac{\alpha}{\alpha
-1}\log\left\Vert \sigma^{\frac{1-\alpha}{2\alpha}}\rho\sigma^{\frac{1-\alpha
}{2\alpha}}\right\Vert _{\alpha}, \label{entropytonorm}%
\end{equation}
where $\rho$ and $\sigma$ are density operators. Throughout, $\Theta_{X}$
denotes the map
\begin{equation}
\Theta_{X}(Y)\equiv X^{1/2}YX^{1/2},
\end{equation}
where $X$ is a positive operator. 

The following lemma is from \cite{GW}; for readers' convenience, we give a detailed proof in Appendix~\ref{app:CB}.

\begin{lemma}\label{lemma:CB}
Let $\N=\N_{A\to B}$ be a quantum channel and $\R_{\sigma_B}(\cdot)\equiv\Tr\{\cdot\}\sigma_B$ be a replacer channel with some fixed state $\sigma_B$.
For every $\alpha\in(1,+\infty)$,
\begin{equation}
\wt D_{\alpha}(\N\|\R_{\sigma_B})
=
\sup_{\psi_{RA}}\widetilde{D}_{\alpha}\left(  \mathcal{N}_{A\rightarrow
B}(\psi_{RA})\Vert\psi_{R}\otimes\sigma_{B}\right)  
=
\frac{\alpha}{\alpha
-1}\log\left\Vert \Theta_{\sigma_{B}^{\frac{1-\alpha}{\alpha}}}\circ
{\mathcal{N}}\right\Vert _{\operatorname{CB},1\rightarrow\alpha}.%
\label{lemmanorm}%
\end{equation}
\end{lemma}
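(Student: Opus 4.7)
The first equality is immediate from the definition \eqref{channel Renyi2}: since $(\id_R \otimes \R_{\sigma_B})(\psi_{RA}) = \psi_R \otimes \sigma_B$, one has $\widetilde{D}_\alpha(\N\|\R_{\sigma_B}) = \sup_{\psi_{RA}} \widetilde{D}_\alpha(\N(\psi_{RA})\,\|\,\psi_R \otimes \sigma_B)$ directly. The substantive task is then to identify this quantity with $\tfrac{\alpha}{\alpha-1}\log\Vert\Theta_{\sigma_B^{(1-\alpha)/\alpha}}\circ\N\Vert_{\operatorname{CB},1\to\alpha}$, and the plan is to reduce matters to the pure-state formulation of the completely bounded $(1\to\alpha)$-norm given in \eqref{cb1toalpha}.

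By Lemma \ref{lemma:channel Renyi}, the supremum above is attained on pure states $\psi_{RA}$ with $R\cong A$. Any such pure state can be written, up to a local unitary on $R$ that affects neither Schatten norms nor reduced states, as $|\psi\rangle_{RA} = (\rho_R^{1/2}\otimes I_A)|\Gamma\rangle_{RA}$ for some state $\rho_R$ on $R$, whence $\psi_R = \rho_R$ and $\N(\psi_{RA}) = (\rho_R^{1/2}\otimes I_B)\,\N(\Gamma_{RA})\,(\rho_R^{1/2}\otimes I_B)$. Applying \eqref{entropytonorm} to $\widetilde{D}_\alpha(\N(\psi_{RA})\,\|\,\rho_R\otimes\sigma_B)$ produces a $\log\Vert\cdot\Vert_\alpha$ expression with nested sandwich factors; the key arithmetic is to split $(\rho_R\otimes\sigma_B)^{(1-\alpha)/(2\alpha)} = \rho_R^{(1-\alpha)/(2\alpha)}\otimes\sigma_B^{(1-\alpha)/(2\alpha)}$, merge the $\rho_R$-powers on each side using $\tfrac{1-\alpha}{2\alpha}+\tfrac{1}{2} = \tfrac{1}{2\alpha}$, and recognize the remaining $I_R\otimes\sigma_B^{(1-\alpha)/(2\alpha)}$-sandwich as the action of $\Theta_{\sigma_B^{(1-\alpha)/\alpha}}$ on the $B$ system. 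This yields
\begin{equation*}
\widetilde{D}_\alpha(\N(\psi_{RA})\,\|\,\rho_R\otimes\sigma_B) = \tfrac{\alpha}{\alpha-1}\log\left\Vert(\rho_R^{1/(2\alpha)}\otimes I_B)\,Y(\Gamma_{RA})\,(\rho_R^{1/(2\alpha)}\otimes I_B)\right\Vert_\alpha,
\end{equation*}
where $Y := \Theta_{\sigma_B^{(1-\alpha)/\alpha}}\circ\N$.

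The final step is to substitute $\tau_R := \rho_R^{1/\alpha}$. As $\rho_R$ ranges over density operators, $\tau_R$ ranges over positive operators satisfying $\Vert\tau_R\Vert_\alpha = (\Tr\rho_R)^{1/\alpha} = 1$. Since both the numerator $\Vert(\tau_R^{1/2}\otimes I_B)\,Y(\Gamma_{RA})\,(\tau_R^{1/2}\otimes I_B)\Vert_\alpha$ and $\Vert\tau_R\Vert_\alpha$ are homogeneous of degree one in $\tau_R$, the constrained supremum coincides with the supremum of the ratio over all nonzero $\tau_R \ge 0$. A pure state $|\phi\rangle\in\hil_R\otimes\hil_A$ with $R\cong A$ parameterized as $(\tau_R^{1/2}\otimes I_A)|\Gamma\rangle$ satisfies $\Tr_A|\phi\rangle\langle\phi| = \tau_R$ and $(\id\otimes Y)(|\phi\rangle\langle\phi|) = (\tau_R^{1/2}\otimes I_B)\,Y(\Gamma_{RA})\,(\tau_R^{1/2}\otimes I_B)$; the pure-state expression \eqref{cb1toalpha} therefore identifies this supremum with $\Vert Y\Vert_{\operatorname{CB},1\to\alpha}$, which is what was to be shown.

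The main obstacle is the exponent bookkeeping in the middle paragraph: one must carefully distribute $(1-\alpha)/(2\alpha)$ across the tensor product, absorb it into the $\rho_R^{1/2}$ factors arising from the $|\Gamma\rangle$-parameterization, and isolate $\Theta_{\sigma_B^{(1-\alpha)/\alpha}}$ so that the subsequent substitution $\tau_R = \rho_R^{1/\alpha}$ plugs directly into \eqref{cb1toalpha}. Once the exponents line up correctly, the rest is an unpacking of definitions.
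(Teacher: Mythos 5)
Your proof is correct and follows essentially the same route as the paper's: parameterize the pure input state via the maximally entangled vector $|\Gamma\rangle$, convert the sandwiched R\'enyi divergence to a Schatten $\alpha$-norm using \eqref{entropytonorm}, collect the $\rho_R$-powers to isolate $\Theta_{\sigma_B^{(1-\alpha)/\alpha}}\circ\N$ acting on $\Gamma_{RA}$, and then match the resulting constrained supremum with the ratio form of the CB norm in \eqref{cb1toalpha} by a homogeneity argument. The only (stylistic) differences are that you reduce to a positive $X=\rho_R^{1/2}$ up front by invoking unitary invariance of $\wt D_\alpha$, while the paper keeps a general $X$ and performs the polar decomposition in the middle of the exponent computation, and you organize the homogeneity step via the explicit substitution $\tau_R=\rho_R^{1/\alpha}$ rather than reintroducing an extra unitary and passing to an unrestricted $Z$.
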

\medskip

We will also use the following R\'enyi mutual information quantities, originally defined in \cite{WWY13,B13monotone,GW}. For every bipartite state $\rho_{RB}$, and every $\alpha\in(0,+\infty)$, let 
\begin{align}
I_{\alpha}(R;B)_\rho&\equiv\inf_{\sigma_B\in\S(\hil_B)}D_{\alpha}(\rho_{RB}\|\rho_R\otimes\sigma_B),\\
\wt I_{\alpha}(R;B)_\rho&\equiv\inf_{\sigma_B\in\S(\hil_B)}\wt D_{\alpha}(\rho_{RB}\|\rho_R\otimes\sigma_B).
\end{align}
These quantities appeared in the direct and strong converse exponents of \cite{HT}. We also define the channel R\'enyi mutual informations. For any CPTP map
$\N_{A\to B}:\,\B(\hil_A)\to\B(\hil_B)$, let
\begin{align}
I_{\alpha}(\N)&\equiv\sup_{\psi_{RA}\in\S(\hil_{RA})}I_{\alpha}(R;B)_{\omega},\label{channel div}\\
\wt I_{\alpha}(\N)&\equiv\sup_{\psi_{RA}\in\S(\hil_{RA})}\wt I_{\alpha}(R;B)_{\omega}\label{EA4},
\end{align}
where $\omega_{RB} \equiv \N_{A\to B}(\psi_{RA})$.

\begin{lemma}\label{lemma:channel div repr}
Let $A'$ be a copy of $A$. Then 
\begin{align}
I_{\alpha}(\N)&=\sup_{\rho_{R}\in\S(\hil_{R})}I_{\alpha}(R;B)_\omega,
\ds\ds\ds \alpha\in[0,2],\\
\wt I_{\alpha}(\N)&=\sup_{\rho_{R}\in\S(\hil_{R})}I_{\alpha}(R;B)_\omega,
\ds\ds\ds \alpha\in [1/2,+\infty),
\end{align}
where $\omega_{RB} \equiv \N_{A\to B}\!\left({\rho_{R}^{1/2}\pr{\Gamma_{RA}}\rho_{R}^{1/2}}\right)$.
\end{lemma}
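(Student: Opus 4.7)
The plan is to prove the nontrivial direction in each equation (LHS $\geq$ RHS is trivial, as states of the form $\rho_R^{1/2}\pr{\Gamma_{RA}}\rho_R^{1/2}$ with $R\cong A$ are a specific subclass of the bipartite states over which $I_{\alpha}(\N)$ and $\wt{I}_{\alpha}(\N)$ are defined). The argument for $I_\alpha$ and $\wt I_\alpha$ is identical; the stated ranges of $\alpha$ are precisely those for which the data-processing inequality (DPI) holds for $D_\alpha$ and $\wt D_\alpha$ respectively, while isometric/unitary invariance holds throughout.

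The first reduction is from arbitrary mixed states $\rho_{RA}$ on arbitrary reference systems $R$ to pure states. Given any $\rho_{RA}$, I purify it to $\ket{\psi_{R\bar R A}}$ on an enlarged reference $R\bar R$. Applying DPI to the partial trace over $\bar R$ gives, for every $\sigma_B$,
\begin{equation*}
D_{\alpha}\!\bz\N(\psi_{R\bar R A})\Big\|\psi_{R\bar R}\otimes\sigma_B\jz\ge D_{\alpha}\!\bz\N(\rho_{RA})\Big\|\rho_R\otimes\sigma_B\jz,
\end{equation*}
and taking the infimum over $\sigma_B$ shows $I_{\alpha}(R\bar R;B)_{\N(\psi)}\ge I_{\alpha}(R;B)_{\N(\rho)}$, with the analogue for $\wt I_\alpha$. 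Hence the supremum is attained on pure states.

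The second reduction restricts to pure states with $R\cong A$. For any pure $\ket{\psi_{R'A}}$, the Schmidt rank is at most $\dim A$, so by Schmidt decomposition $\ket{\psi_{R'A}} = (V\otimes I_A)\ket{\phi_{RA}}$ for some pure $\phi_{RA}$ with $R\cong A$ and some isometry $V$ between the two reference spaces (in whichever direction dimensions allow). Isometric invariance of $D_\alpha$ and $\wt D_\alpha$ in both arguments then yields $I_\alpha(R';B)_{\N(\psi)}=I_\alpha(R;B)_{\N(\phi)}$.

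Finally, I put any pure $\ket{\psi_{RA}}$ with $R\cong A$ into the desired form via the Choi--Jamiolkowski (vec) map: there is a unique operator $X_R$ on $R$ with $\ket{\psi_{RA}}=(X_R\otimes I_A)\ket{\Gamma_{RA}}$, and a short computation gives $\psi_R=\Tr_A\psi_{RA}=X_R X_R^*$. Let $X_R = W_R\,\rho_R^{1/2}$ be the left polar decomposition, where $\rho_R\equiv X_R^* X_R$ is a density operator (its trace equals $\Tr\, X_RX_R^*=\Tr\psi_R=1$) and $W_R$ is a partial isometry which I extend to a unitary on $\hil_R$ (this extension is harmless, as $\rho_R^{1/2}$ annihilates the extension's support). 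Consequently,
\begin{equation*}
\psi_{RA}=(W_R\otimes I_A)\bz\rho_R^{1/2}\pr{\Gamma_{RA}}\rho_R^{1/2}\jz(W_R^*\otimes I_A),
\end{equation*}
and since $W_R$ acts on the reference system it commutes with $\N_{A\to B}$ and with $\sigma_B$, giving $\N(\psi_{RA}) = (W_R\otimes I_B)\tilde\omega_{RB}(W_R^*\otimes I_B)$ with $\tilde\omega_{RB}\equiv\N(\rho_R^{1/2}\pr{\Gamma_{RA}}\rho_R^{1/2})$, while $\psi_R\otimes\sigma_B=(W_R\otimes I_B)(\tilde\omega_R\otimes\sigma_B)(W_R^*\otimes I_B)$. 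Unitary invariance of the R\'enyi divergences and taking $\inf_{\sigma_B}$ give $I_\alpha(R;B)_{\N(\psi)}=I_\alpha(R;B)_{\tilde\omega}$, completing the proof. The only subtle point is Step~3: ensuring $W_R$ can be chosen unitary when $\rho_R$ is rank deficient, which is handled by the standard extension of a partial isometry on a finite-dimensional Hilbert space.
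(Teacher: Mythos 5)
Your argument is correct and takes essentially the same route as the paper's: data processing under partial trace to pass to pure inputs, followed by the vec isomorphism and a polar decomposition to bring the pure state into the form $\rho_R^{1/2}\pr{\Gamma_{RA}}\rho_R^{1/2}$ up to an (extended) partial isometry, which is then discarded by unitary invariance. The paper simply packages your second and third steps into a citation of Lemma~\ref{lemma:channel Renyi0}, whose proof applies the vec map with $X\in\B(\hil_{A'},\hil_R)$ directly, so the intermediate Schmidt-decomposition reduction to $R\cong A$ that you insert is not needed there.
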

\begin{proof}
According to \cite{FL13}, $\wt D_{\alpha}$ is monotone non-increasing under partial trace for $\alpha\in[1/2,+\infty)$, 
and by \cite{P86}, the same holds for $D_{\alpha}$ and $\alpha\in[0,2]$. Hence, by taking purifications of 
$\psi_{RA}$ in \eqref{channel div} and \eqref{EA4}, the values can only increase. Thus, the optimizations in 
\eqref{channel div} and \eqref{EA4} can be restricted to pure states. Using Lemma \ref{lemma:channel Renyi0}
with $\N_1=\N$ and $\N_2=\R_{\sigma_B}$, the assertions follow.
\end{proof}
\medskip


Note that for $\alpha=1$, the above quantities are defined using the relative entropy $D=D_1$, and we have 
$I_{1}(R;B)_\rho=\wt I_{1}(R;B)_\rho\equiv  I(R;B)_\rho=D(\rho_{RB}\| \rho_R \otimes \rho_B)$,
and $I_{1}(\N)=\wt I_{1}(\N)=I(\N)$, where $I(\N)$ is defined in \eqref{IN}.
We will need the following extensions of \eqref{alpha mon}--\eqref{Renyi limit}:

\begin{lemma}\label{MI limits}
\begin{enumerate}
\item\label{channel div limit}
For any two channels $\N_1,\,\N_2$, $D_{\alpha}(\N_1\|\N_2)$ and $\wt D_{\alpha}(\N_1\|\N_2)$ are monotone increasing in $\alpha$, and
\begin{align}\label{eq:channel div limit}
\lim_{\alpha\to 1}\wt D_{\alpha}(\N_1\|\N_2)=
\lim_{\alpha\to 1} D_{\alpha}(\N_1\|\N_2)=
D(\N_1\|\N_2).
\end{align}

\item\label{mi limit}
For every bipartite state $\rho_{RB}$, $I_{\alpha}(R;B)_\rho$ and $\wt I_{\alpha}(R;B)_\rho$ are monotone increasing in $\alpha$, and 
\begin{align}
\lim_{\alpha\to 1}I_{\alpha}(R;B)_\rho=
\lim_{\alpha\to 1}\wt I_{\alpha}(R;B)_\rho=
I(R;B)_\rho.
\end{align}

\item\label{channel info limit}
For every channel $\N$, $I_{\alpha}(\N)$ and $\wt I_{\alpha}(\N)$ are monotone increasing in $\alpha$, and 
\begin{align}
\lim_{\alpha\to 1}I_{\alpha}(\N)=
\lim_{\alpha\to 1}\wt I_{\alpha}(\N)=
I(\N).
\end{align}
\end{enumerate}
\end{lemma}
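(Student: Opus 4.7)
My plan is to reduce all three parts to the pointwise statements \eqref{alpha mon} and \eqref{Renyi limit}. The monotonicity claims are immediate, since a supremum or infimum of a family of functions monotone in $\alpha$ is itself monotone. The lower bounds $\lim_{\alpha\to 1^+}\geq$ the $\alpha=1$ value in each of \ref{channel div limit}--\ref{channel info limit} are likewise immediate: for a fixed $\psi$ or $\sigma$, the pointwise R\'enyi quantity exceeds its $\alpha=1$ limit thanks to monotonicity, and this survives the outer optimization followed by $\alpha\to 1^+$.

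The content is the reverse bound $\lim_{\alpha\to 1^+}\leq$ the $\alpha=1$ value. For part \ref{mi limit} it is elementary: taking the feasible choice $\sigma_B=\rho_B$ in the infimum defining $\wt I_\alpha(R;B)_\rho$ gives
\begin{equation*}
\wt I_{\alpha}(R;B)_{\rho}\leq\wt D_{\alpha}(\rho_{RB}\|\rho_R\otimes\rho_B)\xrightarrow[\alpha\to 1^+]{}D(\rho_{RB}\|\rho_R\otimes\rho_B)=I(R;B)_{\rho},
\end{equation*}
and analogously for $I_{\alpha}$ via $D_{\alpha}$.

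For parts \ref{channel div limit} and \ref{channel info limit}, I would use Lemma~\ref{lemma:channel Renyi} (respectively Lemma~\ref{lemma:channel div repr}) to restrict the supremum to pure states $\psi_{RA}$ with $R\cong A$, which forms a compact set in finite dimensions. The case where the target divergence is $+\infty$ is trivial, so one may assume finiteness; this forces $\supp(\N_1(\psi))\subseteq\supp(\N_2(\psi))$ for every pure $\psi$ in that compact set, and hence makes $\psi\mapsto\wt D_{\alpha}(\N_1(\psi)\|\N_2(\psi))$ continuous there (and likewise $\psi\mapsto\wt I_{\alpha}(R;B)_{\N(\psi)}$ for part \ref{channel info limit}, using continuity of the infimum in $\sigma_B$ over a compact state space together with the pointwise convergence supplied by part \ref{mi limit}). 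Since $\wt D_{\alpha}\downarrow D$ monotonically and pointwise as $\alpha\downarrow 1^+$, Dini's theorem yields uniform convergence on the compact pure-state set, and the supremum therefore commutes with the limit. The case of $D_{\alpha}$ is handled identically, replacing $\wt D_{\alpha}$ by $D_{\alpha}$ throughout.

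The main obstacle I anticipate is the potential discontinuity of $\wt D_{\alpha}(\rho\|\sigma)$ when the support of $\sigma$ varies, since $\sigma^{(1-\alpha)/2\alpha}$ can blow up along sequences approaching a rank decrease. Splitting on the finiteness of the $\alpha=1$ quantity confines the analysis to the open region where $\supp\rho\subseteq\supp\sigma$ is preserved, precisely where joint continuity in $\alpha$ and the state holds and where Dini's theorem applies cleanly.
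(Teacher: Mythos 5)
Your route is genuinely different from the paper's and, as written, has a gap. The paper proves the hard directions ($\alpha\searrow 1$ in parts (i),(iii) and $\alpha\nearrow 1$ in part (ii)) via Lemma~\ref{lemma:minimax}, a minimax theorem requiring only \emph{lower semicontinuity} in the state variable, which is exactly the property recorded in \eqref{lsc}. You instead invoke Dini's theorem, which demands genuine \emph{continuity} of $\psi\mapsto\wt D_\alpha(\N_1(\psi)\|\N_2(\psi))$ and of the $\alpha=1$ limit on the compact set of purifications. Your justification for that continuity --- that finiteness ``confines the analysis to the open region where $\supp\rho\subseteq\supp\sigma$ is preserved'' --- does not hold up: that set is not open, and $(\rho,\sigma)\mapsto D(\rho\|\sigma)$ is not continuous on it (take $\rho_\epsilon=\mathrm{diag}(1-\epsilon,\epsilon)$, $\sigma_\epsilon=\mathrm{diag}(1-e^{-1/\epsilon},e^{-1/\epsilon})$: both have full support, yet $D(\rho_\epsilon\|\sigma_\epsilon)\to 1\neq 0=D(\pr{0}\|\pr{0})$). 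A repair is available --- finiteness of $D(\N_1\|\N_2)$ forces a Choi-matrix domination $J_{\N_1}\leq c\,J_{\N_2}$, hence $\N_1(\psi)\leq c\,\N_2(\psi)$ for all $\psi\geq 0$, which does control the logarithms and give continuity --- but that is a nontrivial step you would need to state and prove rather than slide under the word ``open.''

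Two further soft spots. You only treat $\alpha\to 1^+$, while the lemma asserts a two-sided limit; for part (ii) the genuinely hard direction is $\alpha\nearrow 1$ (interchanging $\sup_{\alpha<1}$ with $\inf_{\sigma_B}$), and your $\sigma_B=\rho_B$ feasibility choice yields an \emph{upper} bound when what is missing there is a matching lower bound. And for part (iii), $\psi\mapsto\wt I_\alpha(R;B)_{\N(\psi)}$ is an infimum of continuous functions and hence only \emph{upper} semicontinuous, so Dini does not apply as stated; the paper explicitly relies on upper semicontinuity here and invokes the minimax lemma once more. Your feasibility observation in part (ii) handles the easy direction $\alpha\searrow 1$ very cleanly and is a small simplification over the paper's double-infimum interchange, but the hard directions in all three parts still need either a Kneser--Fan-type interchange (as in Lemma~\ref{lemma:minimax}) or a fully spelled-out domination-and-Dini argument.
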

\begin{proof}
See Appendix~\ref{app:CB}.
\end{proof}
\medskip

The channel R\'enyi mutual informations also have the following geometric interpretation, as 
the ``distance'' of the channel from the set of all replacer channels, where the ``distance'' is 
measured by the channel R\'enyi divergences. 
See Section \ref{sec:extension} for the relevance of this geometric picture.
\begin{lemma}\label{lemma:geom int}
For every channel $\N_{A\to B}$, and every $\alpha\in[1/2,+\infty)$,
\begin{align}
\wt I_{\alpha}(\N)
&=
\inf_{\sigma_B\in\S(\hil_B)}\wt D_{\alpha}(\N\|\R_{\sigma_B}) \label{EA-1}.
\end{align}
\end{lemma}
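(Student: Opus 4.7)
The lemma is a minimax-type identity. With $f(\psi_{RA},\sigma_B)\equiv\wt D_{\alpha}(\N(\psi_{RA})\|\psi_R\otimes\sigma_B)$, one has $\wt I_{\alpha}(\N)=\sup_{\psi_{RA}}\inf_{\sigma_B}f$ by definition, whereas Lemma~\ref{lemma:CB} gives $\inf_{\sigma_B}\wt D_{\alpha}(\N\|\R_{\sigma_B})=\inf_{\sigma_B}\sup_{\psi_{RA}}f$. The inequality $\wt I_{\alpha}(\N)\le\inf_{\sigma_B}\wt D_{\alpha}(\N\|\R_{\sigma_B})$ is therefore an immediate consequence of the standard $\sup\inf\le\inf\sup$.

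For the reverse inequality, my plan is to invoke Sion's minimax theorem. By Lemmas~\ref{lemma:channel Renyi} and~\ref{lemma:channel div repr}, both suprema can be restricted to pure states of the form $\psi_{RA}=\rho_R^{1/2}\Gamma_{RA}\rho_R^{1/2}$ with $R\cong A$ and $\rho_R\in\S(\hil_R)$, so that both optimisation variables range over compact convex sets. The quasi-convexity and lower semicontinuity of $\sigma_B\mapsto f(\psi,\sigma_B)$ for fixed $\psi$ follow from the convexity of $\sigma\mapsto\wt D_\alpha(\rho\|\sigma)$ for $\alpha\in[1/2,\infty)$ (a consequence of the joint quasi-convexity from~\cite{FL13}), with continuity at the boundary handled by the standard perturbation $\sigma_B\mapsto(1-\eta)\sigma_B+\eta I_B/d_B$ followed by $\eta\to 0^+$.

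The main obstacle is verifying the required quasi-concavity (and upper semicontinuity) of $\rho_R\mapsto g(\rho_R,\sigma_B)\equiv f(\rho_R^{1/2}\Gamma_{RA}\rho_R^{1/2},\sigma_B)$ for fixed $\sigma_B$. Using the tensor-product factorisation $(\rho_R\otimes\sigma_B)^{(1-\alpha)/(2\alpha)}=\rho_R^{(1-\alpha)/(2\alpha)}\otimes\sigma_B^{(1-\alpha)/(2\alpha)}$ together with $\N(\rho_R^{1/2}\Gamma_{RA}\rho_R^{1/2})=\rho_R^{1/2}\N(\Gamma_{RA})\rho_R^{1/2}$, a short algebraic computation (combining powers of $\rho_R$ and applying the identity $\Tr[(A^{1/2}BA^{1/2})^\alpha]=\Tr[(B^{1/2}AB^{1/2})^\alpha]$) simplifies $g$ to
\begin{equation*}
g(\rho_R,\sigma_B) = \frac{\alpha}{\alpha-1}\log\left\|Z^{1/2}\rho_R^{1/\alpha}Z^{1/2}\right\|_{\alpha},
\end{equation*}
where $Z\equiv(I_R\otimes\sigma_B^{(1-\alpha)/(2\alpha)})\,\N(\Gamma_{RA})\,(I_R\otimes\sigma_B^{(1-\alpha)/(2\alpha)})$. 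I would then split into cases by $\alpha$: for $\alpha\in[1/2,1]$ the operator convexity of $\rho\mapsto\rho^{1/\alpha}$ combined with monotonicity and convexity of the Schatten norm on positive operators makes $\rho\mapsto\|Z^{1/2}\rho^{1/\alpha}Z^{1/2}\|_{\alpha}$ quasi-convex, which together with the negative prefactor $\tfrac{\alpha}{\alpha-1}<0$ yields quasi-concavity of $g$; for $\alpha>1$ the argument is more delicate, since operator concavity of $\rho\mapsto\rho^{1/\alpha}$ yields an operator inequality that does not by itself transfer to a scalar ordering of norms, and I would appeal to a joint-concavity result of Lieb--Epstein--Ando type applied to $\rho\mapsto\Tr[(Z^{1/2}\rho^{1/\alpha}Z^{1/2})^\alpha]$.

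Once the hypotheses of Sion's theorem are in place, we obtain $\sup_{\rho_R}\inf_{\sigma_B}g=\inf_{\sigma_B}\sup_{\rho_R}g$, which combined with Lemma~\ref{lemma:CB} and the pure-state reductions above yields~\eqref{EA-1}. The hardest step is the rigorous verification of quasi-concavity in $\rho_R$ for $\alpha>1$; should the direct operator-inequality argument prove elusive, an alternative is to use the explicit Sibson-type closed form for the inner infimum $\inf_{\sigma_B}\wt D_\alpha(\omega_{RB}\|\omega_R\otimes\sigma_B)$ and match the resulting supremum directly against the CB-norm expression for $\inf_{\sigma_B}\wt D_\alpha(\N\|\R_{\sigma_B})$ provided by Lemma~\ref{lemma:CB}.
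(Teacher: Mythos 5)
Your overall plan---reducing via Lemma~\ref{lemma:channel div repr} to a two-variable optimisation in $(\rho_R,\sigma_B)$ and swapping the order of $\sup$ and $\inf$ by a minimax theorem---is the same route the paper takes (the paper uses the Kneser--Fan theorem rather than Sion's, a minor difference). The easy direction $\wt I_\alpha(\N)\le\inf_{\sigma_B}\wt D_\alpha(\N\|\R_{\sigma_B})$ via $\sup\inf\le\inf\sup$ is fine, and your algebraic reduction
\begin{equation*}
g(\rho_R,\sigma_B) = \frac{\alpha}{\alpha-1}\log\left\|Z^{1/2}\bigl(\rho_R^{1/\alpha}\otimes I_B\bigr)Z^{1/2}\right\|_{\alpha},
\qquad
Z\equiv\bigl(I_R\otimes\sigma_B^{\frac{1-\alpha}{2\alpha}}\bigr)\N(\Gamma_{RA})\bigl(I_R\otimes\sigma_B^{\frac{1-\alpha}{2\alpha}}\bigr),
\end{equation*}
is correct and matches the form extracted from \cite[Lemma~3]{GW} in the paper's proof.

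The gap is in the verification of (quasi-)concavity in $\rho_R$. For $\alpha\in[1/2,1)$ you claim that operator convexity of $\rho\mapsto\rho^{1/\alpha}$ together with ``monotonicity and convexity of the Schatten norm on positive operators'' yields quasi-convexity of $\rho\mapsto\|Z^{1/2}\rho^{1/\alpha}Z^{1/2}\|_\alpha$, hence quasi-concavity of $g$ after the negative prefactor. But for $\alpha<1$ the Schatten $\alpha$-quasinorm is \emph{concave}, not convex, on $\B(\hil)_+$ (reverse Minkowski / Rotfel'd), so composing it with the operator-convex map $\rho\mapsto\rho^{1/\alpha}$ gives no conclusion in either direction, and the argument collapses. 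The required concavity in the range $\alpha\in[1/2,1)$ is \emph{not} an elementary consequence of convexity of the outer functional; it is exactly the Lieb/Epstein/Carlen--Lieb-type joint trace concavity that you acknowledge as necessary only for $\alpha>1$. The paper invokes \cite[Theorem~1.1]{CL} uniformly for both $\alpha\in[1/2,1)$ and $\alpha>1$ to get concavity of $\rho_{\hat A}\mapsto\wt Q_\alpha(\rho_{\hat A},\sigma_B)$; you should do the same in both regimes rather than treating $\alpha<1$ by elementary means. You also leave the $\alpha>1$ concavity as a ``plan'' without naming the precise theorem, and the singular point $\alpha=1$ (where $\alpha/(\alpha-1)$ diverges) needs a separate limiting argument, which the paper supplies via Lemma~\ref{MI limits}.
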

\begin{proof}
See Appendix~\ref{app:CB}.
\end{proof}

\section{The strong converse theorem for adaptive quantum channel
discrimination}

\subsection{Quantum Stein's lemma for adaptive channel discrimination}

\label{sec:Stein proof}

This section provides a proof of Theorem~\ref{stein}. In the setting of this
theorem, we seek to distinguish between an arbitrary quantum channel
${\mathcal{N}}$ and a \textquotedblleft replacer\textquotedblright\ channel
${\mathcal{R}}$ that maps all states $\omega_A$ to a fixed state $\sigma$, i.e.,
${\mathcal{R}}(\omega_{A})=\Tr\{\omega_{A}\}\sigma_{B}$. We allow the
preparation of an arbitrary input state $\rho_{R_{1}A_{1}}=\tau_{R_{1}A_{1}}$,
where $R_{1}$ is an ancillary register. The $i$th use of a channel
accepts the register $A_{i}$ as input and produces the register $B_{i}$ as
output. After each invocation of the channel, an adaptive operation
$\mathcal{A}^{(i)}$ is applied to the registers $R_{i}$ and $B_{i}$, yielding
a quantum state $\rho_{R_{i+1}A_{i+1}}$ or $\tau_{R_{i+1}A_{i+1}}$ in
registers $R_{i+1}A_{i+1}$, depending on whether the channel is equal to
${\mathcal{N}}$ or ${\mathcal{R}}$. That is,
\begin{align}
\rho_{R_{i+1}A_{i+1}} & \equiv\mathcal{A}_{R_{i}B_{i}\rightarrow
R_{i+1}A_{i+1}}^{\left( i\right)  }( \rho_{R_{i}B_{i}}) , &  &
\rho_{R_{i}B_{i}}\equiv\mathcal{N}_{A_{i}\rightarrow B_{i}}( \rho
_{R_{i}A_{i}}) \label{eq:rho-adaptive}\\
\tau_{R_{i+1}A_{i+1}} & \equiv\mathcal{A}_{R_{i}B_{i}\rightarrow
R_{i+1}A_{i+1}}^{\left( i\right)  }( \tau_{R_{i}B_{i}}) , &  &
\tau_{R_{i}B_{i}}\equiv\mathcal{R}_{A_{i}\rightarrow B_{i}}( \tau
_{R_{i}A_{i}}) \label{eq:tau-adaptive}%
\end{align}
for every $1\le i<n$ on the left-hand side, and for every $1\le i\le n$ on the
right-hand side. Finally, a quantum measurement $\{Q_{R_{n}B_{n}}%
,I_{R_{n}B_{n}}-Q_{R_{n}B_{n}}\}$ is performed on the systems $R_{n}B_{n}$ to
decide which channel was applied. Such a general protocol is depicted in
Figures \ref{FigureChannelDiscrim1} and \ref{FigureChannelDiscrim2}. Note that
since $\mathcal{R}$ is a replacer channel, we can write%
\begin{equation}
\tau_{R_{i}B_{i}}=\tau_{R_{i}}\otimes\sigma_{B_{i}}%
,\mbox{ }\mbox{ }\mbox{ }\mbox{ }\mbox{ }\mbox{ } 1\le i\le n.
\end{equation}

Recall the hypothesis testing relative entropy $D_{H}^{\varepsilon}(\rho\Vert\sigma)  $ from (\ref{eq:htre}) and the \textquotedblleft
adaptive hypothesis testing relative entropy\textquotedblright$D_{H,\ad}%
^{\varepsilon}(  \mathcal{N}^{\otimes n}\Vert\mathcal{R}^{\otimes
n})  $ from \eqref{eq:adaptive-htre}. So $D_{H}^{\varepsilon}(
\mathcal{N}_{A_{n}\rightarrow B_{n}}(  \rho_{R_{n}A_{n}})
\Vert\tau_{R_{n}}\otimes\sigma_{B_{n}})  $ denotes the hypothesis
testing relative entropy in which there is a fixed initial state $\rho
_{R_{1}A_{1}}$ and fixed adaptive maps $\{ \mathcal{A}_{R_{i}B_{i}\rightarrow
R_{i+1}A_{i+1}}^{\left(  i\right)  }\} _{i\in\left\{  1,\ldots,n-1\right\}  }$.

Clearly, we have that%
\begin{equation}
\liminf_{n\rightarrow\infty}\frac{1}{n}D_{H,\ad}^{\varepsilon}(
\mathcal{N}^{\otimes n}\Vert\mathcal{R}^{\otimes n})  \geq
\sup_{\psi_{RA}}D(  \mathcal{N}_{A\rightarrow B}(  \psi_{RA})
\Vert\psi_{R}\otimes\sigma_{B})  ,
\label{eq:EA-stein-lower-bound}%
\end{equation}
by employing a tensor-power strategy with no adaptation (i.e., we can simply
invoke the direct part of the usual quantum Stein's lemma). In more detail,
the initial state of this strategy is the optimal $\psi_{RA}$ in
(\ref{eq:EA-stein-lower-bound}) and each map $\mathcal{A}_{R_{i}%
B_{i}\rightarrow R_{i+1}A_{i+1}}^{\left(  i\right)  }$ simply prepares the
state $\psi_{RA}$ at the input of the $i+1$st channel while acting
as the identity map on the $i$ states $(\mathcal{N}_{A\rightarrow B}(
\psi_{RA})  )^{\otimes i}$ or $(\mathcal{R}_{A\rightarrow B}(
\psi_{RA})  )^{\otimes i}$ (so the strategy is non-adaptive). After the
$n$th channel has acted, the discriminator performs a binary
collective measurement on the state $(\mathcal{N}_{A\rightarrow B}(
\psi_{RA})  )^{\otimes n}$ or $(\mathcal{R}_{A\rightarrow B}(
\psi_{RA})  )^{\otimes n}$ to decide which channel was applied. So the
lower bound in (\ref{eq:EA-stein-lower-bound}) follows directly from the state
discrimination result in~(\ref{eq:steins-htre}).

The more interesting part is to show that this strategy is asymptotically
optimal, i.e., that%
\begin{equation}
\limsup_{n\rightarrow\infty}\frac{1}{n}D_{H,\ad}^{\varepsilon}(
\mathcal{N}^{\otimes n}\Vert\mathcal{R}^{\otimes n})  \leq
\sup_{\psi_{RA}}D(  \mathcal{N}_{A\rightarrow B}(  \psi_{RA})
\Vert\psi_{R}\otimes\sigma_{B})  . \label{eq:EA-steins}%
\end{equation}
Since this inequality is trivial when $\sup_{\psi_{RA}}D(  \mathcal{N}_{A\rightarrow B}(  \psi_{RA})
\Vert\psi_{R}\otimes\sigma_{B})=+\infty$, we assume the contrary for the rest.
We start by bounding the adaptive hypothesis testing relative entropy in terms
of the sandwiched R\'{e}nyi relative entropy.

Throughout this section, the parameter $\alpha$ is assumed to be strictly larger than
one and we fix some constant $\varepsilon\in(0,1)$. We fix some input state
$\rho_{R_{1}A_{1}}$ and an adaptive strategy $(\mathcal{A}^{(1)}%
,\cdots,\mathcal{A}^{(n-1)})$. Lemma~\ref{lem:sandwich-to-htre} implies that%
\begin{multline}
D_{H}^{\varepsilon}(  \mathcal{N}_{A_{n}\rightarrow B_{n}}(
\rho_{R_{n}A_{n}})  \Vert\tau_{R_{n}}\otimes\sigma_{B_{n}})
\leq
\widetilde{D}_{\alpha}(  \mathcal{N}_{A_{n}\rightarrow B_{n}}(
\rho_{R_{n}A_{n}})  \Vert\tau_{R_{n}}\otimes\sigma_{B_{n}}) \\
+\frac{\alpha}{\alpha-1}\log\left(  \frac{1}{1-\varepsilon}\right)  .
\end{multline}

We now focus on the $\widetilde{D}_{\alpha}$\ term. Let $\Theta_{\omega}$
denote the completely positive map $\Theta_{\omega}(  X)
=\omega^{1/2}X\omega^{1/2}$ that conjugates $X$ by a positive operator
$\omega^{1/2}$. From (\ref{entropytonorm}), it follows that%
\begin{align}
&  \widetilde{D}_{\alpha}(  \mathcal{N}_{A_{n}\rightarrow B_{n}}(
\rho_{R_{n}A_{n}})  \Vert\tau_{R_{n}}\otimes\sigma_{B_{n}})
\nonumber\\
&  =\frac{\alpha}{\alpha-1}\log\left\Vert \left(  \tau_{R_{n}}\otimes
\sigma_{B_{n}}\right)  ^{\frac{1-\alpha}{2\alpha}}\mathcal{N}_{A_{n}%
\rightarrow B_{n}}(  \rho_{R_{n}A_{n}})  \left(  \tau_{R_{n}%
}\otimes\sigma_{B_{n}}\right)  ^{\frac{1-\alpha}{2\alpha}}\right\Vert
_{\alpha}\\
&  =\frac{\alpha}{\alpha-1}\log\left\Vert \left(  \Theta_{\sigma_{B_{n}%
}^{\frac{1-\alpha}{\alpha}}}\circ\mathcal{N}_{A_{n}\rightarrow B_{n}}\right)
\left(  \tau_{R_{n}}^{\frac{1-\alpha}{2\alpha}}\rho_{R_{n}A_{n}}\tau_{R_{n}%
}^{\frac{1-\alpha}{2\alpha}}\right)  \right\Vert _{\alpha}.%
\end{align}
Let us focus on the expression inside the logarithm:%
\begin{align}
&  \left\Vert \left(  \Theta_{\sigma_{B_{n}}^{\frac{1-\alpha}{\alpha}}}%
\circ\mathcal{N}_{A_{n}\rightarrow B_{n}}\right)  \left(  \tau_{R_{n}}%
^{\frac{1-\alpha}{2\alpha}}\rho_{R_{n}A_{n}}\tau_{R_{n}}^{\frac{1-\alpha
}{2\alpha}}\right)  \right\Vert _{\alpha}\nonumber\\
&  =\frac{\left\Vert \left(  \Theta_{\sigma_{B_{n}}^{\frac{1-\alpha}{\alpha}}%
}\circ\mathcal{N}_{A_{n}\rightarrow B_{n}}\right)  \left(  \tau_{R_{n}}%
^{\frac{1-\alpha}{2\alpha}}\rho_{R_{n}A_{n}}\tau_{R_{n}}^{\frac{1-\alpha
}{2\alpha}}\right)  \right\Vert _{\alpha}}{\left\Vert \tau_{R_{n}}%
^{\frac{1-\alpha}{2\alpha}}\rho_{R_{n}}\tau_{R_{n}}^{\frac{1-\alpha}{2\alpha}%
}\right\Vert _{\alpha}}\cdot\left\Vert \tau_{R_{n}}^{\frac{1-\alpha}{2\alpha}%
}\rho_{R_{n}}\tau_{R_{n}}^{\frac{1-\alpha}{2\alpha}}\right\Vert _{\alpha}
\end{align}
\begin{align}
&  \leq\left(  \sup_{X_{R_{n}A_{n}}\geq0}\frac{\left\Vert \left(  \Theta
_{\sigma_{B_{n}}^{\frac{1-\alpha}{\alpha}}}\circ\mathcal{N}_{A_{n}\rightarrow
B_{n}}\right)  \left(  X_{R_{n}A_{n}}\right)  \right\Vert _{\alpha}%
}{\left\Vert X_{R_{n}}\right\Vert _{\alpha}}\right)  \cdot\left\Vert
\tau_{R_{n}}^{\frac{1-\alpha}{2\alpha}}\rho_{R_{n}}\tau_{R_{n}}^{\frac
{1-\alpha}{2\alpha}}\right\Vert _{\alpha}\\
&  =\left\Vert \Theta_{\sigma_{B}^{\frac{1-\alpha}{\alpha}}}\circ
\mathcal{N}\right\Vert _{\text{CB,}1\rightarrow\alpha}\cdot\left\Vert
\tau_{R_{n}}^{\frac{1-\alpha}{2\alpha}}\rho_{R_{n}}\tau_{R_{n}}^{\frac
{1-\alpha}{2\alpha}}\right\Vert _{\alpha}.\label{changetonorm}
\end{align}
The equality in (\ref{changetonorm}) follows from the characterisation of the
completely bounded $(1\rightarrow\alpha)$-norm given in (\ref{cb1toalpha}).
Rewriting this inequality in terms of the sandwiched R\'{e}nyi relative
entropy, we have that%
\begin{align}
&  \widetilde{D}_{\alpha}(  \mathcal{N}_{A_{n}\rightarrow B_{n}}(
\rho_{R_{n}A_{n}})  \Vert\tau_{R_{n}}\otimes\sigma_{B_{n}})
\nonumber\\
&  \leq\frac{\alpha}{\alpha-1}\log\left\Vert \Theta_{\sigma_{B}^{\frac
{1-\alpha}{\alpha}}}\circ\mathcal{N}\right\Vert _{\text{CB,}1\rightarrow
\alpha}+\frac{\alpha}{\alpha-1}\log\left\Vert \tau_{R_{n}}^{\frac{1-\alpha
}{2\alpha}}\rho_{R_{n}}\tau_{R_{n}}^{\frac{1-\alpha}{2\alpha}}\right\Vert
_{\alpha}\\
&  =\frac{\alpha}{\alpha-1}\log\left\Vert \Theta_{\sigma_{B}^{\frac{1-\alpha
}{\alpha}}}\circ\mathcal{N}\right\Vert _{\text{CB,}1\rightarrow\alpha
}+\widetilde{D}_{\alpha}(  \rho_{R_{n}}\Vert\tau_{R_{n}}) \\
&  \leq\frac{\alpha}{\alpha-1}\log\left\Vert \Theta_{\sigma_{B}^{\frac
{1-\alpha}{\alpha}}}\circ\mathcal{N}\right\Vert _{\text{CB,}1\rightarrow
\alpha}+\widetilde{D}_{\alpha}(  \mathcal{N}_{A_{n-1}\rightarrow B_{n-1}%
}(  \rho_{R_{n-1}A_{n-1}})  \Vert\tau_{R_{n-1}}\otimes
\sigma_{B_{n-1}}) ,
\end{align}
where the last inequality follows from monotonicity of the sandwiched
R\'{e}nyi relative entropy under the map $\Tr_{A_{n}} \circ\mathcal{A}%
_{R_{n-1}B_{n-1}\rightarrow R_{n}A_{n}}^{\left(  n-1\right)  }$.

Note that we are now left with the quantity
$
\widetilde{D}_{\alpha}(  \mathcal{N}_{A_{n-1}\rightarrow B_{n-1}}(
\rho_{R_{n-1}A_{n-1}})  \Vert\tau_{R_{n-1}}\otimes\sigma
_{B_{n-1}})  ,
$
which corresponds to applying the first $n-1$ rounds of the adaptive
discrimination process. We can thus iterate the above argument through all $n$
steps of the adaptive strategy. Noting that $\rho_{R_{1}}=\tau_{R_{1}}$, and
thus $\widetilde{D}_{\alpha}\left( \rho_{R_{1}}\|\tau_{R_{1}}\right) =0$, we
obtain the bound
\begin{align}
\label{Renyi-CB}
\widetilde{D}_{\alpha}\left(  \mathcal{N}_{A_{n}\rightarrow B_{n}}(  \rho_{R_{n}A_{n}})  \Vert\tau_{R_{n}}\otimes\sigma_{B_{n}}\right)
&\leq 
n\cdot\frac{\alpha}{\alpha-1}\log\left\Vert \Theta_{\sigma_{B}^{\frac{1-\alpha}{\alpha}}}\circ\mathcal{N}\right\Vert _{\text{CB,}1\rightarrow\alpha}\\
&=
n\widetilde D_{\alpha}(\N\|\R),\label{Renyi-CB2}
\end{align}
where \eqref{Renyi-CB2} follows from Lemma \ref{lemma:CB}.
This bound is independent of any particular adaptive strategy used for
discriminating these channels. Thus, we can conclude that%
\begin{equation}
\frac{1}{n}D_{H,\ad}^{\varepsilon}\left(  \mathcal{N}^{\otimes n}\middle\Vert
\mathcal{R}^{\otimes n}\right)  
\leq
\widetilde D_{\alpha}(\N\|\R)
+
\frac{1}{n}\cdot\frac{\alpha}{\alpha-1}%
\log\left(  \frac{1}{1-\varepsilon}\right)  .
\end{equation}
Taking the limsup as $n\rightarrow\infty$, we get the $\varepsilon$-independent
bound%
\begin{equation}
\limsup_{n\rightarrow\infty}\frac{1}{n}D_{H,\ad}^{\varepsilon}\left(
\mathcal{N}^{\otimes n}\middle\Vert\mathcal{R}^{\otimes n}\right)  \
\leq
\widetilde D_{\alpha}(\N\|\R).
\end{equation}
Taking now the infimum over $\alpha>1$, the assertion follows due to Lemma \ref{MI limits}.

\subsection{The strong converse exponent for adaptive channel discrimination}

\label{sec:sc proof}

Having just proven a quantum Stein's lemma for adaptive channel
discrimination, it is then natural to study the trade-off between error
probabilities, when we impose the condition that the Type~II error probability
has exponential decay rate $r$ for%
\begin{equation}
r>\sup_{\psi_{RA}}D(  \mathcal{N}_{A\rightarrow B}(  \psi
_{RA})  \Vert\psi_{R}\otimes\sigma_{B})  .
\end{equation}
One expects the Type~I error to tend to one exponentially quickly. Building on
the above results, we identify the strong converse exponent for the channel
discrimination problem (where, as before, we assume that the alternative hypothesis is a
replacer channel). Our result generalizes the quantum state discrimination
result from \cite[Theorem IV.10]{MO}. The notation is the same as in the
previous section; in particular, $\rho_{R_{n}B_{n}}$ and $\tau_{R_{n}B_{n}}$
are as in (\ref{eq:rho-adaptive}) and (\ref{eq:tau-adaptive}), respectively.
Recall the definitions of $\mathrm{\underline{sc}}(r)$ and $\mathrm{\overline
{sc}}(r)$ from \eqref{sci def}--\eqref{scs def}, and the definition of
$\alpha_{n,r}^{A}$ from \eqref{channel alpha}. We will need the following lemma:

\begin{lemma}\label{lemma:channel Renyi finite}
Let $\N$ be a quantum channel from system $A$ to system $B$, and $\sigma_B\in\S(\hil_B)$.
The following are equivalent:
\begin{enumerate}
\item\label{support1}
For every $k\in\bN$, every system $R$, and every $\psi_{RA^k}\in\S(\hil_{RA^k})$, 
$\supp\N^{\otimes k}(\psi_{RA^k})\subseteq\supp \psi_{R}\otimes\sigma_B^{\otimes k}$.
\item\label{support2}
For every $\rho_A\in\S(\hil_A)$, $\supp\N(\rho_A)\subseteq\supp\sigma_B$.
\item\label{support3}
$\wt D_{\alpha}(\N\|\R_{\sigma_B})<+\infty$ for all $\alpha\ge 1$.
\item\label{support4}
$\wt D_{\alpha}(\N\|\R_{\sigma_B})<+\infty$ for some $\alpha\ge 1$.
\end{enumerate}
\end{lemma}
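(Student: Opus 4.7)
\textbf{Proof proposal for Lemma \ref{lemma:channel Renyi finite}.} The plan is to establish the cycle $(i)\Rightarrow(ii)\Rightarrow(iii)\Rightarrow(iv)\Rightarrow(ii)\Rightarrow(i)$. The trivial pieces are $(i)\Rightarrow(ii)$ (take $k=1$ and a trivial ancilla $R$, so that $\psi_R$ is a scalar and $\supp(\psi_R\otimes\sigma_B)=\supp\sigma_B$) and $(iii)\Rightarrow(iv)$.

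For $(iv)\Rightarrow(ii)$, I would use that the defining supremum in \eqref{channel Renyi2} (or Lemma \ref{lemma:channel Renyi}) dominates each $\wt D_\alpha(\N(\rho_A)\|\sigma_B)$ obtained by choosing $R$ trivial and $\psi_{RA}=\rho_A$. For any $\alpha\ge 1$, finiteness of $\wt D_\alpha(\N(\rho_A)\|\sigma_B)$ (resp.\ $D(\N(\rho_A)\|\sigma_B)$ at $\alpha=1$) forces $\supp\N(\rho_A)\subseteq\supp\sigma_B$ by the support conventions in \eqref{eq:def-sandwiched} and \eqref{eq:vn-rel-ent}.

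For $(ii)\Rightarrow(i)$, I would pass to a Kraus decomposition $\N(\cdot)=\sum_m K_m(\cdot)K_m^*$ and argue that condition $(ii)$ is equivalent to $\Pi_\sigma^\perp K_m=0$ for every $m$, where $\Pi_\sigma$ is the projection onto $\supp\sigma_B$ (apply $(ii)$ to pure states $\rho_A=\pr{x}$ and sum the resulting vanishing nonnegative terms). Hence the Kraus operators of $\N^{\otimes k}$ land in $(\supp\sigma_B)^{\otimes k}$, so $\N^{\otimes k}(\psi_{RA^k})$ has $B^k$-marginal supported in $(\supp\sigma_B)^{\otimes k}$, while its $R$-marginal is $\psi_R$. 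A standard fact about positive bipartite operators---if $\rho_{RB}\ge 0$ has $\rho_R$ supported on $V$ and $\rho_B$ supported on $W$, then $\rho_{RB}$ is supported on $V\otimes W$---then yields $(i)$.

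The only genuinely nontrivial implication is $(ii)\Rightarrow(iii)$. For $\alpha>1$, I would invoke Lemma \ref{lemma:CB} to rewrite
\[
\wt D_{\alpha}(\N\|\R_{\sigma_B}) \;=\; \frac{\alpha}{\alpha-1}\log\bigl\|\Theta_{\sigma_B^{(1-\alpha)/\alpha}}\circ\N\bigr\|_{\cb,1\to\alpha}.
\]
By $(ii)$ the image of $\N$ lies in $\B(\supp\sigma_B)$, where $\sigma_B$ is invertible, so the composition $\Theta_{\sigma_B^{(1-\alpha)/\alpha}}\circ\N$ is a well-defined linear map between finite-dimensional operator spaces; any such map has finite $\cb$-norm, giving finiteness of $\wt D_\alpha(\N\|\R_{\sigma_B})$ for every $\alpha>1$. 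For $\alpha=1$, I would appeal to the monotonicity of $\alpha\mapsto\wt D_\alpha(\N\|\R_{\sigma_B})$ and the limit $\lim_{\alpha\to 1^+}\wt D_\alpha(\N\|\R_{\sigma_B})=D(\N\|\R_{\sigma_B})$ from Lemma \ref{MI limits}\eqref{channel div limit}, which yields $D(\N\|\R_{\sigma_B})\le\wt D_\beta(\N\|\R_{\sigma_B})<+\infty$ for any fixed $\beta>1$. The main (minor) obstacle is simply bookkeeping the support conventions so that the CB-norm representation is legitimately applied when $\sigma_B$ is rank deficient; this is precisely what $(ii)$ ensures.
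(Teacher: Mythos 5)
Your proposal is correct, and the skeleton (the cycle $(i)\Rightarrow(ii)\Rightarrow(iii)\Rightarrow(iv)\Rightarrow(ii)\Rightarrow(i)$, with $(i)\Rightarrow(ii)$ trivial by taking $k=1$ and a trivial ancilla, and with $(iv)\Rightarrow(ii)$ coming straight from the support conventions in the definition of $\wt D_\alpha$) matches the paper. Where you diverge is in the two substantive implementations. For $(ii)\Rightarrow(i)$ the paper invokes the general inclusion $\supp\omega_{CD}\subseteq\supp\omega_C\otimes\supp\omega_D$ iteratively on $\omega=\N^{\otimes k}(\psi_{RA^k})$, which directly gives $\supp\N^{\otimes k}(\psi_{RA^k})\subseteq\supp\psi_R\otimes\bigotimes_i\supp\N(\psi_{A_i})\subseteq\supp\psi_R\otimes\supp\sigma_B^{\otimes k}$; your Kraus-operator detour establishes the stronger intermediate fact that each Kraus operator of $\N$ maps into $\supp\sigma_B$, then falls back on the same bipartite support fact in the last step. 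Both are valid, but the paper's route is shorter. For $(ii)\Rightarrow(iii)$ the paper uses the representation from Lemma \ref{lemma:channel Renyi} and argues that, under $(ii)$, the map $\rho_{A'}\mapsto\wt D_\alpha(\rho_{A'}^{1/2}\N(\Gamma_{A'A})\rho_{A'}^{1/2}\Vert\rho_{A'}\otimes\sigma_B)$ is a continuous finite-valued function on the compact set $\S(\hil_{A'})$, hence has a finite supremum; you instead pass through the CB-norm identity of Lemma \ref{lemma:CB} and observe that a linear map between finite-dimensional operator spaces has finite CB norm. The CB-norm route is a legitimate alternative and makes the link to the $\Vert\cdot\Vert_{\mathrm{CB},1\to\alpha}$ machinery explicit, but it asks a bit more of the reader (one must verify, as you note, that under $(ii)$ the image of $\N$ sits in $\B(\supp\sigma_B)$ so that the conjugation by $\sigma_B^{(1-\alpha)/2\alpha}$ and the equality of Lemma \ref{lemma:CB} are applied legitimately); the paper's compactness argument is more self-contained. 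Your handling of $\alpha=1$ via monotonicity and the limit in Lemma \ref{MI limits} is correct and is also what the paper's formulation implicitly relies on.
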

\begin{proof}
\ref{support1}$\imp$\ref{support2} is trivial (by taking $\hil_R=\bC$ and $k=1$).
By Lemma \ref{lemma:channel Renyi}, 
\begin{align*}
\wt D_{\alpha}(\N\|\R_{\sigma_B})=
\sup_{\rho_{A'}\in\S(\hil_{A'})}\wt D_{\alpha}\!\bz\rho_{A'}^{1/2}\N(\Gamma_{A'A})\rho_{A'}^{1/2}\big\|\rho_{A'}\otimes\sigma_B\jz.
\end{align*}
\ref{support2}$\imp$\ref{support1} because
$\supp\N^{\otimes k}(\psi_{RA^k})
\subseteq \supp \psi_{R} \otimes \bigotimes_{i=1}^k \N(\psi_{A_i}) \subseteq \supp \psi_{R} \otimes \sigma_B^{\otimes k}$, which follows by iterating the general inclusion $\supp \omega_{CD} \subseteq
\supp \omega_C \otimes \omega_D$ (see, e.g., \cite[Appendix~B.4]{Renner2005}) and applying \ref{support2}.
If \ref{support2} is satisfied then $\rho_{A'}\mapsto\wt D_{\alpha}\!\bz\rho_{A'}^{1/2}\N(\Gamma_{A'A})\rho_{A'}^{1/2}\big\|\rho_{A'}\otimes\sigma_B\jz$ is a continuous finite-valued function on the compact set $\S(\hil_{A'})$, and hence its supremum is finite, proving \ref{support3}. The implication \ref{support3}$\imp$\ref{support4} is trivial.
Finally, \ref{support4}$\imp$\ref{support2} by applying the definition of $\wt D_\alpha$.
\end{proof}

%

\bigskip

\begin{proof}
[Proof of Theorem~\ref{scrtheorem}]
The statement is empty when $D(\N\|\R)=+\infty$, and hence for the rest we assume the contrary. 

We begin by proving the optimality part
\begin{equation}
\label{sc rate optimal}\mathrm{\underline{sc}}(r)\geq\liminf_{n\to+\infty
}-\frac{1}{n}\log(1-\alpha_{n,r}^{\ad})
\ge
\sup_{\alpha>1}\inf_{\psi_{RA}}\frac{\alpha-1}{\alpha}\left[  
r-\widetilde{D}_{\alpha}(  {\mathcal{N}}_{A\rightarrow B}(\psi_{RA})\Vert\psi_{R}\otimes\sigma_{B})
\right]  .
\end{equation}

Note that if $S_{n},\,n\in\mathbb{N}$, is a sequence of adaptive strategies
such that $\liminf_{n\to\infty}-\frac{1}{n}\log\beta_{n}(S_{n})>r$ then for
all large enough $n$, $\beta_{n}(S_{n})\le2^{-nr}$, and thus $1-\alpha
_{n}(S_{n})\le1-\alpha_{n,r}^{\ad}$, which yields the first inequality
in \eqref{sc rate optimal}.

To prove the second inequality in \eqref{sc rate optimal}, 
consider the output states $\rho_{R_nB_n},\tau_{R_{n}B_{n}}$, and the test 
$Q_{R_{n}B_{n}}$, at the end of the adaptive discrimination strategy.
By Lemma \ref{lem:sandwich-to-htre} and \eqref{Renyi-CB2}, we get 
\begin{equation}
\label{Nagaoka bound}\frac{1}{n}\log\Tr\left\{  Q_{R_{n}B_{n}}\rho_{R_{n}%
B_{n}}\right\}  \leq\frac{\alpha-1}{\alpha}\left[  \frac{1}{n}\log\Tr\left\{
Q_{R_{n}B_{n}}\tau_{R_{n}B_{n}}\right\}  +\sup_{\psi_{RA}}\widetilde
{D}_{\alpha}(  {\mathcal{N}}(\psi_{RA})\Vert{\mathcal{R}}(\psi
_{RA}))  \right]  .
\end{equation}
Taking the supremum of both sides of \eqref{Nagaoka bound} over all strategies
such that the Type~II error is at most $2^{-nr}$, we obtain
\begin{equation}
\frac{1}{n}\log(1-\alpha_{n,r}^{\ad}) \leq\frac{\alpha-1}{\alpha
}\left[  -r +\sup_{\psi_{RA}}\widetilde{D}_{\alpha}(  {\mathcal{N}}%
(\psi_{RA})\Vert{\mathcal{R}}(\psi_{RA}))  \right] , \label{eq:ineq-to-gen}
\end{equation}
which yields the second inequality in \eqref{sc rate optimal}.


We now establish the achievability part
\begin{equation}
\label{scs proof2}\limsup_{n\to+\infty}-\frac{1}{n}(1-\alpha_{n,r}%
^{\ad})\le\mathrm{\overline{sc}}(r)\leq\sup_{\alpha>1}\inf_{\psi_{RA}%
}\frac{\alpha-1}{\alpha}\left[  r-\widetilde{D}_{\alpha}(  {\mathcal{N}%
}_{A\rightarrow B}(\psi_{RA})\Vert\psi_{R}\otimes\sigma_{B})
\right]  .
\end{equation}
The first inequality follows the same way as the first inequality in
\eqref{sc rate optimal}. Let $R$ be an arbitrary system. According to Theorem IV.10 and Remark IV.11 in
\cite{MO},
for every state $\psi_{RA}\in\S(\hil_R\otimes\hil_A)$ and every $r^{\prime}>0$,
there exists a sequence of tests $Q_{R_{n}B_{n}}$,$\,n\geq1$, such that
\begin{align}
\limsup_{n\rightarrow+\infty}\frac{1}{n}\log\Tr\left\{  Q_{R_{n}B_{n}%
}{\mathcal{R}}(\psi_{RA})^{\otimes n}\right\}   & \leq-r^{\prime
},\mbox{ }\mbox{ }\mbox{ }\mbox{ }\mbox{ }\mbox{ }\text{ and }\\
\liminf_{n\rightarrow+\infty}\frac{1}{n}\log\Tr\left\{  Q_{R_{n}B_{n}%
}{\mathcal{N}}(\psi_{RA})^{\otimes n}\right\}   &  \geq-H_{r^{\prime}}%
^{*}({\mathcal{N}}(\psi_{RA})\|{\mathcal{R}}(\psi_{RA})).
\end{align}
Thus,
\begin{equation}
\label{scs proof1}\mathrm{\overline{sc}}(r)\leq\inf_{r^{\prime}>r}%
H_{r^{\prime}}^{*}({\mathcal{N}}(\psi_{RA})\|{\mathcal{R}}(\psi_{RA})).
\end{equation}
From the definition \eqref{eq:sc-states} of the Hoeffding anti-divergence,
it is clear that $r\mapsto H_{r}^{*}({\mathcal{N}}(\psi_{RA})\|{\mathcal{R}%
}(\psi_{RA}))$ is a monotone increasing convex function on $(0,+\infty)$.
Moreover, Lemma IV.9 in \cite{MO} implies that $H_{r}^{*}({\mathcal{N}}%
(\psi_{RA})\|{\mathcal{R}}(\psi_{RA}))$ is finite for every $r>0$. Thus,
$r\mapsto H_{r}^{*}({\mathcal{N}}(\psi_{RA})\|{\mathcal{R}}(\psi_{RA}))$ is
continuous on $(0,+\infty)$, and \eqref{scs proof1} yields
\begin{align}
\mathrm{\overline{sc}}(r)\leq H_{r}^{*}({\mathcal{N}}(\psi_{RA})\|{\mathcal{R}%
}(\psi_{RA})).
\end{align}
Since this is true for every $\psi_{RA}$, we finally get
\begin{equation}
\mathrm{\overline{sc}}(r)\leq\inf_{\psi_{RA}}\sup_{\alpha>1}\frac{\alpha
-1}{\alpha}\left[  r-\widetilde{D}_{\alpha}(  {\mathcal{N}}(\psi
_{RA})\Vert{\mathcal{R}}(\psi_{RA}))  \right]  . \label{sc upper}%
\end{equation}

The last step is to show that the RHS of \eqref{sc rate optimal} and
\eqref{sc upper} are equal to each other. 
First, note that the RHS of \eqref{sc rate optimal} can be written as 
\begin{align*}
\sup_{\alpha>1}\inf_{\rho_{A'}}\frac{\alpha-1}{\alpha}
\left[  
r-\widetilde D_{\alpha}\!\bz\rho_{A'}^{1/2}\N_1(\Gamma_{A'A})\rho_{A'}^{1/2}\big\|\rho_{A'}\otimes\sigma_B\jz
\right],
\end{align*}
where the infimum is taken over $\S(\hil_{A'})$ with $A'\cong A$,
due to Lemma \ref{lemma:channel Renyi}. Moreover, the RHS of \eqref{sc upper} can be trivially upper bounded by 
\begin{align*}
\inf_{\rho_{A'}}\sup_{\alpha>1}\frac{\alpha-1}{\alpha}\left[  
r-\widetilde D_{\alpha}\!\bz\rho_{A'}^{1/2}\N_1(\Gamma_{A'A})\rho_{A'}^{1/2}\big\|\rho_{A'}\otimes\sigma_B\jz
\right]
\end{align*}
(see the proof of Lemma \ref{lemma:channel Renyi} in Appendix \ref{app:CB}).
Next, define
\begin{equation}
F(\alpha,\rho_{A'})\equiv(\alpha-1)
\widetilde D_{\alpha}\!\bz\rho_{A'}^{1/2}\N_1(\Gamma_{A'A})\rho_{A'}^{1/2}\big\|\rho_{A'}\otimes\sigma_B\jz
\end{equation}
for $\alpha>1$ and $\rho_{A'}\in\S(\hil_{A'})$.
Introducing the new variable $u\equiv\frac{\alpha-1}{\alpha}$, we have to show
that
\begin{equation}
\sup_{0<u<1}\inf_{\rho_{A'}}f(u,\rho_{A'})=\inf_{\rho_{A'}}\sup_{0<u<1}f(u,\rho_{A'}), \label{minimax}%
\end{equation}
where
\begin{equation}
f(u,\rho_{A'})\equiv ur-\widetilde{F}(u,\rho_{A'}),\mbox{ }\mbox{ }\mbox{ }\mbox{ }\mbox{ }\mbox{ }
\widetilde{F}(u,\rho_{A'})\equiv(1-u)F\left(  \frac{1}{1-u},\rho_{A'}\right)  .
\end{equation}
By Lemmas 3 and 4 in \cite{GW}, $\rho_{A'}\mapsto F(u,\rho_{A'})$ is concave, and
hence $\rho_{A'}\mapsto f(u,\rho_{A'})$ is convex, for any fixed $u\in(0,1)$.
On the other hand, $u\mapsto F(u,\rho_{A'})$ is convex by Corollary 3.11 in \cite{MO},
and Lemma \ref{lemma:convexity} below yields that $u\mapsto\widetilde{F}(u,\rho_{A'})$ is also convex, 
which in turn implies the concavity of $u\mapsto f(u,\rho_{A'})$ for any fixed $\rho_{A'}$.
By assumption, $D(\N\|\R)<+\infty$, and taking into account Lemma~\ref{lemma:channel Renyi finite},
it is easy to see that 
$\rho_{A'}\mapsto f(u,\rho_{A'})$ is continuous for any $u\in(0,1)$. Since the state space of $\hil_A$ is compact, the Kneser-Fan minimax theorem
\cite{Kneser,Fan} yields \eqref{minimax}.
\end{proof}
\smallskip


\begin{lemma}\label{lemma:convexity}
Let $f:\,(0,1)\to\bR$ be a convex function. Then 
\begin{align*}
\tilde f:\,u\mapsto (1-u)f\bz\frac{1}{1-u}\jz
\end{align*}
is convex as well.
\end{lemma}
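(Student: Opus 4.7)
The plan is to recognise $\tilde f$ as (a reparametrisation of) the perspective of $f$. Setting $t=1-u$, we have $\tilde f(u)=g(1-u)$ where $g(t)\equiv t\,f(1/t)$. Since $u\mapsto 1-u$ is affine, convexity of $\tilde f$ is equivalent to convexity of $g$, so the task reduces to proving that $g(t)=t\,f(1/t)$ is convex on the appropriate domain. This is just the perspective $P_f(x,t)\equiv t\,f(x/t)$ of $f$ evaluated at $x=1$, and the perspective of a convex function is (jointly) convex; evaluating at a fixed first argument preserves convexity in $t$.

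For a self-contained proof, I would verify convexity of $g$ directly by a short calculation. Fix $t_1,t_2>0$ and $\lambda\in(0,1)$, and let $t\equiv\lambda t_1+(1-\lambda)t_2$. The key identity is that $1/t$ is a convex combination of $1/t_1$ and $1/t_2$: with the weight
\begin{equation}
\mu\equiv \frac{\lambda t_1}{\lambda t_1+(1-\lambda)t_2}=\frac{\lambda t_1}{t},
\end{equation}
a direct check shows $\mu\cdot\frac{1}{t_1}+(1-\mu)\cdot\frac{1}{t_2}=\frac{\lambda+(1-\lambda)}{t}=\frac{1}{t}$. Applying convexity of $f$ at this convex combination and multiplying through by $t$ gives
\begin{equation}
t\,f(1/t)\le t\mu\,f(1/t_1)+t(1-\mu)\,f(1/t_2)=\lambda t_1\,f(1/t_1)+(1-\lambda)t_2\,f(1/t_2),
\end{equation}
which is exactly $g(\lambda t_1+(1-\lambda)t_2)\le\lambda g(t_1)+(1-\lambda)g(t_2)$. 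Composing with $t=1-u$ yields convexity of $\tilde f$.

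There is no real obstacle; the only subtlety is choosing the correct reweighting $\mu$ so that $1/t$ appears as a convex combination of $1/t_1$ and $1/t_2$ with the right coefficients to produce the desired $\lambda,1-\lambda$ combination after multiplication by $t$. Once that bookkeeping is done, the inequality drops out of a single application of the convexity of $f$.
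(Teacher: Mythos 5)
Your proof is correct, but it takes a genuinely different route from the paper's. The paper's argument is the slick "envelope" one: since $f$ is convex it is a pointwise supremum of affine functions $f(x)=\sup_i\{a_i x+b_i\}$, and then $\tilde f(u)=(1-u)\sup_i\{a_i/(1-u)+b_i\}=\sup_i\{a_i+b_i(1-u)\}$ is again a supremum of affine functions, hence convex. Your argument instead verifies the defining inequality for $g(t)=t\,f(1/t)$ directly, by choosing the reweighted coefficient $\mu=\lambda t_1/(\lambda t_1+(1-\lambda)t_2)$ so that $1/t$ is the $\mu$-convex combination of $1/t_1$ and $1/t_2$, and then invoking convexity of $f$ once; this is the standard proof that the perspective of a convex function is convex, specialised to a fixed first argument. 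Both are correct and short. The paper's version has the advantage of being a one-line manipulation, at the cost of invoking the representation of a convex function as a supremum of its affine minorants; yours is more elementary and self-contained, requiring only the definition of convexity. One small remark for both: the lemma's stated domain $f:(0,1)\to\mathbb{R}$ appears to be a typo, since $1/(1-u)\in(1,\infty)$ for $u\in(0,1)$; the intended domain is $(1,\infty)$, consistent with how the lemma is applied (to $\alpha\mapsto F(\alpha,\rho_{A'})$ with $\alpha>1$). Neither proof is affected by this, and your substitution $t=1-u\in(0,1)$, $1/t\in(1,\infty)$ already reflects the correct domain.
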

\begin{proof}
Since $f$ is convex, it can be written as the supremum of affine functions, i.e., 
$f(x)=\sup_{i}\{a_ix+b_i\}$ for some $a_i,b_i\in\bR$, and thus
\begin{align*}
\tilde f(u)=(1-u)\sup_{i}\left\{a_i\frac{1}{1-u}+b_i\right\}
=
\sup_{i}\left\{a_i+b_i(1-u)\right\}.
\end{align*}
As a supremum of affine functions, $\tilde f$ is convex.
\end{proof}

\begin{remark}
It is not too difficult to see that Theorem \ref{stein} can be reformulated the following way:
\smallskip

\noindent [Direct part] For every $r<D(\N\|\R)$, there exists a sequence of adaptive strategies such that 
the type I error goes to $0$ and the type II error decays exponentially with a rate at least $r$.
\smallskip

\noindent [Strong converse part] For every $r>D(\N\|\R)$, and any sequence of adaptive strategies such that 
the type II error decays exponentially with a rate at least $r$, the type I error goes to $1$.
\smallskip

As we have seen, the direct part is an immediate consequence of Stein's lemma for state discrimination. 
For the proof of the strong converse part and for the proof of the optimality part of Theorem~\ref{scrtheorem}, we  
followed the same argument
of first using the monotonicity of the R\'enyi relative entropies under measurements and then applying
\eqref{Renyi-CB}. In fact, one could first prove the optimality part of Theorem \ref{scrtheorem} and obtain
the optimality part of Theorem \ref{stein} from it in the limit $r\searrow D(\N\|\R)$. Indeed, 
Lemma \ref{MI limits} implies that for any $r>D(\N\|\R)$, there exists an $\alpha>1$ such that 
$r>\wt D_{\alpha}(\N\|\R)$, and hence
the RHS of \eqref{sc rate optimal} is strictly positive, from which the strong converse part of the channel Stein's lemma is immediate.
\end{remark}

\subsection{Related results}

\label{sec:extension}

Hayashi and Tomamichel recently published their independently obtained results 
about a hypothesis testing scenario somewhat similar to
ours \cite{HT}. Both our paper and theirs generalise the task of binary 
state discrimination but in different and not directly comparable directions. They consider the problem of composite hypothesis testing, 
where the null hypothesis is the presence of a fixed bipartite state and the alternative hypothesis is the presence 
of a product state that shares one marginal with the null hypothesis. Considered as a channel discrimination problem, the null hypothesis is that the i.i.d.~channel
$\mathcal{N}_1^{\otimes n}$ is applied to the $A$ systems of the input, where 
the input state is restricted to be a fixed tensor-power state of the form $\psi_{RA}^{\otimes n}$. 
The alternative hypothesis is that a general ``worst-case'' replacer channel is applied to the $A$ systems, 
which leads to an output $\psi_R^{\otimes n} \otimes \sigma_{B^n}$, where $\sigma_{B^n}$ could be any state on the
$B$ systems. Not only do they allow for this more general alternative
hypothesis, but they also determine both the direct and the strong converse exponents in their scenario. On the other hand, one has to note that when the above result is considered as a channel discrimination problem, allowing only the tensor powers of one fixed state as an input is extremely restrictive. In contrast, our results do allow for more general input states and for the adaptive strategies that distinguish the problem 
of quantum channel discrimination from binary state discrimination.

While the results of the two papers go in quite different directions,
 there is also a natural combination of them, which enables us to obtain a Stein's lemma with strong converse for the following channel discrimination problem with composite alternative hypothesis.
For every $n\in\bN$, the null hypothesis is that the channel is $\N^{\otimes n}$, where $\N$ is a fixed channel, and the alternative hypothesis is that the channel belongs to the set $\R^{(n)}\equiv\{\R_{\sigma_n}:\,\sigma_n\in\Sigma_n\}$, where 
\begin{align}
\{\sigma^{\otimes n}:\,\sigma\in\S(\hil_B)\}\subseteq
\Sigma_n\subseteq
\S(\hil_B^{\otimes n}).
\end{align}
For Stein's lemma, one is interested in the asymptotics of the optimal Type II error
\begin{align}
\beta_{\ep}\x\equiv\beta_{\ep}\x(\N^{\otimes n}\|\R^{(n)})\equiv\inf\left\{\sup_{\sigma_n\in\Sigma_n}\beta_n(S_n|\sigma_n):\,\alpha_n(S_n)\le \ep\right\},
\end{align}
where the infimum is over all strategies in the class $x$ with Type I error below $\ep$. 
Combining Theorem 11 in \cite{HT} and Theorem \ref{stein} in this paper, we obtain the following:

\begin{theorem}\label{thm:Stein comp}
In the above setting, for every $\ep\in(0,1)$,
\begin{align}
\lim_{n\to+\infty}-\frac{1}{n}\log\beta_{\ep}^{\ad}(\N^{\otimes n}\|\R^{(n)})
=
\lim_{n\to+\infty}-\frac{1}{n}\log\beta_{\ep}^{\prod}(\N^{\otimes n}\|\R^{(n)})
&= \inf_{\sigma_B\in\S(\hil_B)} D(\N\|\R_{\sigma_B})\label{Stein exponent1}\\
&=I(\N),\label{Stein exponent2}
\end{align}
where $ D(\N\|\R_{\sigma_B})$ is the channel relative entropy \eqref{channel Stein}, and 
$I(\N)$ is the channel mutual information \eqref{IN}.
\end{theorem}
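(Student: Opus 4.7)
The plan is to prove the theorem by sandwiching between a direct (achievability) bound derived from Hayashi--Tomamichel's Theorem~11 and a strong converse bound obtained from our Theorem~\ref{stein}, exploiting the fact that the $n$-th alternative family $\Sigma_n$ simultaneously contains every tensor power $\sigma^{\otimes n}$ and is contained in the full state space on $\hil_B^{\otimes n}$. Throughout I will use the trivial chain $\beta_{\ep}^{\ad}(\N^{\otimes n}\|\R^{(n)})\le \beta_{\ep}^{\prod}(\N^{\otimes n}\|\R^{(n)})$, since any product strategy is adaptive.

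\textbf{Direct part.} I would first pick a state $\psi_{RA}^{*}$ that (approximately) attains the supremum defining $I(\N)$ in \eqref{IN}, so that $I(R;B)_{\N(\psi_{RA}^{*})} = \inf_{\sigma_B} D(\N(\psi_{RA}^{*})\|\psi_R^{*}\otimes\sigma_B)$ is arbitrarily close to $I(\N)$. Using this fixed input $\psi_{RA}^{*\,\otimes n}$, the problem reduces exactly to the Hayashi--Tomamichel composite-hypothesis testing scenario: null $\N(\psi_{RA}^{*})^{\otimes n}$ versus alternative $\{\psi_R^{*\,\otimes n}\otimes\sigma_{B^n}:\sigma_{B^n}\text{ arbitrary}\}$; their Theorem~11 supplies a sequence of product tests with asymptotic rate $I(R;B)_{\N(\psi_{RA}^{*})}$ against the \emph{worst-case} $\sigma_{B^n}$. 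Since $\Sigma_n\subseteq\S(\hil_B^{\otimes n})$, this strategy also works against the possibly smaller family $\R^{(n)}$, yielding
\begin{equation}
\liminf_{n\to\infty}-\tfrac{1}{n}\log\beta_{\ep}^{\prod}(\N^{\otimes n}\|\R^{(n)}) \,\ge\, I(R;B)_{\N(\psi_{RA}^{*})}.
\end{equation}
Letting $\psi_{RA}^{*}$ approach the optimum gives the desired lower bound $I(\N)$ on the product (hence adaptive) exponent.

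\textbf{Strong converse.} For any fixed $\sigma_B\in\S(\hil_B)$, the assumption $\{\sigma^{\otimes n}:\sigma\in\S(\hil_B)\}\subseteq\Sigma_n$ implies that every strategy must defeat in particular the alternative $\R_{\sigma_B}^{\otimes n}$, so
\begin{equation}
\beta_{\ep}^{\ad}(\N^{\otimes n}\|\R^{(n)}) \,\ge\, \beta_{\ep}^{\ad}(\N^{\otimes n}\|\R_{\sigma_B}^{\otimes n}).
\end{equation}
Theorem~\ref{stein} then gives $\limsup_{n} -\tfrac{1}{n}\log\beta_{\ep}^{\ad}(\N^{\otimes n}\|\R^{(n)})\le D(\N\|\R_{\sigma_B})$. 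Taking the infimum over $\sigma_B$ produces the upper bound $\inf_{\sigma_B}D(\N\|\R_{\sigma_B})$, which immediately yields \eqref{Stein exponent1}, and by the trivial inequality above, the same upper bound on the product-strategy exponent.

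\textbf{Identification with $I(\N)$.} It remains to verify \eqref{Stein exponent2}, i.e.\ $\inf_{\sigma_B} D(\N\|\R_{\sigma_B}) = I(\N)$. This is the $\alpha=1$ instance of the geometric representation in Lemma~\ref{lemma:geom int}, and can be established independently by sending $\alpha\to 1$ in that lemma using Lemma~\ref{MI limits}(\ref{channel div limit},\ref{channel info limit}), or directly by unfolding the definitions: $\inf_{\sigma_B} D(\N(\psi_{RA})\|\psi_R\otimes\sigma_B)=I(R;B)_{\N(\psi_{RA})}$ for each $\psi_{RA}$, and a minimax exchange upgrades $\sup_{\psi}\inf_{\sigma}$ to $\inf_{\sigma}\sup_{\psi}=\inf_{\sigma}D(\N\|\R_{\sigma_B})$. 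The main subtlety of the whole argument lies in invoking Hayashi--Tomamichel's result correctly and ensuring that the approximation $\psi_{RA}^{*}$ does not lose the supremum in the limit; the minimax identification $\inf_\sigma D(\N\|\R_\sigma)=I(\N)$ is the only other nontrivial ingredient, and it is already packaged in Lemma~\ref{lemma:geom int}. The converse half uses only Theorem~\ref{stein} applied to tensor-power replacer channels, which fits the hypothesis perfectly.
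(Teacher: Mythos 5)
Your proposal is correct and follows essentially the same route as the paper: the achievability direction comes from restricting to a near-optimal fixed input $\psi_{RA}^*$, applying Hayashi--Tomamichel's Theorem 11 against the fully composite alternative (hence a fortiori against $\Sigma_n$), and using Lemma~\ref{MI limits} to pass $\alpha\to 1$; the converse comes from lower-bounding $\beta_\ep^{\ad}(\N^{\otimes n}\|\R^{(n)})$ by $\beta_\ep^{\ad}(\N^{\otimes n}\|\R_{\sigma}^{\otimes n})$ for each fixed $\sigma$, invoking Theorem~\ref{stein}, and optimizing over $\sigma$; and the identification $\inf_\sigma D(\N\|\R_\sigma)=I(\N)$ is Lemma~\ref{lemma:geom int} at $\alpha=1$. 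The only place you are slightly less explicit than the paper is in turning the Hayashi--Tomamichel exponent $\sup_{\alpha\in(0,1)}\frac{\alpha-1}{\alpha}[r-I_\alpha(R;B)]$ into a strictly positive decay of the Type~I error for every $r<I(R;B)_{\N(\psi^*)}$ (via Lemma~\ref{MI limits}), but the idea is the same.
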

\begin{proof}
Just as in \eqref{composite3}--\eqref{composite4} (see below), we have
\begin{align}
\beta_{\ep}^{\ad}
\le
\beta_{\ep}^{\prod}
&=
\inf_{\psi_{RA}}\inf_{Q_n}\left\{\sup_{\sigma_n\in\Sigma_n}\Tr \left\{ Q_n(\psi_R^{\otimes n}\otimes\sigma_n)\right\}:\,
\Tr \left\{ (I_n-Q_n)\bz\N_{A\to B}(\psi_{RA})\jz^{\otimes n} \right\}\le\ep\right\}\\
&\le
\inf_{Q_n}\left\{\sup_{\sigma_n\in\Sigma_n}\Tr \left\{ Q_n(\psi_R^{\otimes n}\otimes\sigma_n)\right\}:\,
\Tr \left\{ (I_n-Q_n)\bz\N_{A\to B}(\psi_{RA})\jz^{\otimes n} \right\}\le\ep\right\}\\
&\equiv\beta_{\ep}(\psi_{RA}),\label{composite5}
\end{align}
where $Q_n$ runs over all $Q_n\in\B(\hil_{RA}^{\otimes n})_+$ such that $Q_n\le I_n$, and the second inequality 
holds for every $\psi_{RA}$. By \cite[Theorem 11]{HT}, for any $\psi_{RA}$ and any rate 
$r$, there exists a sequence of binary measurements $(Q_n,I_n-Q_n)$,
for which
\begin{align}
\sup_{\sigma_n\in\S(\hil_B^{\otimes n})}\Tr \left\{ Q_n(\psi_R^{\otimes n}\otimes\sigma_n)\right\}
&\le 
2^{-nr},\\
\limsup_{n\to+\infty}\frac{1}{n}\log\Tr \left\{ (I_n-Q_n)\bz\N_{A\to B}(\psi_{RA})\jz^{\otimes n} \right\}
&=
-\sup_{\alpha\in(0,1)}\frac{\alpha-1}{\alpha}\left[r-I_{\alpha}(R;B)_{\N(\psi)}\right].\label{Hbound}
\end{align}
By Lemma \ref{MI limits}, the RHS of \eqref{Hbound} is strictly negative for every 
$r<I(R;B)_{\N(\psi)}$, and hence
\begin{align}
\limsup_{n\to+\infty}\frac{1}{n}\log\beta_{\ep}(\psi_{RA})\le -I(R;B)_{\N(\psi)}.
\end{align}
When combined with \eqref{composite5}, this yields
\begin{align}
\limsup_{n\to+\infty}\frac{1}{n}\log\beta_{\ep}^{\ad}\le
\limsup_{n\to+\infty}\frac{1}{n}\log\beta_{\ep}^{\prod}\le
 \inf_{\psi_{RA}}-I(R;B)_{\N(\psi)}=-I(\N).
\end{align}

Suppose now that 
\begin{align}
\liminf_{n\to+\infty}\frac{1}{n}\log\beta_{\ep}^{\ad}<-r
\end{align}
for some $r\in\bR$. For every $\sigma\in\S(\hil_B)$, 
$\beta_{\ep}^{\ad}=\beta_{\ep}^{\ad}(\N^{\otimes n}\|\R^{(n)})\ge\beta_{\ep}^{\ad}(\N^{\otimes n}\|\R_{\sigma}^{\otimes n})$.
Hence the assumption yields that $\liminf_{n\to+\infty}\frac{1}{n}\log\beta_{\ep}^{\ad}(\N^{\otimes n}\|\R_{\sigma}^{\otimes n})<-r$, and by Theorem \ref{stein} this is only possible if $r\le D(\N\|\R_{\sigma})$. Since this is true 
for every $\sigma\in\S(\hil_B)$, we finally get that 
$r\le \inf_{\sigma\in\S(\hil_B)}D(\N\|\R_{\sigma})=I(\N)$, completing the proof of \eqref{Stein exponent1}. 

The equality
of \eqref{Stein exponent2} and \eqref{Stein exponent1} is due to Lemma \ref{lemma:geom int}.
\end{proof}
\bigskip

It is now natural to ask whether the exact strong converse exponent can be determined for this problem, analogously to Theorem \ref{scrtheorem}. Below we give lower and upper bounds for the strong converse exponent. We conjecture that these bounds in fact coincide, and thus give the exact strong converse exponent; indeed, this could be proved 
if one could justify interchanging the order of infima and suprema in \eqref{composite1} and \eqref{composite2} below.

The problem can be formulated as follows. For any adaptive discrimination strategy $S_n$, and any $\sigma_n\in\S(\hil_B^{\otimes n})$, let $\alpha_n(S_n)$ and 
$\beta_n(S_n|\sigma_n)$ be the Type I and Type II error 
probabilities for discriminating between $\N^{\otimes n}$ and $\R_{\sigma_n}$, as given in \eqref{error probs}.
We consider the optimal Type I error
\begin{align}\label{composite type I}
\alpha_{n,r}\x\equiv\alpha_{2^{-nr}}\x(\N^{\otimes n}\|\R^{(n)}) 
& \equiv
\inf\left\{ \alpha_{n}(S_{n}):\,\sup_{\sigma_n\in\Sigma_n}\beta_{n}(S_{n}|\sigma_n)\le2^{-nr}\right\},
\end{align}
where $x$ denotes the set of allowed discrimination strategies and the
optimisation is over all strategies in the class $x$. As before, we take $x=\prod$ and $x=\ad$, for product and adaptive strategies, respectively.
We have
\begin{align}
\alpha_{n,r}^{\prod}
&=
\inf_{\psi_{RA}}\inf_{Q_n}\left\{\Tr \left\{ (I_n-Q_n)\bz\N_{A\to B}(\psi_{RA})\jz^{\otimes n} \right\}:\,
\sup_{\sigma_n\in\Sigma_n}\Tr \left\{ Q_n(\psi_R^{\otimes n}\otimes\sigma_n) \right\} \le 2^{-nr}\right\}\label{composite3}\\
&\le
\inf_{\psi_{RA}}\inf_{Q_n}\left\{\Tr \left\{ (I_n-Q_n)\bz\N_{A\to B}(\psi_{RA})\jz^{\otimes n}\right\}:\,
\sup_{\sigma_n\in\S(\hil^{\otimes n})}\Tr \left\{ Q_n(\psi_R^{\otimes n}\otimes\sigma_n) \right\} \le 2^{-nr}\right\}\\
&\le
\inf_{Q_n}\left\{\Tr \left\{ (I_n-Q_n)\bz\N_{A\to B}(\psi_{RA})\jz^{\otimes n} \right\}:\,
\sup_{\sigma_n\in\S(\hil^{\otimes n})}\Tr \left\{ Q_n(\psi_R^{\otimes n}\otimes\sigma_n) \right\} \le 2^{-nr}\right\}\\
&\equiv \alpha_{n,r}(\psi_{RA}),\label{composite4}
\end{align}
where $Q_n$ runs over all $Q_n\in\B(\hil_{RA}^{\otimes n})_+$ such that $Q_n\le I_n$, and the second inequality 
holds for every $\psi_{RA}$.
Applying now the results of \cite{HT}, we get that 
\begin{align}
\limsup_{n\to+\infty}-\frac{1}{n}\log(1-\alpha_{n,r}^{\ad})
&\le
\limsup_{n\to+\infty}-\frac{1}{n}\log(1-\alpha_{n,r}^{\prod})\\
&\le
\inf_{\psi_{RA}}\limsup_{n\to+\infty}-\frac{1}{n}\log(1-\alpha_{n,r}(\psi_{RA}))\\
&=
\inf_{\psi_{RA}}\sup_{\alpha>1}\frac{\alpha-1}{\alpha}\left[r-\inf_{\sigma\in\S(\hil_B)}\wt D_{\alpha}\!\bz\N_{A\to B}(\psi_{RA})\|\psi_R\otimes\sigma \jz\right]\label{composite0}\\
&=
\inf_{\psi_{RA}}\sup_{\alpha>1}\sup_{\sigma\in\S(\hil_B)}\frac{\alpha-1}{\alpha}\left[r-\wt D_{\alpha}\!\bz\N_{A\to B}(\psi_{RA})\|\psi_R\otimes\sigma \jz\right],\label{composite1}
\end{align}
where \eqref{composite0} is due to \cite[Theorem 13]{HT}.
On the other hand,
\begin{align}
\alpha_{n,r}^{\ad}
&\ge
\inf\left\{ \alpha_{n}(S_{n}):\,\beta_{n}(S_{n}|\sigma^{\otimes n})\le2^{-nr}\right\}=
\alpha_{2^{-nr}}^{\ad}(\N^{\otimes n}\|\R_{\sigma}^{\otimes n}),
\ds\ds\ds\sigma\in\S(\hil_B),
\end{align}
and hence
\begin{align}
\liminf_{n\to+\infty}-\frac{1}{n}\log(1-\alpha_{n,r}^{\ad})
&\ge
\sup_{\sigma\in\S(\hil_B)}\liminf_{n\to+\infty}-\frac{1}{n}\log(1-\alpha_{2^{-nr}}^{\ad}(\N^{\otimes n}\|\R_{\sigma}^{\otimes n})\label{composite7}\\
&=
\sup_{\sigma\in\S(\hil_B)}
\sup_{\alpha>1}\inf_{\psi_{RA}}\frac{\alpha-1}{\alpha}\left[r-\wt D_{\alpha}\!\bz\N_{A\to B}(\psi_{RA})\|\psi_R\otimes\sigma \jz\right]\label{composite6}\\
&=
\sup_{\sigma\in\S(\hil_B)}
\inf_{\psi_{RA}}\sup_{\alpha>1}\frac{\alpha-1}{\alpha}\left[r-\wt D_{\alpha}\!\bz\N_{A\to B}(\psi_{RA})\|\psi_R\otimes\sigma \jz\right]\label{composite2}
\end{align}
where the two equalities are due to Theorem \ref{scrtheorem}. If one had joint concavity in the variables $\alpha >1$ and $\sigma$, then one could interchange the optima and show that \eqref{composite1} and \eqref{composite2} are equal to each other, obtaining strong converse exponents for this channel discrimination problem. However, it remains unclear to us if the joint concavity holds or more generally if the exchange is possible.

\begin{remark}
Theorem \ref{thm:Stein comp} gives an operational interpretation to the channel mutual information $I(\N)$, and its geometric 
representation given in Lemma \ref{lemma:geom int}. If \eqref{composite1} and \eqref{composite2} could be shown to be equal, that would give an analogous operational interpretation to the channel R\'enyi mutual informations $\wt I_{\alpha}(\N)$ and their geometric representation in Lemma \ref{lemma:geom int}, for every $\alpha>1$.
\end{remark}

\section{Strong converse for quantum-feedback-assisted classical communication}
\label{sec:EA sc proof}

In this section, we give a detailed proof of Theorem~\ref{thm:sc-feedback}, which
identifies a strong converse exponent for quantum-feedback-assisted
communication and states that a strong converse theorem holds for the
quantum-feedback-assisted classical capacity of a quantum channel.

In an $n$-round feedback-assisted protocol $\P_n$, Alice and Bob initially share an entangled state on Alice's system $X_0$ and Bob's system $B_0'$. If Alice wants to transmit message $m\in\{1,\ldots,M_n\}$, where $M_n\in\bN$ is the number of messages, she applies a quantum channel $\E^1_m$ with output system $A_1'A_1$ to her part of the entangled state; the result is a state $\rho^m_{A_1'A_1B_0'}=\tau^m_{A_1'A_1B_0'}$ on systems $A_1'A_1B_0'$, where $A_1$ is sent over the channel to Bob, with an output in system $B_1$, while $A_1'$ is kept at Alice's side for possible later use. After this, Bob may apply a quantum channel $\D^1$ on $B_1B_0'$ with an output on $X_1B_1'$, of which system $X_1$ contains the feedback information, that is sent back to Alice, while $B_1'$ is kept at Bob's side for possible later use. 
This procedure is repeated $n$ times, as depicted in Figure~\ref{FeedbackTest} (with $n=3$).
At each round, an encoding channel $\E^i_m:\,A_{i-1}'X_{i-1}\to A_i'A_i$ corresponding to the same fixed message $m$ is applied, but the $\E^i_m$ may be different channels for different $i$'s.
At the end of the protocol, the $\D^n$ channel is a POVM  on $B_{n}B_{n-1}'$ with outcomes in $\{1,\ldots,M_n\}$, specified by the POVM elements $\{D^m_{B_nB_{n-1}'}\}_{m=1}^{M_n}$. In the last round, $A_{n}'$ can be taken one-dimensional, since whatever information may be stored there does not influence the outcome of the final measurement on Bob's systems.
We assume for simplicity that the feedback channel is noiseless, although it is not necessary to do so; indeed, we are looking for an upper bound on the success probability, and noisy feedback can only decrease the success probability.

For every stage of the communication process, let $\rho^m$ with the appropriate labels denote the state obtained from 
$\rho^m_{A_1'A_1B_0'}$ by the action of all channels $\E^i_m,\N^i,\D^i$ up to that stage; e.g., 
$\rho^m_{A_1'B_1B_0'}=
\N_{A_1\to B_1}(\rho^m_{A_1'A_1B_0'})$, etc.
Similarly, let $\tau^m$ with the appropriate labels denote the state obtained from $\tau^m_{A_1'A_1B_0'}$ up to a certain stage of the process, where all uses 
of $\N$ are replaced by $\R_{\sigma}$ for some fixed state $\sigma$; see Figure~\ref{FeedbackTestReplacer}.
Moreover, we introduce an auxiliary 
system $R$ with an orthonormal basis $\{\ket{m}_{R}\}_{m=1}^{M_n}$, and define
\begin{align}
\rho_{RL}\deff\frac{1}{M_n}\sum_{m=1}^{M_n}\pr{m}_R\otimes \rho^m_{L},\ds\ds\ds\ds\ds\ds
\tau_{RL}\deff\frac{1}{M_n}\sum_{m=1}^{M_n}\pr{m}_R\otimes \tau^m_{L},\
\end{align} 
where $L$ is any set of indices that can occur at a certain stage of the communication process, and
\begin{align}
T_{RB_nB_{n-1}'}\deff\sum_{m=1}^{M_n}\pr{m}_R\otimes D^m_{B_nB_{n-1}'},
\end{align} 
such that $0\le T_{RB_nB_{n-1}'}\le I_{RB_nB_{n-1}'}$. For every $1\le i\le n$, we define
$\E^i:\,RA_{i-1}'X_{i-1}\to RA_{i}'A_{i}$ as
\begin{align}
\E^i\bz\sum_{m=1}^{M_n}\pr{m}_R\otimes \psi_{A_{i-1}'X_{i-1}}\jz\deff
\sum_{m=1}^{M_n}\pr{m}_R\otimes \E^i_m( \psi_{A_{i-1}'X_{i-1}}).
\end{align}

\begin{figure}[ptb]
\begin{center}
\includegraphics[width=6in]{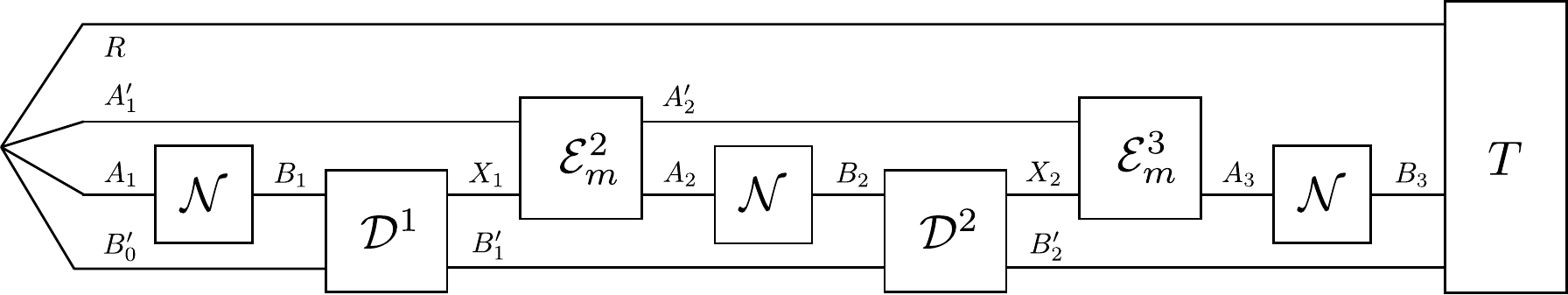}
\end{center}
\caption{A general quantum feedback-assisted communication protocol for the channel $\N$.}
\label{FeedbackTest}%
\end{figure}

\begin{figure}[ptb]
\begin{center}
\includegraphics[width=6in]{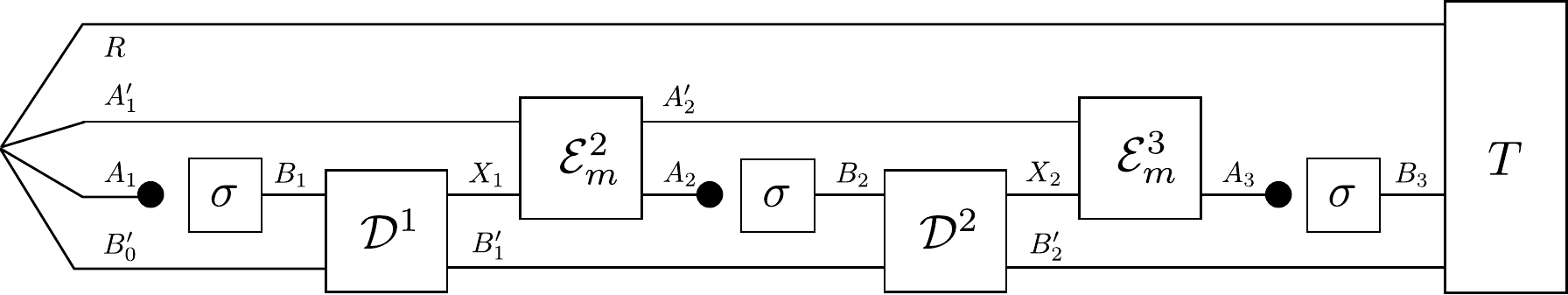}
\end{center}
\caption{A general quantum feedback-assisted communication protocol, with a replacer channel instead of the channel $\N$.}
\label{FeedbackTestReplacer}%
\end{figure}

If the outcome of the final measurement $\D^n$ is $m'$ then Bob concludes that the message $m'$ was sent. 
The success probability $\ps(\P_n)$ of the protocol is given by 
\begin{align}
\ps(\P_n)\equiv\frac{1}{M_n}\sum_{m=1}^{M_n}\Tr\left\{D^m_{B_nB_{n-1}'} \rho^m_{B_nB_{n-1}'} \right\}
=
\Tr\left\{T_{RB_nB_{n-1}'}\rho_{RB_nB_{n-1}'}\right\}. 
\end{align}

Note that for every round $k$, $\tau^m_{B_kB_{k-1}'}$ is independent of $m$, and we have
\begin{align}
\tau_{RB_kB_{k-1}'}=\tau_R\otimes\sigma_{B_k}\otimes\tau_{B_{k-1}'},\ds\ds\ds\text{where}\ds\ds\ds
\tau_R=\frac{1}{M_n}I_{R}.
\end{align}
This is because all information about 
the identity of the message is kept at Alice's side all through the protocol, as one can easily see in Figure \ref{FeedbackTestReplacer}. Hence,
\begin{align}
\Tr\left\{T_{RB_nB_{n-1}'}\tau_{RB_nB_{n-1}'}\right\}
=\frac{1}{M_n}\sum_{m=1}^{M_n}\Tr\left\{D^m_{B_nB_{n-1}'}(\sigma_{B_n}\otimes \tau_{B_{n-1}'})\right\}
=\frac{1}{M_n}.
\end{align}
Now we can apply Nagaoka's method \cite{N}, and use the
 monotonicity of $\wt D_{\alpha}$ to get
\begin{align}
\wt D_{\alpha}(\rho_{RB_nB_{n-1}'}\|\tau_{RB_nB_{n-1}'}) &
\ge \frac{1}{\alpha-1}\log \left[ \ps(\P_n)^{\alpha}\bz\frac{1}{M_n}\jz^{1-\alpha} \right]\\
& =\frac{\alpha}{\alpha-1}\log \ps(\P_n)+\log M_n.
\end{align}

We will use the same iterative method as in Section \ref{sec:Stein proof} to complete the proof of Theorem \ref{thm:sc-feedback}.
For every $k>1$,
\begin{align}
&\wt D_{\alpha}(\rho_{RA_k'B_kB_{k-1}'}\|\tau_{RA_k'B_kB_{k-1}'})\\
&\ds=
\wt D_{\alpha}(\N_{A_k\to B_k}(\rho_{RA_k'A_kB_{k-1}'})\|\tau_{RA_k'B_{k-1}'}\otimes\sigma_{B_k})\\
&\ds=
\frac{\alpha}{\alpha-1}\log\norm{\bz\Theta_{\sigma_{B_k}^{\frac{1-\alpha}{\alpha}}}\circ \N_{A_k\to B_k}\jz
\bz \tau_{RA_k'B_{k-1}'}^{\frac{1-\alpha}{2\alpha}}\rho_{RA_k'A_kB_{k-1}'}\tau_{RA_k'B_{k-1}'}^{\frac{1-\alpha}{2\alpha}}\jz}_{\alpha}\\
&\ds\le
\frac{\alpha}{\alpha-1}\log\sup_{X_{RA_k'A_kB_{k-1}'}\ge 0}
\frac{
\norm{\bz\Theta_{\sigma_{B_k}^{\frac{1-\alpha}{\alpha}}}\circ \N_{A_k\to B_k}\jz
X_{RA_k'A_kB_{k-1}'}}_{\alpha}
}
{\norm{X_{RA_k'B_{k-1}'}}_{\alpha}}
\norm{ \tau_{RA_k'B_{k-1}'}^{\frac{1-\alpha}{2\alpha}}\rho_{RA_k'B_{k-1}'}\tau_{RA_k'B_{k-1}'}^{\frac{1-\alpha}{2\alpha}}}_{\alpha}\\
&=
\frac{\alpha}{\alpha-1}\log\cbnorm{\Theta_{\sigma_{B_k}^{\frac{1-\alpha}{\alpha}}}\circ \N_{A_k\to B_k}}
+
\wt D_{\alpha}(\rho_{RA_k'B_{k-1}'}\|\tau_{RA_k'B_{k-1}'}).
\end{align}
Now, if $k=1$ then $\rho_{RA_k'B_{k-1}'}=\tau_{RA_k'B_{k-1}'}$ by definition, and the last term above is zero. Otherwise we can upper bound the last term above as
\begin{align}
\wt D_{\alpha}(\rho_{RA_k'B_{k-1}'}\|\tau_{RA_k'B_{k-1}'})
\le
\wt D_{\alpha}(\rho_{RA_{k-1}'B_{k-1}B_{k-2}'}\|\tau_{RA_{k-1}'B_{k-1}B_{k-2}'}),
\end{align}
where the inequality is due to the monotonicity of $\wt D_{\alpha}$ under $\Tr_{A_k} \circ \E^k \circ \D^{k-1}$.

Using the above steps iteratively, we finally get
\begin{align}
\frac{\alpha}{\alpha-1}\log \ps(\P_n)+\log M_n
&\le
\wt D_{\alpha}(\rho_{RB_nB_{n-1}'}\|\tau_{RB_nB_{n-1}'}) \\
& \le 
n\frac{\alpha}{\alpha-1}\log\cbnorm{\Theta_{\sigma^{\frac{1-\alpha}{2\alpha}}}\circ \N}\\
&=
n\sup_{\psi_{\hat AA}}\widetilde{D}_{\alpha}(  \mathcal{N}_{A\rightarrow
B}(\psi_{\hat AA})\Vert\psi_{\hat A}\otimes\sigma_{B}),  
\end{align}
where the last identity is due to \eqref{lemmanorm}, and the supremum is over all pure states on $\hat AA$, where
$\hat A$ is a copy of $A$. Since this is true for every $\sigma_B\in\S(\hil_B)$, we get
\begin{align}\label{EA sc bound0}
\frac{\alpha}{\alpha-1}\frac{1}{n}\log \ps(\P_n)+\frac{1}{n}\log M_n
&\le
\inf_{\sigma_B\in\S(\hil_B)}\sup_{\psi_{\hat AA}}\widetilde{D}_{\alpha}\left(  \mathcal{N}_{A\rightarrow
B}(\psi_{\hat AA})\Vert\psi_{\hat A}\otimes\sigma_{B}\right)=\wt I_{\alpha}(\N).
\end{align}
Hence, if $\P_n$ is a feedback-assisted coding scheme such that  
$\frac{1}{n}\log M_n\ge R$ then 
\begin{align}\label{EA sc bound}
\frac{1}{n}\log\ps(\P_n)\le -\sup_{\alpha>1}\frac{\alpha-1}{\alpha}\bz R-\wt I_{\alpha}(\N)\jz,
\end{align}
where we used that \eqref{EA sc bound0} holds for every $\alpha>1$. This proves \eqref{eq:QFA-strong-conv-exponent}
of Theorem \ref{thm:sc-feedback}.

By \eqref{EA sc bound}, the success probability goes to zero exponentially fast for any rate $R>R_{\min}\equiv\inf_{\alpha>1}\wt I_{\alpha}(\N)$. 
By the monotonicity of the R\'enyi relative entropies in $\alpha$, 
$\inf_{\alpha>1}\wt I_{\alpha}(\N)=\lim_{\alpha\searrow 1}\wt I_{\alpha}(\N)$, 
and the latter is equal to $I_1(\N)=I(\N)$, due to Lemma~\ref{MI limits}.
This proves the last assertion of Theorem~\ref{thm:sc-feedback}.

\section{Conclusion}

This paper establishes a quantum Stein's lemma and identifies the strong
converse exponent when discriminating an arbitrary channel from the replacer
channel. The conclusion is that a tensor-power, non-adaptive strategy is
optimal in this regime. This result has implications in the physical setting
of quantum illumination, as described in Section~\ref{sec:quantum-illum}. We
have also proven that a strong converse theorem holds in the setting of
quantum-feedback-assisted communication, strengthening a weak converse result
due to Bowen \cite{B04}. This strong converse theorem also strengthens the
main result of \cite{GW}, in which a bound on the strong converse exponent was
established for the entanglement-assisted communication setting. We show here
that this same bound holds in the more general quantum-feedback-assisted
communication setting. We also briefly discussed how to combine
our results in adaptive channel discrimination with those of Hayashi and Tomamichel
from \cite{HT} to obtain a quantum Stein's lemma in a more general setting than that
considered in either paper. It remains an open question to determine the strong converse
exponent for this more general setting.

There are several other open questions to consider going forward from here. First,
is the strong converse exponent bound in (\ref{eq:QFA-strong-conv-exponent})
optimal? That is, does there exist an entanglement-assisted communication
protocol that achieves this bound? Encouraging for us here is that a full solution
is known for the classical version of this problem \cite{A78,1056003,CK82}.
Next, can we say anything about the direct
domain for either the adaptive channel discrimination setting or the
quantum-feedback-assisted communication setting? Any results obtained in the
latter setting would be a counterpart to the error exponents found in
\cite{BH98,H00,PhysRevA.76.062301} for classical communication. Finally, would
the conclusions of this paper extend to the setting of symmetric hypothesis
testing? That is, would it be possible to show that non-adaptive strategies
suffice here?
\bigskip

\textit{Acknowledgements}---We thank Nilanjanna Datta, Manish K.~Gupta,
Bhaskar Roy Bardhan, and
Marco Tomamichel for insightful discussions on these topics. We thank David Ding for feedback on the manuscript. TC\ and
MMW\ acknowledge support from the Department of Physics and Astronomy at LSU,
from the NSF\ under Award No.~CCF-1350397, and from the DARPA Quiness Program
through US Army Research Office Award No.~W31P4Q-12-1-0019.
MM acknowledges support from the European Research Council
Advanced Grant \textquotedblleft IRQUAT\textquotedblright,
the Spanish MINECO (Project No. FIS2013-40627-P), the Generalitat de Catalunya CIRIT (Project No. 2014 SGR 966),
the Hungarian Research Grant OTKA-NKFI K104206,
and the Technische Universit\"at M\"unchen -- Institute for Advanced Study, funded by the German Excellence Initiative and the European Union Seventh Framework Programme under grant agreement no. 291763.

\appendix

\section{Channel divergences}
\label{app:CB}

\noindent\textbf{Proof of Lemma \ref{lemma:channel Renyi0}.}
We only prove the assertion for $\wt D_{\alpha}$, as the proof for $D_{\alpha}$ goes the same way.
%

For every system $R$, $\ket{\Gamma_{A'A}}$ defines an isomorphism between 
$\hil_{R}\otimes\hil_A$ and $\B(\hil_{A'},\hil_R)$, under which 
$X\in\B(\hil_{A'},\hil_R)$ corresponds to $(X\otimes I_A)\ket{\Gamma_{A'A}}=\sum_i (Xe_i)\otimes e_i$. 
Given a pure state $\psi_{RA}$, it can be written as  
$\psi_{RA}=\pr{\psi_{RA}}$, with $\ket{\psi_{RA}}=(X\otimes I_A)\ket{\Gamma_{A'A}}$, where 
$X\in\B(\hil_{A'},\hil_R)$, and $\Tr X^*X=1$.
Thus, for any channel $\mathcal{N}_{A\to B}$, 
\begin{equation}
\N_{A\to B}(\psi_{RA})=\N\left((X\otimes I_A)\Gamma_{A'A}(X^*\otimes I_A)\right)
=(X\otimes I_A)\N\left(\Gamma_{A'A}\right)(X^*\otimes I_A).
\end{equation}
Let $V:\,\hil_{A'}\to\hil_R$ be a partial isometry such that $X=V|X|$. Then it is easy to see that
\begin{align}
& \wt  D_{\alpha}\bz \N_1(\psi_{RA})\|\N_2(\psi_{RA})\jz \\
&=
\wt D_{\alpha}\bz (X\otimes I_A)\bz\N_1(\Gamma_{A'A})\jz(X^*\otimes I_A)\|(X\otimes I_A)\bz\N_2(\Gamma_{A'A})\jz(X^*\otimes I_A)
\jz\\
&=
\wt D_{\alpha}\bz (V|X|\otimes I_A)\bz\N_1(\Gamma_{A'A})\jz(|X|V^*\otimes I_A)\|(V|X|\otimes I_A)\bz\N_2(\Gamma_{A'A})
\jz(|X|V^*\otimes I_A)\jz\\
&=
\wt D_{\alpha}\bz (|X|\otimes I_A)\bz\N_1(\Gamma_{A'A})\jz(|X|\otimes I_A)\|(|X|\otimes I_A)\bz\N_2(\Gamma_{A'A})\jz(|X|\otimes I_A)\jz\\
&=
\wt D_{\alpha}\bz \rho_{A'}^{1/2}\N_1(\Gamma_{A'A})\rho_{A'}^{1/2}\|\rho_{A'}^{1/2}\N_2(\Gamma_{A'A})\rho_{A'}^{1/2}\jz\\
&=
\widetilde D_{\alpha}\bz\N_1\bz\rho_{A'}^{1/2}\pr{\Gamma_{A'A}}\rho_{A'}^{1/2}\jz
\Big\|\N_2\bz\rho_{A'}^{1/2}\pr{\Gamma_{A'A}}\rho_{A'}^{1/2}\jz\jz,
\end{align}
where $\rho_{A'}:=|X|^2=X^*X$. 
The equality of the last two expressions follow from the fact that the channels only act on the $A$ system.
This completes the proof.
\bigskip

\noindent\textbf{Proof of Lemma \ref{lemma:CB}.} 
Let $\Theta$ denote the conjugation by $\sigma_B^{\frac{1-\alpha}{2\alpha}}$. 
With the notations in the proof of Lemma \ref{lemma:channel Renyi0}, every pure state 
$\psi_{A'A}$ can be written as $\psi_{A'A}=\pr{\psi_{A'A}}$,
$\ket{\psi_{A'A}}=(X\otimes I)\ket{\Gamma_{A'A}}$.
Then $\psi_{A'}=XX^*$, and hence
\begin{align}
&\wt D_{\alpha}\bz\N(\psi_{A'A})\|\R_{\sigma_B}(\psi_{A'A})\jz
=\wt D_{\alpha}\bz\N_{A\to B}(\psi_{A'A})\|\psi_{A'}\otimes \sigma_B\jz\\
&\ds=
\frac{1}{\alpha-1}\log\Tr
\left\{\left[(\psi_{A'}\otimes\sigma_B)^{\frac{1-\alpha}{2\alpha}}\bz\N_{A\to B}(\psi_{A'A})\jz(\psi_{A'}\otimes\sigma_B)^{\frac{1-\alpha}{2\alpha}}\right]^{\alpha}\right\}\\
&\ds=
\frac{1}{\alpha-1}\log\Tr\left\{\left[\bz (XX^*)^{\frac{1-\alpha}{2\alpha}}
\otimes\sigma_B^{\frac{1-\alpha}{2\alpha}}\jz\bz X\otimes I\jz \bz\N(\Gamma_{A'A})\jz\bz X^*\otimes I\jz
\bz(XX^*)^{\frac{1-\alpha}{2\alpha}}\otimes\sigma_B^{\frac{1-\alpha}{2\alpha}}\jz\right]^{\alpha}\right\}.
\end{align}
Let $X=V|X|$ for some unitary $V$; then $XX^*=V|X|^2V^*$, and
$(XX^*)^{\frac{1-\alpha}{2\alpha}}X=V|X|^{\frac{1-\alpha}{\alpha}}V^*V|X|=V|X|^{\frac{1}{\alpha}}$. Thus
\begin{align}
&\wt D_{\alpha}\bz\N_{A\to B}(\psi_{A'A})\|\psi_{A'}\otimes \sigma_B\jz\label{CB1} \nonumber\\
&\ds=
\frac{1}{\alpha-1}\log\Tr\left\{\left[
\bz V\otimes I_B\jz\bz |X|^{\frac{1}{\alpha}}\otimes\sigma_B^{\frac{1-\alpha}{2\alpha}}\jz\bz \N(\Gamma_{A'A})\jz
\bz |X|^{\frac{1}{\alpha}}\otimes\sigma_B^{\frac{1-\alpha}{2\alpha}}\jz\bz V^*\otimes I_B\jz\right]^{\alpha}\right\}\\
&\ds=
\frac{1}{\alpha-1}\log\Tr\left\{\left[
\bz |X|^{\frac{1}{\alpha}}\otimes\sigma_B^{\frac{1-\alpha}{2\alpha}}\jz\bz \N(\Gamma_{A'A})\jz
\bz |X|^{\frac{1}{\alpha}}\otimes\sigma_B^{\frac{1-\alpha}{2\alpha}}\jz\right]^{\alpha}\right\}\\
&\ds=
\frac{1}{\alpha-1}\log\Tr\left\{\left[
\bz |X|^{\frac{1}{\alpha}}\otimes I_B\jz(\Theta\circ\N)(\Gamma_{A'A})
\bz |X|^{\frac{1}{\alpha}}\otimes I_B\jz\right]^{\alpha}\right\}\\
&\ds=\frac{\alpha}{\alpha-1}\log\norm{
\bz Y^{\frac{1}{2\alpha}}\otimes I_B\jz(\Theta\circ\N)(\Gamma_{A'A})
\bz Y^{\frac{1}{2\alpha}}\otimes I_B\jz}_{\alpha},
\label{CB2}
\end{align}
where we used the notation $Y\equiv X^*X\in\B(\hil_{A'})$.
Hence, optimizing \eqref{CB1} over
all bipartite pure states $\psi_{A'A}$ is equivalent to 
optimizing \eqref{CB2} over all $Y\in\B(\hil_{A'})_+$ such that $\Tr \{Y\}=1$, i.e., all states $Y$ on $\hil_{A'}$.
The latter yields
\begin{align}
&\sup_{Y\in\S(\hil_{A'})}\norm{
\bz Y^{\frac{1}{2\alpha}}\otimes I_B\jz(\Theta\circ\N)(\Gamma_{A'A})
\bz Y^{\frac{1}{2\alpha}}\otimes I_B\jz}_{\alpha}\nonumber\\
&\ds=
\sup_{Y\in\B(\hil_{A'})_+\setminus\{0\}}\frac{1}{(\Tr \{Y\})^{\frac{1}{\alpha}}}\norm{
\bz Y^{\frac{1}{2\alpha}}\otimes I_B\jz(\Theta\circ\N)(\Gamma_{A'A})
\bz Y^{\frac{1}{2\alpha}}\otimes I_B\jz}_{\alpha}\\
&\ds=
\sup_{U}\sup_{Y\in\B(\hil_{A'})_+\setminus\{0\}}\frac{1}{(\Tr\{Y\})^{\frac{1}{\alpha}}}\norm{
\bz UY^{\frac{1}{2\alpha}}\otimes I_B\jz(\Theta\circ\N)(\Gamma_{A'A})
\bz Y^{\frac{1}{2\alpha}}U^*\otimes I_B\jz}_{\alpha}\label{CB3}\\
&\ds=
\sup_{Z\in\B(\hil_{A'})\setminus\{0\}}\frac{1}{(\Tr \{(Z^*Z)^{\alpha}\})^{\frac{1}{\alpha}}}\norm{
\bz Z\otimes I_B\jz(\Theta\circ\N)(\Gamma_{A'A})
\bz Z^*\otimes I_B\jz}_{\alpha}\label{CB4}\\
&\ds=
\sup_{\ket{z}\in\hil_{A'}\otimes\hil_{A}\setminus\{0\}}\frac{\norm{
(\Theta\circ\N)\pr{z}}_{\alpha}}{\norm{\Tr_{A'}\{\pr{z}\}}_{\alpha}}
=\norm{\Theta\circ\N}_{\mathrm{CB},1\to\alpha},\label{CB5}
\end{align}
where the first supremum in \eqref{CB3} is taken over all unitaries $U$ on $\hil_{A'}$,
and
the second equality in \eqref{CB5} is due to \eqref{cb1toalpha}.
This finishes the proof.
\ds\qed
\bigskip

To prove Lemma \ref{MI limits}, we will need the following minimax theorem from \cite[Corollary A2]{MH}:
\begin{lemma}\label{lemma:minimax}
Let $X$ be a compact topological space, $Y$
be a subset of the real line, and let $f:\,X\times Y\to\bR\cup\{-\infty,+\infty\}$ be a function.
Assume that
\begin{enumerate}
\item
$f(.,y)$ is lower semicontinuous for every $y\in Y$, and
\item
$ f(x,.)$ is monotonic increasing for every $x\in X$, or
$ f(x,.)$ is monotonic decreasing for every $x\in X$.
\end{enumerate}
Then 
\begin{align}
\inf_{x\in X}\sup_{y\in Y}f(x,y)=
\sup_{y\in Y}\inf_{x\in X}f(x,y).
\end{align}
\end{lemma}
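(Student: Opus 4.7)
\textbf{Proof proposal for Lemma \ref{lemma:minimax}.} The direction
\[
\sup_{y\in Y}\inf_{x\in X}f(x,y)\le \inf_{x\in X}\sup_{y\in Y}f(x,y)
\]
is immediate from the definition and needs no hypothesis. My plan is to prove the reverse inequality by combining lower semicontinuity with a nested open cover argument powered by monotonicity and compactness, which is the standard route for minimax theorems of this one-sided (``Sion-type'') flavor.

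Set $b\equiv\inf_{x\in X}\sup_{y\in Y}f(x,y)$; I want to show $\sup_{y\in Y}\inf_{x\in X}f(x,y)\ge b$. Fix $c<b$ arbitrary and, for each $y\in Y$, define
\[
U_y\equiv\{x\in X:\,f(x,y)>c\}.
\]
By hypothesis (i), $f(\cdot,y)$ is lower semicontinuous, so each $U_y$ is open in $X$. By the definition of $b$, for every $x\in X$ one has $\sup_{y\in Y}f(x,y)\ge b>c$, so there exists $y\in Y$ with $x\in U_y$; thus $\{U_y\}_{y\in Y}$ is an open cover of $X$. The crucial point is that, because $Y$ is totally ordered and $f(x,\cdot)$ is monotone for each $x$, this cover is nested: if $f(x,\cdot)$ is monotone increasing, then $y_1\le y_2$ in $Y$ implies $U_{y_1}\subseteq U_{y_2}$, while if $f(x,\cdot)$ is monotone decreasing, then $y_1\le y_2$ implies $U_{y_1}\supseteq U_{y_2}$.

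Now invoke compactness of $X$ (hypothesis on $X$) to extract a finite subcover $U_{y_1}\cup\cdots\cup U_{y_n}=X$. By nestedness, taking $y^{*}\equiv\max\{y_1,\ldots,y_n\}$ in the increasing case and $y^{*}\equiv\min\{y_1,\ldots,y_n\}$ in the decreasing case, one has $U_{y^{*}}=X$. Hence $f(x,y^{*})>c$ for every $x\in X$, so $\inf_{x\in X}f(x,y^{*})\ge c$, and therefore $\sup_{y\in Y}\inf_{x\in X}f(x,y)\ge c$. Letting $c\nearrow b$ yields $\sup_{y\in Y}\inf_{x\in X}f(x,y)\ge b$, completing the proof. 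The case $b=+\infty$ is handled identically by letting $c\to+\infty$; the case $b=-\infty$ is trivial since then both sides equal $-\infty$.

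The argument has no real obstacle: the only subtle point is noticing that having $Y\subseteq\mathbb{R}$ linearly ordered together with monotonicity of $f(x,\cdot)$ forces the family $\{U_y\}$ to be totally ordered by inclusion, which converts a finite subcover into a single set covering $X$. This is exactly what replaces the usual convexity-in-$y$ hypothesis in classical Sion-type minimax theorems, and it is why neither convexity of $Y$ nor concavity of $f(x,\cdot)$ is needed here.
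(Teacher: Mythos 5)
Your proof is correct. Note that the paper itself does not prove this lemma at all: it simply imports it as Corollary~A.2 of \cite{MH}, so there is no in-paper argument to compare against. Your nested-open-cover argument is the standard elementary route for such monotone minimax statements (it is the finite-intersection-property argument in complementary form: the closed sets $\{x : f(x,y)\le c\}$ are nested and have empty intersection, so one of them is already empty), and it uses exactly the stated hypotheses --- lower semicontinuity of $f(\cdot,y)$ to make each $U_y=\{x: f(x,y)>c\}$ open, monotonicity in $y$ (uniformly in $x$, as the hypothesis requires) to make the family $\{U_y\}$ totally ordered by inclusion, and compactness of $X$ to reduce a finite subcover to a single $U_{y^*}$ with $y^*\in Y$. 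The trivial inequality, the limit $c\nearrow b$, and the extended-real edge cases $b=\pm\infty$ are all handled correctly, so the argument is a complete, self-contained proof of the cited result.
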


It is easy to see that for any fixed $\rho,\sigma$, $\ep\mapsto D_{\alpha}(\rho\|\sigma+\ep I)$
and $\ep\mapsto \wt D_{\alpha}(\rho\|\sigma+\ep I)$ are monotone decreasing on $(0,+\infty)$, and 
\begin{align}
D_{\alpha}(\rho\|\sigma)=\sup_{\ep>0} D_{\alpha}(\rho\|\sigma+\ep I),\ds\ds\ds
\wt D_{\alpha}(\rho\|\sigma)=\sup_{\ep>0} \wt D_{\alpha}(\rho\|\sigma+\ep I);
\end{align}
for these latter relations see, e.g. \cite{MH} and \cite{MDSFT13}. Since for any fixed $\ep>0$,
$(\rho,\sigma)\mapsto D_{\alpha}(\rho\|\sigma+\ep I)$ and 
$(\rho,\sigma)\mapsto \wt D_{\alpha}(\rho\|\sigma+\ep I)$ are continuous, we get that 
\begin{align}\label{lsc}
(\rho,\sigma)\mapsto D_{\alpha}(\rho\|\sigma) \ds\ds\text{and}\ds\ds 
(\rho,\sigma)\mapsto \wt D_{\alpha}(\rho\|\sigma)\ds\ds\text{are lower semicontinuous}.
\end{align}

Now we are ready to prove Lemma \ref{MI limits}.
\medskip

\noindent\textbf{Proof of Lemma \ref{MI limits}.}\ds 
The assertions about monotonicity are obvious from the definitions and \eqref{alpha mon}.

\ref{channel div limit}
We only prove the assertion for $\wt D_{\alpha}$, as the proof for $D_{\alpha}$ goes exactly the same way. 
Let $\N_1,\N_2:\,\B(\hil_A)\to\B(\hil_B)$ be channels.
By the monotonicity \eqref{alpha mon}, we have
\begin{align}
\lim_{\alpha\nearrow 1}\wt D_{\alpha}(\N_1\|\N_2)
&=
\sup_{\alpha\in (0,1)}\wt D_{\alpha}(\N_1\|\N_2)\\
&=
\sup_{\alpha\in (0,1)}\sup_{\psi_{RA}}\wt D_{\alpha}(\N_1(\psi_{RA})\|\N_2(\psi_{RA}))\\
&=
\sup_{\psi_{RA}}\sup_{\alpha\in (0,1)}\wt D_{\alpha}(\N_1(\psi_{RA})\|\N_2(\psi_{RA}))\\
& =
\sup_{\psi_{RA}}D(\N_1(\psi_{RA})\|\N_2(\psi_{RA})) \\
&=
D(\N_1\|\N_2).
\end{align}

Note that for any $\alpha>1$, $\wt D_{\alpha}(\N_1\|\N_2)=+\infty\iff  D(\N_1\|\N_2)=+\infty$, and hence for the rest we assume that all these quantities are finite, since otherwise 
$\lim_{\alpha\searrow 1}\wt D_{\alpha}(\N_1\|\N_2)=D(\N_1\|\N_2)$ is trivial.
Let $\S$ denote the set of pure states on $\hil_{A'A}$, where $A'$ is a copy of $A$; then $\S$ is a compact set, and $\psi\mapsto \wt D_{\alpha}(\N_1(\psi)\|\N_2(\psi))$ is lower 
semicontinuous on $\S$ by \eqref{lsc}, while for a fixed $\psi\in\S$, $\alpha\mapsto \wt D_{\alpha}(\N_1(\psi)\|\N_2(\psi))$ is monotone increasing \eqref{alpha mon}.
Using now Lemma \ref{lemma:minimax}, we get
\begin{align}
\lim_{\alpha\searrow 1}\wt D_{\alpha}(\N_1\|\N_2)
&=
\inf_{\alpha>1}\wt D_{\alpha}(\N_1\|\N_2) \\
& =
\inf_{\alpha>1}\sup_{\psi\in\S}\wt D_{\alpha}(\N_1(\psi_{RA})\|\N_2(\psi_{RA}))\\
&=
\sup_{\psi\in\S}\inf_{\alpha>1}\wt D_{\alpha}(\N_1(\psi_{RA})\|\N_2(\psi_{RA})) \\
& =
\sup_{\psi\in\S}D(\N_1(\psi_{RA})\|\N_2(\psi_{RA})) \\
& =
D(\N_1\|\N_2).
\end{align} 
\medskip

\ref{mi limit}\ds We only prove the assertion for $I_{\alpha}(R;B)_{\rho}$, as the proof for $\wt I_{\alpha}(R;B)_{\rho}$ goes exactly the same way. 
First, we have
\begin{align}
\lim_{\alpha\searrow 1}I_{\alpha}(R;B)_{\rho}=
\inf_{\alpha>1}I_{\alpha}(R;B)_{\rho}&=
\inf_{\alpha>1}\inf_{\sigma_B}D_{\alpha}(\rho_{RB}\|\rho_R\otimes\sigma_B)\label{mi limit upper1}\\
&=
\inf_{\sigma_B}\inf_{\alpha>1}D_{\alpha}(\rho_{RB}\|\rho_R\otimes\sigma_B)\\
&=
\inf_{\sigma_B}D(\rho_{RB}\|\rho_R\otimes\sigma_B)=
I(R;B)_{\rho}.\label{mi limit upper3}
\end{align}
Next, note that by \eqref{lsc}, $D(\rho_{RB}\|\rho_R\otimes\sigma_B)$ is lower semicontinuous in $\sigma_B$ on the compact set $\S(\hil_B)$, and it is monotone increasing in $\alpha$. Hence, by Lemma \ref{lemma:minimax},
\begin{align}
\lim_{\alpha\nearrow 1}I_{\alpha}(R;B)_{\rho}=
\sup_{\alpha\in(0,1)}I_{\alpha}(R;B)_{\rho}&=
\sup_{\alpha\in(0,1)}\inf_{\sigma_B}D_{\alpha}(\rho_{RB}\|\rho_R\otimes\sigma_B)\\
&=
\inf_{\sigma_B}\sup_{\alpha\in(0,1)}D_{\alpha}(\rho_{RB}\|\rho_R\otimes\sigma_B)\\
&=
\inf_{\sigma_B}D(\rho_{RB}\|\rho_R\otimes\sigma_B)=
I(R;B)_{\rho}.
\end{align}

\ref{channel info limit}\ds
We only prove the assertion for $\wt I_{\alpha}(\N)$, as the proof for $I_{\alpha}(\N)$ goes exactly the same way. First,
\begin{align}
\lim_{\alpha\nearrow 1}\wt I_{\alpha}(\N)&=
\sup_{\alpha\in(0,1)}\wt I_{\alpha}(\N)=
\sup_{\alpha\in(0,1)}\sup_{\psi_{RA}}\wt I_{\alpha}(R;B)_{\N(\psi)}=
\sup_{\psi_{RA}}\sup_{\alpha\in(0,1)}\wt I_{\alpha}(R;B)_{\N(\psi)}\\
&=
\sup_{\psi_{RA}}\wt I(R;B)_{\N(\psi)}
=
I(\N).
\end{align}
Next, let~$\hat A$ be a fixed copy of~$A$.
Note that 
$\psi_{\hat AA}\mapsto 
\inf_{\sigma_B\in S(\hil)_{++}}\widetilde{D}_{\alpha}\left(  \mathcal{N}_{A\rightarrow B}(\psi_{\hat AA})\Vert\psi_{\hat A}\otimes\sigma_{B}\right)
=\wt I_{\alpha}(\hat A;B)_{\N(\psi)}$ is the infimum of continuous functions, and hence it is 
upper semi-continuous on the compact set of pure states on~$\hil_{\hat AA}$. On the other hand, it is monotone in $\alpha$ by \eqref{alpha mon}, and hence we can use 
Lemma \ref{lemma:minimax} and \eqref{mi limit upper1}--\eqref{mi limit upper3} to obtain
\begin{align}
\lim_{\alpha\searrow 1}\wt I_{\alpha}(\N)=
\inf_{\alpha>1}\wt I_{\alpha}(\N)
&=
\inf_{\alpha>1}\sup_{\psi_{\hat AA}}\wt I_{\alpha}(R;B)_{\N(\psi)}
=
\sup_{\psi_{\hat AA}}\inf_{\alpha>1}\wt I_{\alpha}(R;B)_{\N(\psi)}
=
\sup_{\psi_{\hat AA}} I(R;B)_{\N(\psi)}
=
I(\N).
\end{align} 
\noindent\qed
\bigskip

\noindent\textbf{Proof of Lemma \ref{lemma:geom int}.} 
Let $\hat A$ be a copy of $A$. 
By Lemma \ref{lemma:channel div repr}, we have 
\begin{align}
\wt I_{\alpha}(\N)&=
\sup_{\rho_{\hat A}\in\S(\hil_{\hat A})}\inf_{\sigma_B\in S(\hil_B)_{++}}
\wt D_{\alpha}\bz \rho_{\hat A}^{1/2}\N_{A\to B}(\Gamma_{\hat AA})\rho_{\hat A}^{1/2}\middle\|\rho_{\hat A}\otimes \sigma_B\jz.\label{geom1}
\end{align}
Let $\Gamma^{\N}\equiv \N_{A\to B}(\Gamma_{\hat AA})$. According to \cite[Lemma 3]{GW}, the R\'enyi divergence in \eqref{geom1}
can be written as 
\begin{equation}
\frac{1}{\alpha-1}\log s(\alpha)\wt Q_{\alpha}(\rho_{\hat A},\sigma_B),
\end{equation}
where
\begin{equation}
\wt Q_{\alpha}(\rho_{\hat A},\sigma_B) =
s(\alpha)\Tr\left\{\bz [\Gamma^{\N}]^{1/2}\bz \rho_{\hat A}^{\frac{1}{\alpha}}\otimes \sigma_B^{\frac{1-\alpha}{\alpha}}\jz[\Gamma^{\N}]^{1/2}\jz^{\alpha}\right\},
\end{equation}
and $s(\alpha):=-1$ for $\alpha\in(0,1)$, and $s(\alpha):=1$ for $\alpha>1$. For $\alpha\in[1/2,1)$, 
$x\mapsto x^{\frac{1-\alpha}{\alpha}}$ is operator concave on $\bR_{+}$, and $X\mapsto\Tr\{X^{\alpha}\}$ is monotone increasing 
and concave on positive semidefinite operators. Thus $\wt Q_{\alpha}(\rho_{\hat A},\sigma_B)$ is convex in $\sigma_B$.
Note that $\rho_{\hat A}\mapsto\rho_{\hat A}\otimes I_B$ is affine, and applying 
Theorem 1.1 in \cite{CL}, with
$p:=\frac{1}{\alpha},q=1$, $B=I_{\hat A}\otimes \sigma_B^{\frac{1-\alpha}{2\alpha}}[\Gamma^{\N}]^{1/2}$, to the quantity
(1.3) in \cite{CL}, we get that $\wt Q_{\alpha}(\rho_{\hat A},\sigma_B)$ is concave in $\rho_{\hat A}$. Similarly, for $\alpha>1$, $x\mapsto x^{\frac{1-\alpha}{\alpha}}$ is operator convex on $\bR_{++}$, and $X\mapsto\Tr\{X^{\alpha}\}$ is monotone increasing 
and convex on positive semidefinite operators. Thus $\wt Q_{\alpha}(\rho_{\hat A},\sigma_B)$ is convex in $\sigma_B$, and again by Theorem 1.1 in \cite{CL}, it is concave in $\rho_{\hat A}$. Hence, we can use the Kneser-Fan minimax theorem \cite{Fan,Kneser} to obtain
\begin{align}
\wt I_{\alpha}(\N)
&=
\sup_{\rho_{\hat A}\in\S(\hil_{\hat A})}\inf_{\sigma_B\in S(\hil_B)_{++}}
\frac{1}{\alpha-1}\log s(\alpha)\wt Q_{\alpha}(\rho_{\hat A},\sigma_B)\\
&=
\frac{1}{\alpha-1}\log s(\alpha)\sup_{\rho_{\hat A}\in\S(\hil_{\hat A})}\inf_{\sigma_B\in S(\hil_B)_{++}}
\wt Q_{\alpha}(\rho_{\hat A},\sigma_B)\\
&=
\frac{1}{\alpha-1}\log s(\alpha)\inf_{\sigma_B\in S(\hil_B)_{++}}\sup_{\rho_{\hat A}\in\S(\hil_{\hat A})}
\wt Q_{\alpha}(\rho_{\hat A},\sigma_B)\\
&=
\inf_{\sigma_B\in S(\hil_B)_{++}}\sup_{\rho_{\hat A}\in\S(\hil_{\hat A})}
\frac{1}{\alpha-1}\log s(\alpha)\wt Q_{\alpha}(\rho_{\hat A},\sigma_B)\\
&=
\inf_{\sigma_B\in S(\hil_B)}\sup_{\rho_{\hat A}\in\S(\hil_{\hat A})}
\frac{1}{\alpha-1}\log s(\alpha)\wt Q_{\alpha}(\rho_{\hat A},\sigma_B)\\
&=
\inf_{\sigma_B\in S(\hil_B)}\wt D_{\alpha}(\N\|\R_{\sigma_B})
\end{align}
for every $\alpha\in[1/2,+\infty)\setminus\{1\}$. The case $\alpha=1$ follows by
\begin{align}
I(\N)=I_1(\N)=\inf_{\alpha>1}\wt I_{\alpha}(\N)
=
\inf_{\alpha>1}\inf_{\sigma_B\in S(\hil_B)}\wt D_{\alpha}(\N\|\R_{\sigma_B})
&=
\inf_{\sigma_B\in S(\hil_B)}\inf_{\alpha>1}\wt D_{\alpha}(\N\|\R_{\sigma_B})\\
&=
\inf_{\sigma_B\in S(\hil_B)}\wt D(\N\|\R_{\sigma_B}),
\end{align}
where the second and the last identities are due to Lemma \ref{MI limits}.
\hfill\qed

\end{document}